\documentclass[a4paper,10pt]{scrartcl}
\usepackage[utf8x]{inputenc}

\title{Representability Conditions\\ by Grassmann Integration}
\author{Volker Bach\footnote{Email: v.bach@tu-bs.de},\\
{\small{\em{Technische Universit\"at Braunschweig, Institut f\"ur Analysis und Algebra,}}}\\
{\small{\em{Rebenring 31, 38106 Braunschweig, Germany}}}\\
Hans Konrad Kn\"orr\footnote{Email: hanskonrad.knoerr@fernuni-hagen.de}\\
{\small{\em{FernUniversit\"at in Hagen, Fakult\"at f\"ur Mathematik und Informatik,}}}\\
{\small{\em{Lehrgebiet Angewandte Stochastik, 58084 Hagen, Germany}}}\\
Edmund Menge\footnote{Email: e.menge@tu-bs.de}\\
{\small{\em{Technische Universit\"at Braunschweig, Institut f\"ur Analysis und Algebra,}}}\\
{\small{\em{Rebenring 31, 38106 Braunschweig, Germany}}}}

\usepackage[english]{babel}
\selectlanguage{english}
\usepackage[english, num]{isodate}
\usepackage{amsfonts,amsmath,amssymb,bbm}
\usepackage{amsthm}
\usepackage{nicefrac}
\usepackage{pifont}
\usepackage{yfonts}
\usepackage{geometry} 
\geometry{a4paper}
\usepackage{graphicx}
\usepackage{booktabs}  
\usepackage{array} 
\usepackage{paralist} 
\usepackage{verbatim} 
\usepackage{subfig} 
\usepackage[nottoc,notlof,notlot]{tocbibind} 
\usepackage[titles,subfigure]{tocloft}  
\usepackage{caption}
\usepackage{mathtools}


%

%

%


%

\newcommand{\cB}{\mathcal{B}}

\newcommand{\cD}{\mathcal{D}}
\newcommand{\cE}{\mathcal{E}}

\newcommand{\cG}{\mathcal{G}}
\newcommand{\cH}{\mathcal{H}}

\newcommand{\cL}{\mathcal{L}}         

\newcommand{\cR}{\mathcal{R}}
\newcommand{\cS}{\mathcal{S}}

%


%


%


%


%


%


%


%

%


\def\e{{c}^{*}}
\def\v{{c}}

\newcommand{\TRh}[1]{\mathrm{tr}_1\left\{#1\right\}}
\newcommand{\TRhh}[1]{\mathrm{tr}_2\left\{#1\right\}}

\newcommand{\opsi}[1]{\overline{\psi}}
\newcommand{\oPSI}[1]{\overline{\Psi}}
\newcommand{\ophi}[1]{\overline{\phi}}
\newcommand{\oPHI}[1]{\overline{\Phi}}

\begin{document}

\theoremstyle{plain}
\newtheorem{thm}{Theorem}[section]
\newtheorem{lem}[thm]{Lemma}
\newtheorem{cor}[thm]{Corollary}
\theoremstyle{definition}
\newtheorem{defn}[thm]{Definition}
\theoremstyle{definition}
\newtheorem{rem}[thm]{Remark}
\bibliographystyle{plain}

\maketitle

\begin{abstract}
Representability conditions on the one- and two-particle density matrix for fermion systems are formulated by means of Grassmann integrals. A positivity condition for a certain kind of Grassmann integral is established which, in turn, induces the well-known
G-, P- and Q-Conditions of quantum chemistry by an appropriate choice of the integrand. Similarly, the $\mathrm{T}_1$- and $\mathrm{T}_2$-Conditions are derived.
Furthermore, quasifree Grassmann states are introduced and, for
every operator $\widetilde{\gamma}\in\cH\oplus\cH$ with $0\leq \widetilde{\gamma} \leq \mathbbm{1}$, the existence of a
unique quasifree Grassmann state whose one-particle density
matrix is $\widetilde{\gamma}$ is shown.
\end{abstract}


\section{Introduction}
The grand canonical energy (minus pressure) $E_0\left(\mu\right):=\inf\left\{\sigma\left\{\widehat{\mathbbm{H}}-\mu\widehat{\mathbbm{N}}\right\}\right\}$ at sufficiently large chemical potential $\mu\geq 0$ of a quantum 
system with a Hamiltonian $\widehat{\mathbbm{H}}$ and particle number operator $\widehat{\mathbbm{N}}$ is given by the Rayleigh--Ritz principle as
\begin{align}
 E_0\left(\mu\right)=\inf\left\{\mathrm{Tr}\left\{\rho^\frac{1}{2}\big(\widehat{\mathbbm{H}}-\mu\widehat{\mathbbm{N}}\big)\rho^\frac{1}{2}\right\}
     \Big|\ \rho\in DM\right\}\label{var},
\end{align}
where $\widehat{\mathbbm{H}}=\widehat{\mathbbm{H}}^*$ is a self-adjoint operator obeying stability of matter, i.e., bounded below by $-c\widehat{\mathbbm{N}}$ for
some $c<\infty$, and being at most quartic in the creation and annihilation operators \cite{ELT, WTH}. This is typically the case for models of non-relativistic
matter in physics and chemistry. The Pauli principle plays a crucial role for stability of matter to hold true, and we
thus restrict our attention to fermion systems. On the fermion Fock space $\wedge\cH$, the variation on the r.h.s. of~(\ref{var}) is over the set
\begin{align*}
 DM:=\left\{\rho\,\Big|\ \rho\in\cL_+^1\left(\wedge\cH\right),\ \mathrm{Tr}\left\{\rho\right\}=1,\ \big<\widehat{\mathbbm{N}}^2\big>_\rho<\infty\right\},
\end{align*}
i.e., density matrices with finite particle number variance. Here, the expectation value of an observable $\widehat{\mathbbm{A}}$ is 
\begin{align*}
 \big<\widehat{\mathbbm{A}}\big>_\rho:=\mathrm{Tr}\left\{\rho^\frac{1}{2}\widehat{\mathbbm{A}}\rho^\frac{1}{2}\right\}.
\end{align*}
More specifically, if
\begin{align*}
 \widehat{\mathbbm{H}}-\mu\widehat{\mathbbm{N}}=\sum\limits_{k,m}h_{km}\e(f_k)\v(f_m)+
	      \sum\limits_{k,l,m,n}V_{klmn}\e(f_l)\e(f_k)\v(f_m)\v(f_n),
\end{align*}
then 
\begin{align}
 E_0\left(\mu\right)=\inf\left\{\cE\left(\gamma_\rho,\Gamma_\rho\right)|\ \rho\in DM\right\}, \label{pdmgs}
\end{align}
where
\begin{align*}
 \cE\left(\gamma_\rho,\Gamma_\rho\right)=\sum\limits_{k,m}h_{km}\left<f_m, \gamma_\rho f_k\right>
      +\sum\limits_{k,l,m,n}V_{klmn}\left<f_m\otimes f_n, \Gamma_\rho\left(f_k\otimes f_l\right)\right>
\end{align*}
and the one- and two-particle density matrices corresponding to $\rho$ are defined by
\begin{align*}
 \left<f,\gamma_\rho g\right>&:=\left<\e(g)\v(f)\right>_\rho\quad\text{and}\\
 \left<f\otimes g,\Gamma_\rho\big(\tilde{f}\otimes \tilde{g}\big)\right>&:=\left<\e(\tilde{g})\e\big(\tilde{f}\big)
	\v(f)\v(g)\right>_\rho,
\end{align*}
respectively, for all $f,g,\tilde{f},\tilde{g}\in\cH$.
Note that (\ref{pdmgs}) can be rewritten as
\begin{align}
 E_0\left(\mu\right)=\inf\left\{\cE\left(\gamma,\Gamma\right)|\ \left(\gamma,\Gamma\right)\in\cR\right\}, \label{pdmgs2}
\end{align}
where
\begin{align*}
 \cR:=\left\{\left(\gamma,\Gamma\right)\in\cL^1\left(\cH\right)\times\cL^1\left(\cH\otimes\cH\right)\Big|\ 
	\exists \rho\in DM:\ \left(\gamma,\Gamma\right)=\left(\gamma_\rho,\Gamma_\rho\right)\right\}
\end{align*}
denotes the set of all representable one- and two-particle density matrices. 
Equation (\ref{pdmgs2}) suggests that the search for a minimizing $\rho$ could be drastically simplified if one would find 
a characterization of all representable reduced density matrices $\left(\gamma,\Gamma\right)$. This was realized almost
fifty years ago \cite{AJC, RME, CJP, POL}, but such a characterization is still unknown.

The characterization of $E_0\left(\mu\right)$ by (\ref{pdmgs2}) immediately yields lower bounds of the form
\begin{align}
 E_0\left(\mu\right)=: E_{\cR}\left(\mu\right)\geq E_{\cS}\left(\mu\right), \label{inequu}
\end{align}
for any superset $\cS$ of $\cR$.
For example, the positivity $\left<P_2^*P_2\right>_\rho\geq 0$ for all polynomials $P_2\equiv P_2\left(\e,\v\right)$ in the
creation and annihilation operators of degree two yields the so-called G-, P-, and Q-Conditions on $\left(\gamma_\rho,\Gamma_\rho\right)$
\cite{BKM, AJC, RME, CJP}. Similarly, the positivity $\left<P_3^* P_3+P_3 P_3^*\right>_\rho\geq 0$ yields the $\mathrm{T}_1$-
and generalized $\mathrm{T}_2$-Conditions \cite{RME}. Hence, all representable reduced density matrices $\left(\gamma,\Gamma\right)$
necessarily fulfill the G-, P-, Q-, $\mathrm{T}_1$-, and generalized $\mathrm{T}_2$-Conditions, and we have
\begin{align}
 E_\cR\left(\mu\right)\geq E_{\cS\left[\mathrm{G,P,Q,}\mathrm{T}_1,\mathrm{T}_2\right]}\left(\mu\right)\geq E_{\cS\left[\mathrm{G,P,Q}\right]}\left(\mu\right),  \label{inequuu}
\end{align}
since $\cR\subseteq \cS\left[\mathrm{G,P,Q,}\mathrm{T}_1,\mathrm{T}_2\right]\subseteq \cS\left[\mathrm{G,P,Q}\right]$, with
\begin{align*}
 \cS\left[X\right]:=\left\{\left(\gamma,\Gamma\right)\in\cL^1\left(\cH\right)\times\cL^1\left(\cH\otimes\cH\right)\Big|\ 
      \left(\gamma,\Gamma\right)\ \mathrm{fulfills}\ \mathrm{Conditions}\ X\right\}.
\end{align*}
We have discussed (\ref{inequu})-(\ref{inequuu}) for $\cS=\cS\left[\mathrm{G, P}\right]$ in some detail in \cite{BKM} and refer the reader to that paper 
and references therein. Furthermore, for $\cS=\cS\left[\mathrm{G,P,Q,}\mathrm{T}_1,\mathrm{T}_2\right]$ numerical works
show agreement with Full CI computations \cite{CLS, DAM, MAZ, ZBF} to high accuracy.\\

The purpose of the present paper is the reformulation of representability conditions in terms of Grassmann integrals. 
Such a transcription may possibly yield new viewpoints and hopefully new insights into the representability problem.
To this end, we introduce a Grassmann algebra $\cG_M$ as a finite dimensional complex algebra. The object on $\cG_M$ 
corresponding to a given density matrix is an element of the form $\vartheta^*\star\vartheta$ described in the sequel. 
Grassmann integration is the basic and most commonly used method (see, e.g., \cite{FKT, MSH}) in theoretical physics
to compute partition functions of the form 
\begin{align*}
 Z_{\Gamma,\lambda}\left(J\right):=\int \mathrm{D}_\Gamma(\phi)\,\mathrm{e}^{-S_\Gamma+\left(J,\phi\right)_\Gamma}
\end{align*}
as a functional integral with $\mathrm{D}_\Gamma(\phi):=\prod\limits_{x\in\Gamma}\mathrm{d}\phi\left(x\right)$ with sources $J:\Gamma\to\mathbbm{R}$
and an action $S_\Gamma$ (see \cite{MSH} for further details).\\

The derivation of the G-, P-, Q-, $\text{T}_1$-, and generalized $\text{T}_2$-Conditions
is based on the representation of the trace on $\wedge\cH$ in terms of Grassmann integrals and a positivity condition of a Grassmann integral, namely
\begin{align}
 \forall \eta \in \cG_M:\ \int\mathrm{d}(\overline{\Psi},\Psi)\,\mathrm{e}^{2\left(\overline{\Psi},\Psi\right)}\eta^*\star\eta \geq 0, \label{mugl}
\end{align}
where $\int\mathrm{d}(\overline{\Psi},\Psi)$ denotes the Grassmann integration. The star product refers to a product
on $\cG_M$ and is introduced later. Considering appropriate subspaces of $\cG_M$ denoted by $\cG_M^{(n)}$, the main 
results of this paper are the bounds for the one-particle density matrix $\gamma_\vartheta$:
 \begin{align*}
 \left\{\forall \mu\in\cG_M^{(1)}: \int\mathrm{d}(\overline{\Psi},\Psi)\,\mathrm{e}^{2\left(\overline{\Psi},\Psi\right)}\vartheta^*\star\vartheta \star\mu\geq 0 \right\}
    \Leftrightarrow \left\{0\leq \gamma_\vartheta\leq \mathbbm{1}\right\}
 \end{align*}
and the G-, P-, and Q-Condition as conditions for the two-particle density matrix $\Gamma_\vartheta$:
\begin{align*}
\nonumber &\left\{\forall \mu\in\cG_M^{(2)}: \int\mathrm{d}(\overline{\Psi},\Psi)\,\mathrm{e}^{2\left(\overline{\Psi},\Psi\right)}\vartheta^*\star\vartheta \star\mu\geq 0 \right\}   \Leftrightarrow \left\{0\leq \gamma_\vartheta\leq \mathbbm{1},\ \text{G-, P-, and Q-Condition} \right\}
\end{align*}
Finally, we prove the validity of the $\text{T}_1$- and generalized $\text{T}_2$-Condition by Inequality (\ref{mugl}).


\section{Reduced Density Matrices and Representability}

Before we elucidate how to derive the G-, P-, Q-, $\mathrm{T}_1$-, and generalized $\mathrm{T}_2$-Conditions for the 1- and 2-particle density matrix
(1- and 2-pdm) by Grassmann integration, we give a definition of these first two reduced density matrices. For this purpose, we 
consider a finite-dimensional index set $M$, an $|M|$-dimensional (one-particle) Hilbert space $\left(\cH,\ \left<\,\cdot\, ,\,\cdot\,\right>\right)$, and an arbitrary, but fixed orthonormal basis (ONB) 
$\left\{\psi_i\right\}_{i\in M}$ of $\cH$.
Furthermore, we introduce the usual fermion creation and annihilation operators on the fermion Fock space $\wedge\cH$ over $\cH$ given by 
$\e(\psi_i)\equiv\e_i$ and $\v(\psi_i)\equiv\v_i$ with the canonical anticommutation relations (CAR)
\begin{align}
 \left\{\v(f),\v(g)\right\} = \left\{\e(f),\e(g)\right\} = 0\ \ \text{and}\
  \left\{\v(f),\e(g)\right\}=\left<f,g\right>\cdot\mathbbm{1}_{\wedge\cH}\label{DefCAR}
\end{align}
for all $f,g\in\cH$, where $\left<\,\cdot\,,\,\cdot\,\right>$ is linear in the second and antilinear in the first argument. $\left\{A,B\right\}:=AB+BA$
denotes the anticommutator.\\

The 1-pdm $\gamma_\rho\in\cL_{+}^{1}\left(\cH\right)$ of a density matrix $\rho$, i.e., a positive trace class operator on $\wedge\cH$ of unit trace
($\mathrm{tr}_{\wedge\cH}\left\{\rho\right\}=1$), is defined by its matrix elements as
\begin{align}
 \forall f,g\in\cH:\ \left<f ,\gamma_\rho g\right>&:=\text{tr}_{\wedge\cH}\left\{\rho\, \e(g)
	      \v(f)\right\}\label{1pdm}.
\end{align}
Likewise, the 2-pdm $\Gamma_\rho\in\cL_{+}^{1}\left(\cH\otimes\cH\right)$ of $\rho$ is defined by
\begin{align}
 \forall f_1, f_2 ,g_1 , g_2 \in \cH:\ \left<f_1\otimes f_2 , \Gamma_\rho( g_1\otimes g_2)\right>
:=\text{tr}_{\wedge\cH}\left\{\rho\,\e(g_2)\e(g_1)\v(f_1)\v(f_2)\right\}. \label{2pdm}
\end{align}

There are several properties which can be derived directly from the definition of $\gamma_\rho$ and $\Gamma_\rho$.
\begin{lem}
 Let $\rho\in\cL_{+}^{1}(\wedge\cH)$ be a density matrix and $\widehat{\mathbbm{N}}:=\sum\limits_{k\in M}\e_k\v_k$ the particle number
operator  with $\big<\widehat{\mathbbm{N}}^2\big>_\rho<\infty$. Then the following assertions hold true:
\begin{itemize}
\item[i)]$\gamma_\rho\in\cL_{+}^{1}(\cH)$,\quad $0\leq\gamma_\rho\leq \mathbbm{1}$,\quad $\mathrm{tr}_{\cH}\left\{\gamma_\rho\right\}
=\big<\widehat{\mathbbm{N}}\big>_\rho$,\quad $\Gamma_\rho\in\cL_{+}^{1}(\cH\otimes\cH)$,\quad $0\leq
\Gamma_\rho\leq \big<\widehat{\mathbbm{N}}\big>_\rho$,\quad and\quad $\mathrm{tr}_{\cH\otimes\cH}\left\{\Gamma_\rho\right\}
= \big<\widehat{\mathbbm{N}}\big(\widehat{\mathbbm{N}}-\mathbbm{1}\big)\big>_\rho$.
\item[ii)]
If $\mathrm{Ran}\left\{\rho\right\}\subseteq\wedge^{(N)}\cH$, $N\in\mathbbm{N}$, then, for all $f,g\in\cH$,
	\begin{align*}
	 \left<f,\gamma_\rho g\right>= \frac{1}{N-1}\sum\limits_{k\in M}\left<f\otimes\varphi_k 
	  ,\Gamma_\rho(g\otimes\varphi_k)\right>,
	\end{align*}
	where $\left\{\varphi_k\right\}_{k\in M}\subseteq\cH$ is an ONB. Here, $\wedge^{(N)}\cH$ denotes the fermion $N$-particle Fock space.
\item[iii)] Furthermore,
	\begin{align*}
	\rho=\left|\e(\varphi_1)\cdots\e(\varphi_N)\Omega\right>
	\left<\e(\varphi_1)\cdots\e(\varphi_N)\Omega\right|\ \ \Leftrightarrow \ \ 
	\gamma_\rho=\sum\limits_{i=1}^{N}\left|\varphi_i\right>\left<\varphi_i\right|
	\end{align*}
	and, in this case,
	\begin{align*}
	 \Gamma_\rho=\left(\mathbbm{1}-\mathrm{Ex}\right)\left(\gamma_\rho\otimes\gamma_\rho\right),
	\end{align*}
	where $\mathrm{Ex}\left(f\otimes g\right):=g\otimes f$ for any $f,g\in\cH$.
\end{itemize}
\end{lem}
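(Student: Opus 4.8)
The plan is to dispatch the three assertions in turn, in each case reducing the claim to an algebraic manipulation of the CAR~(\ref{DefCAR}) inside the trace; because $\cH$ is finite dimensional, every operator is trace class automatically and cyclicity of $\mathrm{tr}_{\wedge\cH}$ may be used without scruple, so the content is purely combinatorial. For assertion i) I would read off positivity by writing each diagonal form as a $\rho$-expectation of a manifest square. Since $\e(f)=\v(f)^*$, the identity $\e(f)\v(f)=\v(f)^*\v(f)\ge 0$ together with $\rho\ge 0$ gives $\left<f,\gamma_\rho f\right>=\mathrm{tr}_{\wedge\cH}\{\rho^{1/2}\,\e(f)\v(f)\,\rho^{1/2}\}\ge 0$, whereas the CAR $\{\v(f),\e(f)\}=\left<f,f\right>\mathbbm{1}$ rewrites $\left<f,f\right>-\left<f,\gamma_\rho f\right>$ as $\mathrm{tr}_{\wedge\cH}\{\rho\,\v(f)\e(f)\}\ge 0$, so $0\le\gamma_\rho\le\mathbbm{1}$. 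Expanding $\Phi=\sum_{pq}c_{pq}\psi_p\otimes\psi_q$ and setting $C:=\sum_{pq}\overline{c_{pq}}\,\v_p\v_q$, one checks from the definition~(\ref{2pdm}) that $\left<\Phi,\Gamma_\rho\Phi\right>=\left<C^*C\right>_\rho\ge 0$, giving $\Gamma_\rho\ge 0$. The two trace formulas follow from the same bookkeeping: $\sum_k\e_k\v_k=\widehat{\mathbbm{N}}$ yields $\mathrm{tr}_{\cH}\{\gamma_\rho\}=\left<\widehat{\mathbbm{N}}\right>_\rho$, while the normal-ordering identity $\sum_{ij}\e_j\e_i\v_i\v_j=\widehat{\mathbbm{N}}(\widehat{\mathbbm{N}}-\mathbbm{1})$, obtained by collapsing the central $\sum_i\e_i\v_i=\widehat{\mathbbm{N}}$ and commuting $\widehat{\mathbbm{N}}$ past $\v_j$, gives $\mathrm{tr}_{\cH\otimes\cH}\{\Gamma_\rho\}=\left<\widehat{\mathbbm{N}}(\widehat{\mathbbm{N}}-\mathbbm{1})\right>_\rho$.

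The only genuinely quantitative point, and the step I expect to be the main obstacle, is the upper bound $\Gamma_\rho\le\left<\widehat{\mathbbm{N}}\right>_\rho\mathbbm{1}$. I would prove it as the Fock-space operator inequality $C^*C\le\widehat{\mathbbm{N}}$ for $\|\Phi\|=1$, verified sector by sector on $\wedge^{(N)}\cH$ (both $C^*C$ and $\widehat{\mathbbm{N}}$ preserve particle number). Writing $C=\sum_q b_q\v_q$ with $b_q:=\sum_p\overline{c_{pq}}\,\v_p$ and using $\{b_q,b_q^*\}=\big(\sum_p|c_{pq}|^2\big)\mathbbm{1}=:\kappa_q\mathbbm{1}$, so that $b_q^*b_q\le\kappa_q\mathbbm{1}$ and $\|b_q\|\le\sqrt{\kappa_q}$, the triangle inequality followed by Cauchy--Schwarz gives $\|C\Psi\|\le\sum_q\sqrt{\kappa_q}\,\|\v_q\Psi\|\le\big(\sum_q\kappa_q\big)^{1/2}\big(\sum_q\|\v_q\Psi\|^2\big)^{1/2}$ for $\Psi\in\wedge^{(N)}\cH$. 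Since $\sum_q\kappa_q=\sum_{pq}|c_{pq}|^2=\|\Phi\|^2=1$ and $\sum_q\|\v_q\Psi\|^2=\left<\Psi,\widehat{\mathbbm{N}}\Psi\right>=N\|\Psi\|^2$, this yields $\|C\Psi\|^2\le\left<\Psi,\widehat{\mathbbm{N}}\Psi\right>$, hence $C^*C\le\widehat{\mathbbm{N}}$; taking $\rho$-expectations gives $\left<\Phi,\Gamma_\rho\Phi\right>=\left<C^*C\right>_\rho\le\left<\widehat{\mathbbm{N}}\right>_\rho$. The antisymmetric two-particle example shows the bound is saturated, so no softer estimate will do.

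Assertion ii) is a pure CAR computation. From the definition~(\ref{2pdm}) the right-hand side is $\tfrac{1}{N-1}\sum_k\mathrm{tr}_{\wedge\cH}\{\rho\,\e(\varphi_k)\e(g)\v(f)\v(\varphi_k)\}$, and I would establish the contraction identity $\sum_k\e(\varphi_k)\e(g)\v(f)\v(\varphi_k)=(\widehat{\mathbbm{N}}-\mathbbm{1})\,\e(g)\v(f)$ by anticommuting $\v(\varphi_k)$ leftward, using $\{\v(\varphi_k),\e(g)\}=\left<\varphi_k,g\right>$, collapsing $\sum_k\e(\varphi_k)\v(\varphi_k)=\widehat{\mathbbm{N}}$, and invoking completeness in the form $\sum_k\left<\varphi_k,g\right>\e(\varphi_k)=\e(g)$. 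As $\Ran\{\rho\}\subseteq\wedge^{(N)}\cH$ forces $\widehat{\mathbbm{N}}\rho=N\rho$ and $\e(g)\v(f)$ commutes with $\widehat{\mathbbm{N}}$, cyclicity turns the trace into $(N-1)\,\mathrm{tr}_{\wedge\cH}\{\rho\,\e(g)\v(f)\}=(N-1)\left<f,\gamma_\rho g\right>$, which is exactly the assertion.

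For assertion iii), the implication $\rho=|\Psi\rangle\langle\Psi|\Rightarrow\gamma_\rho=\sum_{i=1}^N|\varphi_i\rangle\langle\varphi_i|$ is the standard evaluation of $\left<\Psi,\e(g)\v(f)\Psi\right>$ for a Slater determinant via the CAR. The converse is the structural step: extending $\{\varphi_i\}_{i=1}^N$ to an ONB of $\cH$, the hypothesis that $\gamma_\rho$ is the rank-$N$ projection onto $\mathrm{span}\{\varphi_i\}$ reads $\left<\e(\varphi_i)\v(\varphi_i)\right>_\rho=1$ for $i\le N$ and $\left<\e(\varphi_j)\v(\varphi_j)\right>_\rho=0$ for the complementary modes; since $0\le\e(\varphi)\v(\varphi)\le\mathbbm{1}$, saturation of these one-mode bounds drives $\Ran\{\rho\}$ into the simultaneous eigenspace where $\varphi_1,\dots,\varphi_N$ are occupied and all other modes empty, which is the line $\CC\Psi$, whence $\rho=|\Psi\rangle\langle\Psi|$ by $\mathrm{tr}_{\wedge\cH}\{\rho\}=1$. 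Finally, the factorization $\Gamma_\rho=(\mathbbm{1}-\Ex)(\gamma_\rho\otimes\gamma_\rho)$ I would obtain from Wick's theorem for the determinant (quasifree) state, in which only the two normal contractions $\left<\e(g_i)\v(f_j)\right>=\left<f_j,\gamma_\rho g_i\right>$ survive with a relative sign, or equivalently by a direct CAR computation paralleling the one-particle case; comparing the result with $\left<f_1\otimes f_2,(\mathbbm{1}-\Ex)(\gamma_\rho\otimes\gamma_\rho)(g_1\otimes g_2)\right>=\left<f_1,\gamma_\rho g_1\right>\left<f_2,\gamma_\rho g_2\right>-\left<f_1,\gamma_\rho g_2\right>\left<f_2,\gamma_\rho g_1\right>$ closes the proof.
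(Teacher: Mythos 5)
Your proof is correct in all three parts, but there is nothing in the paper to compare it against: the paper does not prove this lemma at all, it simply states it and refers the reader to the literature (``A proof can be found in \cite{VBA}''). So your argument stands as a self-contained substitute, and it checks out. The easy claims are handled exactly as one would expect: positivity of $\gamma_\rho$ and $\Gamma_\rho$ by writing the quadratic forms as $\rho$-expectations of $\v(f)^*\v(f)$ and of $C^*C$ with $C=\sum_{p,q}\overline{c_{pq}}\,\v_p\v_q$, the bound $\gamma_\rho\leq\mathbbm{1}$ from $\v(f)\e(f)\geq 0$, and the trace identities from $\sum_k\e_k\v_k=\widehat{\mathbbm{N}}$ and $\sum_{i,j}\e_j\e_i\v_i\v_j=\widehat{\mathbbm{N}}\big(\widehat{\mathbbm{N}}-\mathbbm{1}\big)$. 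The one genuinely quantitative step, $\Gamma_\rho\leq\langle\widehat{\mathbbm{N}}\rangle_\rho$, you correctly isolate and prove as the operator inequality $C^*C\leq\widehat{\mathbbm{N}}$ via $C=\sum_q b_q\v_q$, $\|b_q\|\leq\sqrt{\kappa_q}$, and Cauchy--Schwarz over the mode index; this is clean, and your remark that the antisymmetric two-particle vector saturates it confirms the constant is sharp. In ii) the contraction identity $\sum_k\e(\varphi_k)\e(g)\v(f)\v(\varphi_k)=\big(\widehat{\mathbbm{N}}-\mathbbm{1}\big)\e(g)\v(f)$ and the subsequent use of $\rho\widehat{\mathbbm{N}}=N\rho$ are exactly right. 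In iii) the delicate direction is the converse, and your saturation argument works because each $n_i=\e(\varphi_i)\v(\varphi_i)$ is an orthogonal projection: $\mathrm{tr}_{\wedge\cH}\left\{\rho\left(\mathbbm{1}-n_i\right)\right\}=0$ with both factors positive forces $n_i\rho=\rho$ for $i\leq N$, and likewise $n_j\rho=0$ for $j>N$, which pins $\mathrm{Ran}\left\{\rho\right\}$ to the one-dimensional joint eigenspace $\CC\,\e(\varphi_1)\cdots\e(\varphi_N)\Omega$; the final Wick (or direct CAR) evaluation reproduces $\left<f_1,\gamma_\rho g_1\right>\left<f_2,\gamma_\rho g_2\right>-\left<f_1,\gamma_\rho g_2\right>\left<f_2,\gamma_\rho g_1\right>$, i.e.\ $\Gamma_\rho=\left(\mathbbm{1}-\Ex\right)\left(\gamma_\rho\otimes\gamma_\rho\right)$. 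If you want to tighten the write-up, the only places deserving an explicit line of justification are the passage from the quadratic-form inequality $C^*C\leq\widehat{\mathbbm{N}}$ to $\mathrm{tr}\left\{\rho^{1/2}C^*C\rho^{1/2}\right\}\leq\mathrm{tr}\left\{\rho^{1/2}\widehat{\mathbbm{N}}\rho^{1/2}\right\}$, and the projection property $n_i^2=n_i$ used in iii); both are one-line checks in this finite-dimensional setting.
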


For further details we recommend \cite{VBA, BKM, AJC, CJP}. A proof can be found in \cite{VBA}. Beside these properties,
necessary conditions on $\left(\gamma,\Gamma\right)$ to be representable were derived in \cite{AJC, RME, CJP}. In particular,
the P-, G-, and Q-Conditions are:
\begin{itemize}
 \item $\left\{\left(\gamma,\Gamma\right)\ \mathrm{fulfills}\ \text{P-Condition}\right\} :\Leftrightarrow
   \left\{\Gamma\geq 0\right\},$
 \item $\nonumber \left\{\left(\gamma,\Gamma\right)\ \mathrm{fulfills}\ \text{G-Condition}\right\}\\ 
\qquad:\Leftrightarrow\left\{\forall A\in\cB\left(\cH\right):\ \mathrm{tr}\left\{\left(A^*\otimes A\right)\left(\Gamma+\mathrm{Ex}\left(\gamma\otimes\mathbbm{1}\right)\right)\right\}\geq \left|\mathrm{tr}\left\{A\gamma\right\}\right|^2\right\},$
 \item $\left\{\left(\gamma,\Gamma\right)\ \mathrm{fulfills}\ \text{Q-Condition}\right\}\\
:\Leftrightarrow\left\{\Gamma+\left(\mathbbm{1}-\mathrm{Ex}\right)\left(\mathbbm{1}\otimes\mathbbm{1}-\gamma\otimes\mathbbm{1}-\mathbbm{1}\otimes\gamma\right)\geq 0\right\}.$
\end{itemize}
The $\mathrm{T}_1$- and generalized $\mathrm{T}_2$-Conditions
are more complicated and not given here. For this conditions we refer the reader to \cite{RME} or Subsection 5.3 of this work.


\section{Grassmann Algebras}

We introduce the Grassmann algebra as the complex algebra generated by elements of the set
$\left\{\overline{\psi}_i,\psi_i\right\}_{i\in M}$ with $|M|<\infty$ modulo
the anticommutation relations specified below. A product of two generators is denoted
by $\psi_i\cdot\psi_j\equiv\psi_i\psi_j$. The unity is given as $1\cdot\psi_i=\psi_i\cdot 1=\psi_i$ (and
equivalently for $\overline{\psi}_j$).
The anticommutation relations allow us
to find a one-to-one representation of the CAR of fermion creation and annihilation
operators in terms of Grassmann variables. 
 For further details on this well-known material we 
recommend \cite{FRZ, WSP, MSH, LAT}. We use the notation of \cite{WSP}.
 
\begin{defn}
 For an ordered set $I:=\left\{i_1,\dots,i_m\right\}\subseteq M$ we write
\begin{align*}
 \Psi_{I}:=\psi_{i_1}\cdots\psi_{i_m},\quad \overline{\Psi}_{I}:=\overline{\psi}_{i_1}\cdots\overline{\psi}_{i_m}.
\end{align*}
For $I=\emptyset$ we set $\Psi_I=\overline{\Psi}_I=1$. Denoting the reversely ordered set corresponding to $I$ by $I'$, we write
\begin{align*}
 \Psi_{I'}:=\psi_{i_m}\cdots\psi_{i_1}.
\end{align*}
\end{defn}
\begin{defn}
Given a set of generators $\left\{\overline{\psi}_i,\psi_i\right\}_{i\in M}$ obeying the anticommutation relations
\begin{align*}
 \overline{\psi}_i\psi_j+\psi_j\overline{\psi}_i=\overline{\psi}_i\overline{\psi}_j+\overline{\psi}_j\overline{\psi}_i
  =\psi_i\psi_j+\psi_j\psi_i=0\quad \forall\,i,j\in M,
\end{align*}
the Grassmann algebra $\cG_M$ is defined as 
\begin{align*}
 \cG_M:=\mathrm{span}\left\{\overline{\Psi}_{I}\Psi_{J}|\ I,J\subseteq M\right\}.
\end{align*}
\end{defn}

Introducing the ordinary wedge product, we can identify $\cG_{M}$ with the Fock space $\wedge\left(\overline{\cH}\oplus\cH\right)$
of a Hilbert space $\left(\cH,\ \left<\,\cdot\, ,\,\cdot\,\right> \right)$ with finite dimension $|M|$. Considering $\cH$ as a subset of $\cG_M$, we can identify $\left\{\psi_i\right\}_{i\in M}$
with a fixed ONB of $\cH$ and $\left\{\overline{\psi}_i\right\}_{i\in M}$ with the corresponding ONB of  $\overline{\cH}$, i.e., the 
space of all continuous linear functionals $\cH\rightarrow\mathbbm{C},\ \psi_{i}\mapsto\overline{\psi}_i\left(\,\cdot\,\right):=\left<\psi_i,\cdot\,\right>$.
\begin{rem}
If $\cG_M$ is generated by $\left\{\overline{\phi}_i,\phi_i\right\}_{i\in M}$, we emphasize this 
by using $\mu\left(\overline{\phi},\phi\right)\in\cG_M$ instead of $\mu\in\cG_M$. We also use ``mixed'' generators, e.g.,
\begin{align*}
 \mu\left(\overline{\psi},\phi\right):=\sum\limits_{i,j}\alpha_{ij}\,
	  \overline{\Psi}_{I_i}\Phi_{J_j}.
\end{align*}
\end{rem}

Later, it is necessary to link the CAR algebra of fermion annihilation and creation operators to a Grassmann algebra.
For this purpose, a map between $\cB\left(\wedge\cH\right)$ and $\cG_{M}$ as an isomorphism between vector spaces is required. This map is provided below.
\begin{defn}
  Let $\cG_M$ be generated by $\left\{\overline{\psi}_i,\psi_i\right\}_{i\in M}$ and associate $\left\{\psi_i\right\}_{i\in M}$ with
 a fixed ONB of $\cH$. For all $z\in\mathbbm{C}$ and $m,n\leq |M|$, we define the linear map $\Theta:\mathcal{B}\left(\wedge\cH\right)\to\cG_{M}$
  by $\Theta\left(z\right):=z$ and
\begin{align}
 \Theta\left(\e(\psi_{i_1})\cdots\e(\psi_{i_m})\v(\psi_{j_1})
	  \cdots\v(\psi_{j_n})\right)
      :=\overline{\psi}_{i_1}\cdots\overline{\psi}_{i_m}\psi_{j_1}\cdots\psi_{j_n}, \label{deda}
\end{align}
and extension to $\cB\left(\wedge\cH\right)$ by linearity.
\end{defn}

We emphasize that $\Theta$ is not multiplicative. E.g., while
\begin{align*}
\Theta\big(\e(\psi_1)\v(\psi_1)\big)=\overline{\psi}_1\psi_1
=\Theta\big(\e(\psi_1)\big)\Theta\big(\v(\psi_1)\big),
\end{align*} 
we have
\begin{align*}\Theta\big(\v(\psi_1)\e(\psi_1)\big)
&=\Theta\big(-\e(\psi_1)\v(\psi_1)+\mathbbm{1}\big)\\
&=-\overline{\psi}_1\psi_1+1=\psi_1\overline{\psi}_1+1
=\Theta\big(\v(\psi_1)\big)\Theta\big(\e(\psi_1)\big)+1.
\end{align*}
Thus, Equation (\ref{deda}) only holds for normal-ordered monomials in creation and annihilation ope\-ra\-tors, i.e., monomials in which all 
creation ope\-ra\-tors are to the left of all annihilation ope\-ra\-tors.

\begin{defn}
  For any $A\in\mathcal{B}\left(\cH\right)$ we set
  \begin{align*}
   \left(\overline{\Psi},A\Phi\right):=\sum\limits_{i,j\in M}\left[\overline{\psi}_i\left(A\psi_j\right)\right]
      \overline{\psi}_j\phi_i\in\cG_M.
  \end{align*}
\end{defn}

Note that $\overline{\psi}_i\left(A\psi_j\right)=\left<\psi_i,A\psi_j\right> \in\mathbbm{C}$. Furthermore, $\left(\overline{\Psi},A\Phi\right)$ 
does not depend on the choice of generators of $\cG_M$ as can be seen by a unitary change of generators, e.g., 
$\chi_i:=\sum\limits_{j\in M}U_{ij}\psi_j$ for unitary $U$. An important case is $A=\mathrm{id}_\cH$. Here we have 
$\left(\overline{\Psi},\Phi\right)=\sum\limits_{i\in M}
      \overline{\psi}_i\phi_i$.
One of the last ingredients for the Grassmann integration is the following.
\begin{defn} The expression
 $\mathrm{e}^{\pm\left(\overline{\Psi},A\Phi\right)}\in\cG_M$
is given by
\begin{align*}
 \mathrm{e}^{\pm\left(\overline{\Psi},A\Phi\right)}:=\sum\limits_{m=0}^{\infty}\frac{1}{m!}
    \left[\pm \left(\overline{\Psi},A\Phi\right)\right]^m.
\end{align*}
\end{defn}

As $\mathrm{dim}\left\{\wedge\cH\right\}=2^{\mathrm{dim}\left\{\cH\right\}}$, the sum runs only over
$0\leq m\leq 2^{\mathrm{dim}\left\{\cH\right\}}$. 
\begin{rem}
 Since $\left(\overline{\Psi},\Phi\right)=\sum\limits_{\alpha\in M}\overline{\psi}_\alpha\phi_\alpha$, and 
$\overline{\psi}_\alpha\phi_\alpha$ commutes with every element of $\cG_M$, we have
\begin{align}
 \mathrm{e}^{\pm\left(\overline{\Psi},\Phi\right)}=
      \prod\limits_{\alpha\in M}\left(1\pm\overline{\psi}_\alpha\phi_\alpha\right). \label{expfkt}
\end{align}
\end{rem}
\begin{defn}
 For all $i,j\in M$, we define the vector space homomorphisms
$
 \frac{\delta}{\delta \psi_i},\ \frac{\delta}{\delta \overline{\psi}_i}: \cG_M
      \to \cG_M $
by 
\begin{align*}
 \frac{\delta}{\delta \psi_i}\psi_j=\frac{\delta}{\delta \overline{\psi}_i}\overline{\psi}_j=\delta_{ij}, \quad\mathrm{and}\quad 
  \frac{\delta}{\delta \psi_i}\overline{\psi}_j=\frac{\delta}{\delta \overline{\psi}_i} \psi_j=0.
\end{align*}
\end{defn}
\begin{rem}The set $\left\{\frac{\delta}{\delta \overline{\psi}_i},\frac{\delta}{\delta \psi_i}\right\}_{i\in M}$ itself generates a 
Grassmann algebra.
\end{rem}


\section{Grassmann Integration}

Now we are prepared to define the Grassmann integral, which is a linear operator from 
$\cG_M$ to $\mathbbm{C}$.
\begin{defn}
The map $\int\mathrm{d}(\overline{\Psi},\Psi):\ \cG_M\to \mathbbm{C}$
is defined by 
\begin{align*}
 \int\mathrm{d}(\overline{\Psi},\Psi):=
\prod\limits_{\alpha\in M}\left(\frac{\delta}{\delta \overline{\psi}_\alpha}\frac{\delta}{\delta \psi_\alpha}\right).
\end{align*}
and is referred to as the Grassmann integral.
\end{defn}
\begin{rem}
 If the factor $\mathrm{e}^{2\left(\overline{\Psi},\Psi\right)}=\prod\limits_{\alpha\in M}\left(1+2\overline{\psi}_\alpha\psi_\alpha\right)$
is involved in the integration, we use the abbreviation
\begin{align*}
\cD(\overline{\Psi},\Psi):= \mathrm{d}(\overline{\Psi},\Psi)\,\mathrm{e}^{2\left(\overline{\Psi},\Psi\right)},
\end{align*}
since $\prod\limits_{\alpha\in M}\left(1+2\overline{\psi}_\alpha\psi_\alpha\right)$ commutes with every element of $\cG_M$.
\end{rem}

In order to state the invariance of the Grassmann integration with respect to a change of generators,
we introduce some notations. We write two sets of generators, $\left\{\overline{\psi}_i,\psi_i\right\}_{i\in M}$ and $\left\{\overline{\chi}_i,\chi_i\right\}_{i\in M}$, as $2|M|$-component
vectors $\underline{a}$ and $\underline{b}$, respectively, whose entries are given by 
\begin{align}
 a_i:=\overline{\psi}_i\ \text{and}\ a_{|M|+i}:=\psi_i, \quad\text{and}\quad 
    b_i:=\overline{\chi}_i\ \text{and}\ b_{|M|+i}:=\chi_i. \label{defvecab}
\end{align}
where for all $i\in M$. Furthermore, we define the entries of the $2|M|$-component vectors $\frac{\delta}{\delta\underline{a}}$ and $\frac{\delta}{\delta\underline{b}}$ by
\begin{align*}
\frac{\delta}{\delta\underline{a}_{i}}:=\frac{\delta}{\delta \overline{\psi}_i}\ \text{and}\ \frac{\delta}{\delta\underline{a}_{|M|+i}}:=\frac{\delta}{\delta \psi_i},
\quad\text{and}\quad 
\frac{\delta}{\delta\underline{b}_{i}}:=\frac{\delta}{\delta \overline{\chi}_i}\ \text{and}\ \frac{\delta}{\delta\underline{b}_{|M|+i}}:=\frac{\delta}{\delta \chi_i}.
\end{align*}
We denote the index set for the introduced vectors by $\widetilde{M}$, $|\widetilde{M}|=2|M|$. In this notation the Grassmann integration with respect to $\left\{\overline{\psi}_i,\psi_i\right\}_{i\in M}$ reads as 
\begin{align*}
 \left(-1\right)^{\frac{1}{2}|M|(|M|-1)}\prod\limits_{\alpha\in M}\left(\frac{\delta}{\delta \overline{\psi}_\alpha}\frac{\delta}{\delta \psi_\alpha}\right)
      =\prod\limits_{\alpha\in M}\frac{\delta}{\delta \overline{\psi}_\alpha}
	  \prod\limits_{\alpha\in M}\frac{\delta}{\delta \psi_\alpha}
      =\prod\limits_{\beta\in \widetilde{M}}\frac{\delta}{\delta\underline{a}_{\beta}}.
\end{align*}

\begin{lem}
The Grassmann integral does not depend on the choice of the ge\-ne\-ra\-tors. I.e., for $\underline{a}$ and $\underline{b}$ as defined
in (\ref{defvecab}), and a transformation defined by  
\begin{align*}
 \underline{b}=U\,\underline{a},
\end{align*}
where $U$ is a unitary $2|M|\times 2|M|$-matrix, we have
\begin{align*}
 \frac{\delta}{\delta \underline{b}}= \overline{U}\frac{\delta}{\delta \underline{a}}
\end{align*}
 and, for any $\mu\in\cG_M$,
\begin{align*}
 \prod\limits_{\alpha\in M}\left(\frac{\delta}{\delta \overline{\psi}_\alpha}\frac{\delta}{\delta \psi_\alpha}\right)
      \mu\left(\overline{\psi},\psi\right)
    =\prod\limits_{\alpha\in M}\left(\frac{\delta}{\delta \overline{\chi}_\alpha}\frac{\delta}{\delta \chi_\alpha}\right)
	\mu\left(\overline{\chi},\chi\right).
\end{align*}
\end{lem}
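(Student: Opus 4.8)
The plan is to establish the two displayed assertions in turn. First I prove the transformation rule $\frac{\delta}{\delta\underline{b}}=\overline{U}\,\frac{\delta}{\delta\underline{a}}$ for the functional derivatives; then, using it together with the Remark that the derivatives themselves generate a Grassmann algebra, I deduce the invariance of the integral. The conceptual point is that the Grassmann integral returns nothing but the coefficient of the top-degree monomial, and that under a unitary change of generators the reordering of the anticommuting derivatives and of the generators produces two reciprocal determinants $\det(\overline{U})$ and $\det(U)$ whose product $|\det U|^2=1$ cancels.

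For the transformation rule I fix $\beta\in\widetilde{M}$ and set $D_\beta:=\sum_{\gamma\in\widetilde{M}}\overline{U}_{\beta\gamma}\,\frac{\delta}{\delta\underline{a}_\gamma}$. Each $\frac{\delta}{\delta\underline{a}_\gamma}$ is the standard left derivative of the Grassmann calculus, i.e.\ an odd graded derivation (antiderivation) of $\cG_M$ in the sense of the cited references, and hence so is the linear combination $D_\beta$. Since $\frac{\delta}{\delta\underline{b}_\beta}$ is likewise a graded derivation, and a graded derivation is determined by its action on a generating set, it suffices to check that $D_\beta$ and $\frac{\delta}{\delta\underline{b}_\beta}$ agree on the generators $\{b_{\beta'}\}_{\beta'\in\widetilde{M}}$. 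Writing $b_{\beta'}=\sum_\delta U_{\beta'\delta}\,a_\delta$ and using $\frac{\delta}{\delta\underline{a}_\gamma}a_\delta=\delta_{\gamma\delta}$, I obtain $D_\beta\,b_{\beta'}=\sum_{\gamma}\overline{U}_{\beta\gamma}U_{\beta'\gamma}=(UU^*)_{\beta'\beta}=\delta_{\beta\beta'}$, the last step being the unitarity $UU^*=\mathbbm{1}$. As $\frac{\delta}{\delta\underline{b}_\beta}b_{\beta'}=\delta_{\beta\beta'}$ as well, the two derivations coincide, which is the asserted identity.

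For the integral I pass to the ordered products. By the Remark the derivatives $\{\frac{\delta}{\delta\underline{a}_\gamma}\}_{\gamma\in\widetilde{M}}$ mutually anticommute, so expanding $\prod_{\beta}\frac{\delta}{\delta\underline{b}_\beta}$ by the transformation rule and discarding the terms with a repeated index gives
\[
 \prod_{\beta\in\widetilde{M}}\frac{\delta}{\delta\underline{b}_\beta}
 =\sum_{\sigma\in S_{|\widetilde{M}|}}\sgn(\sigma)\Big(\prod_{\beta\in\widetilde{M}}\overline{U}_{\beta\sigma(\beta)}\Big)\prod_{\gamma\in\widetilde{M}}\frac{\delta}{\delta\underline{a}_\gamma}
 =\det(\overline{U})\prod_{\gamma\in\widetilde{M}}\frac{\delta}{\delta\underline{a}_\gamma},
\]
the sign $\sgn(\sigma)$ stemming from reordering the anticommuting derivatives. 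The identical expansion of the product of generators yields $b_1\cdots b_{2|M|}=\det(U)\,a_1\cdots a_{2|M|}$. Now $\mu(\overline{\chi},\chi)$ is, by definition, the formal substitution of the $\chi$-generators into $\mu(\overline{\psi},\psi)=\sum_{I,J}\mu_{IJ}\overline{\Psi}_I\Psi_J$, so it carries the very same coefficients $\mu_{IJ}$ in the $\chi$-basis. Each ordered product of $2|M|$ derivatives lowers the total degree by $2|M|$ and therefore annihilates every monomial except the top one ($I=J=M$); applying the operator identity above and the two determinantal identities to this top term, the $\chi$-integral equals $\det(\overline{U})\det(U)$ times the $\psi$-integral, and since $\det(\overline{U})\det(U)=|\det U|^2=1$ by unitarity the two integrals coincide, which is the claim.

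I expect the only real difficulty to be bookkeeping: carrying the ordering conventions and permutation signs consistently, in particular the global constant $(-1)^{\frac{1}{2}|M|(|M|-1)}$ that relates $\prod_{\alpha\in M}\big(\frac{\delta}{\delta\overline{\psi}_\alpha}\frac{\delta}{\delta\psi_\alpha}\big)$ to the ordered product $\prod_{\beta\in\widetilde{M}}\frac{\delta}{\delta\underline{a}_\beta}$ and its exact $\chi$-analogue, so that this fixed sign genuinely agrees on the two sides and cancels in the comparison. It is worth noting that the $\chi$'s are automatically legitimate generators, since $\{b_\beta,b_{\beta'}\}=\sum_{\gamma,\delta}U_{\beta\gamma}U_{\beta'\delta}\{a_\gamma,a_\delta\}=0$ for every invertible (not necessarily unitary) $U$; in fact the integral invariance alone holds for any invertible change of generators by the coefficient-extraction argument, and unitarity enters only to provide the clean derivative rule and, equivalently, the cancellation $\det(\overline{U})\det(U)=1$.
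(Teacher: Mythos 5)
Your proof is correct and follows essentially the same route as the paper's: you establish $\frac{\delta}{\delta\underline{b}}=\overline{U}\frac{\delta}{\delta\underline{a}}$ by matching the action on the generators $b_{\beta'}$ via unitarity, and then prove invariance of the integral by reducing to the top-degree monomial and expanding both the product of derivatives and the product of generators into the determinants $\det(\overline{U})$ and $\det(U)$, which cancel since $\left|\det U\right|^2=1$. Your additional observations (the derivation-on-generators justification, the sign $(-1)^{\frac{1}{2}|M|(|M|-1)}$ cancelling on both sides, and the fact that invariance would hold for any invertible $U$) are correct refinements but do not change the argument.
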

\begin{proof}
First we prove $\frac{\delta}{\delta \underline{b}}= \overline{U}\frac{\delta}{\delta \underline{a}}$. The identity
$\frac{\delta}{\delta a_j} a_i = \delta_{ij}$ follows from the properties of the generators. An equivalent identity has to be claimed for $\frac{\delta}{\delta \underline{b}}\underline{b}$.
Suppose $\frac{\delta}{\delta \underline{b}}$ transforms as $\frac{\delta}{\delta \underline{b}}=V\frac{\delta}{\delta \underline{a}}$ with a $2|M|\times 2|M|$-matrix $V$.
This yields
\begin{align*}
 \frac{\delta}{\delta b_j}b_i &=\Bigg(\sum\limits_{\alpha\in \widetilde{M}}V_{j\alpha}\frac{\delta}{\delta a_\alpha}\Bigg)\Bigg(\sum\limits_{\beta\in \widetilde{M}}U_{i\beta}a_\beta\Bigg)=\left(UV^T\right)_{ij}.
\end{align*}
In other words, we have $UV^T=\mathbbm{1}_{\widetilde{M}}$ and, thus, $V=\overline{U}$.
Finally, we can prove the invariance of the Grassmann integral. For a given set of generators $\left\{\overline{\psi}_i,\psi_i\right\}_{i\in M}$, any $\mu\in\cG_M$ can be written as 
\begin{align*}
 \mu\equiv\mu\left(\overline{\psi},\psi\right)=\sum\limits_{I,J\subseteq M}\alpha_{IJ}\overline{\Psi}_I\Psi_J,
\end{align*}
where $\alpha_{IJ}\in\mathbbm{C}$ for all $I,J\subseteq M$, and $I,J$ ordered. The Grassmann integral of $\mu$ is
\begin{align*}
 \int\mathrm{d}(\overline{\Psi},\Psi)\,\mu\left(\overline{\psi},\psi\right)=\int\mathrm{d}(\overline{\Psi},\Psi)\sum\limits_{I,J\subseteq M}a_{IJ}\overline{\Psi}_I\Psi_J
      =\int\mathrm{d}(\overline{\Psi},\Psi)\,\alpha_{MM}\overline{\Psi}_M\Psi_M,
\end{align*}
since all other terms of $\mu$ do not contribute to the integral. If the decomposition of $\mu$ yields $\alpha_{MM}=0$, the Grassmann integral
of $\mu$ vanishes. In this case there is nothing to show. For $\alpha_{MM}\neq 0$ we consider the transformation of 
$\int\mathrm{d}(\overline{\Psi},\Psi)$ and $\overline{\Psi}_M\Psi_M$ separately. For $\int\mathrm{d}(\overline{\Psi},\Psi)$ we use
$\frac{\delta}{\delta\underline{a}_{i}}\frac{\delta}{\delta\underline{a}_{j}}=-\frac{\delta}{\delta\underline{a}_{j}}\frac{\delta}{\delta\underline{a}_{i}}$ for 
$i\neq j$, and express $\frac{\delta}{\delta\underline{b}}$ in terms of 
    $\frac{\delta}{\delta\underline{a}}$:
\begin{align*}
\left(\prod\limits_{\alpha\in M}\frac{\delta}{\delta\overline{\chi}_\alpha}\right)\left(\prod\limits_{\alpha\in M}\frac{\delta}{\delta\chi_\alpha}\right)&=\prod\limits_{\beta\in \widetilde{M}}\frac{\delta}{\delta\underline{b}_{\beta}}
    =\sum\limits_{\beta_1,\dots,\beta_{|\widetilde{M}|}\in \widetilde{M}}\prod\limits_{j\in \widetilde{M}}\overline{U}_{j \beta_j}\frac{\delta}{\delta\underline{a}_{\beta_j}}\\
    &=\sum\limits_{\pi\in\cS_{\widetilde{M}}}\prod\limits_{j\in \widetilde{M}}\overline{U}_{j \pi(j)}\frac{\delta}{\delta\underline{a}_{\pi(j)}}
    =\sum\limits_{\pi\in\cS_{\widetilde{M}}}\left(-1\right)^\pi \prod\limits_{j\in \widetilde{M}}\overline{U}_{j \pi(j)}\frac{\delta}{\delta\underline{a}_{j}}\\
    &=\mathrm{det}\left(\overline{U}\right)\prod\limits_{j\in \widetilde{M}}\frac{\delta}{\delta\underline{a}_{j}}.
\end{align*}
Analogously, we have 
\begin{align*}
 \prod\limits_{\alpha\in M}\overline{\chi}_M\prod\limits_{\alpha\in M}\chi_M=\prod\limits_{\beta\in \widetilde{M}}b_\beta
	=\mathrm{det}\left(U\right)\prod\limits_{j\in \widetilde{M}}a_j.
\end{align*}
Merging the results we obtain 
\begin{align*}
 \left(\prod\limits_{\alpha\in M}\frac{\delta}{\delta\overline{\chi}_\alpha}\right)\left(\prod\limits_{\alpha\in M}\frac{\delta}{\delta\chi_\alpha}\right)
      \prod\limits_{\alpha\in M}\overline{\chi}_M\prod\limits_{\alpha\in M}\chi_M=\left|\mathrm{det}\left(U\right)\right|^2\prod\limits_{j\in \widetilde{M}}\frac{\delta}{\delta\underline{a}_{j}}\prod\limits_{j\in \widetilde{M}}a_j.
\end{align*}
The proof is complete with $\left|\mathrm{det}\left(U\right)\right|^2=1$, since $U$ is unitary.
\end{proof}
\begin{rem}
 The transformation $U$ mixes $\overline{\psi}_i$'s and $\psi_i$'s. For 
 $U:=\begin{pmatrix} u & v \\ \overline{v} & \overline{u}\end{pmatrix}$, a
 transformation without mixing is given for $v=0$. In this case, $u$ has to be unitary.
\end{rem}

For the application of the Grassmann integration on representability conditions we still need some tools, 
especially the definition of a product on $\cG_M$ which induces the CAR on the Grassmann algebra. 
\begin{defn}
 For all $\mu\equiv\mu\left(\overline{\psi},\psi\right)$ and $\eta\equiv\eta\left(\overline{\psi},\psi\right)\in\cG_M$, we define the star product 
$\mu\star \eta\in \cG_M$ by
\begin{align*}
&\left(\mu\star \eta\right)\left(\overline{\psi},\psi\right):=\int\mathrm{d}(\overline{\Phi},\Phi)\,
      \mu\left(\overline{\psi},\phi\right)\eta\left(\overline{\phi},\psi\right)
      \mathrm{e}^{-\left(\overline{\Psi},\Psi\right)}
      \mathrm{e}^{\left(\overline{\Psi},\Phi\right)}
      \mathrm{e}^{-\left(\overline{\Phi},\Phi\right)}
      \mathrm{e}^{\left(\overline{\Phi},\Psi \right)}.
\end{align*}
\end{defn}

We calculate the star product of two monomials $\mu:=\overline{\Psi}_ {I}\Psi_{J}$ and $\eta:=\overline{\Psi}_{K}\Psi_{L}$,
which determines the star product in general, due to the linearity of the Grassmann integral.
\begin{lem}\label{basicstar}
 Let $I,J,K,L\subseteq M$. Then we have
\begin{align}
\nonumber &\left(\overline{\Psi}_{I}\Psi_{J}\right)\star\left(\overline{\Psi}_{K}\Psi_{L}\right)\\
  &\qquad\qquad=\sigma_S\sigma_{JS}\cdot\mathrm{e}^{-\left(\overline{\Psi},\Psi\right)}
    \overline{\Psi}_{I}\Psi_{J\backslash S}\overline{\Psi}_{K\backslash S}\Psi_{L}
		\prod\limits_{\substack{\alpha\in M\\ \backslash\left(J\cup K\right)}}\left(1+\overline{\psi}_\alpha
	  \psi_\alpha\right), \label{starprod}
\end{align}
where $S:= J\cap K$ and $\sigma_{JS}:=\left(-1\right)^{|S|\left(|J\backslash S|+\frac{|S|-1}{2}\right)}$. The sign
$\sigma_S$ is given by the identity $\sigma_S\Phi_{S}\Phi_{J\backslash S}\overline{\Phi}_{S}\overline{\Phi}_{K\backslash S}=\Phi_{J}\overline{\Phi}_{K}$.
\end{lem}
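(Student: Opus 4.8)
The plan is to insert the two monomials into the definition of the star product and then evaluate the Grassmann integral over $\overline{\Phi},\Phi$ directly. Writing $\mu(\overline{\psi},\phi)=\overline{\Psi}_I\Phi_J$ and $\eta(\overline{\phi},\psi)=\overline{\Phi}_K\Psi_L$, the integrand is
\[
 \overline{\Psi}_I\,\Phi_J\,\overline{\Phi}_K\,\Psi_L\;
 \mathrm{e}^{-(\overline{\Psi},\Psi)}\,\mathrm{e}^{(\overline{\Psi},\Phi)}\,
 \mathrm{e}^{-(\overline{\Phi},\Phi)}\,\mathrm{e}^{(\overline{\Phi},\Psi)}.
\]
First I would pull the central factor $\mathrm{e}^{-(\overline{\Psi},\Psi)}$ to the front and expand the three remaining exponentials via (\ref{expfkt}) as $\prod_{\alpha}(1+\overline{\psi}_\alpha\phi_\alpha)$, $\prod_{\alpha}(1-\overline{\phi}_\alpha\phi_\alpha)$, and $\prod_{\alpha}(1+\overline{\phi}_\alpha\psi_\alpha)$. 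Since $\int\mathrm{d}(\overline{\Phi},\Phi)$ extracts only the coefficient of the top monomial $\overline{\Phi}_M\Phi_M$ (as in the proof of the invariance lemma), the integral is nonzero only on terms containing each $\phi_\alpha$ and each $\overline{\phi}_\alpha$ exactly once, so the real task is to decide, index by index, which local factor furnishes the single $\phi_\alpha$ and the single $\overline{\phi}_\alpha$.

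This is settled by a case analysis on the position of $\alpha$. For $\alpha\in S:=J\cap K$ both $\phi_\alpha$ (from $\Phi_J$) and $\overline{\phi}_\alpha$ (from $\overline{\Phi}_K$) are already present, forcing the exponentials to their unit terms; these indices get contracted away. For $\alpha\in J\backslash S$ the variable $\phi_\alpha$ is supplied by $\Phi_J$, and the only admissible source for $\overline{\phi}_\alpha$ is the term $\overline{\phi}_\alpha\psi_\alpha$ of $\mathrm{e}^{(\overline{\Phi},\Psi)}$, leaving a surviving $\psi_\alpha$; symmetrically, for $\alpha\in K\backslash S$ the term $\overline{\psi}_\alpha\phi_\alpha$ of $\mathrm{e}^{(\overline{\Psi},\Phi)}$ is forced and leaves $\overline{\psi}_\alpha$. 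For $\alpha\notin J\cup K$ there are exactly two admissible choices---either $-\overline{\phi}_\alpha\phi_\alpha$ from $\mathrm{e}^{-(\overline{\Phi},\Phi)}$ or the pair $\overline{\psi}_\alpha\phi_\alpha$ and $\overline{\phi}_\alpha\psi_\alpha$ from the other two exponentials---and after integration these combine into the factor $(1+\overline{\psi}_\alpha\psi_\alpha)$. This reproduces exactly the operator part $\mathrm{e}^{-(\overline{\Psi},\Psi)}\,\overline{\Psi}_I\Psi_{J\backslash S}\overline{\Psi}_{K\backslash S}\Psi_L\prod_{\alpha\in M\backslash(J\cup K)}(1+\overline{\psi}_\alpha\psi_\alpha)$ of (\ref{starprod}), and by linearity it will suffice to fix the accompanying scalar.

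The remaining work, and the step I expect to be the main obstacle, is the sign bookkeeping. A useful simplification is that every factor contributed by the exponentials (namely $\overline{\psi}_\alpha\phi_\alpha$, $\overline{\phi}_\alpha\psi_\alpha$, $\overline{\phi}_\alpha\phi_\alpha$, and $\overline{\psi}_\alpha\psi_\alpha$) is a product of two generators, hence Grassmann-even and central; consequently all nontrivial signs stem from reordering the genuinely odd variables carried by $\Phi_J$ and $\overline{\Phi}_K$. I would account for these in two stages. First, rewrite $\Phi_J\overline{\Phi}_K=\sigma_S\,\Phi_S\Phi_{J\backslash S}\overline{\Phi}_S\overline{\Phi}_{K\backslash S}$, which is precisely the definition of $\sigma_S$ and separates the contracted block from the survivors. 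Second, bring $\Phi_S$ adjacent to $\overline{\Phi}_S$ by commuting the block $\overline{\Phi}_S$ of length $|S|$ past $\Phi_{J\backslash S}$---contributing $(-1)^{|S|\,|J\backslash S|}$---and then reorder $\phi_{s_1}\cdots\phi_{s_{|S|}}\overline{\phi}_{s_1}\cdots\overline{\phi}_{s_{|S|}}$ into contractible pairs, contributing $(-1)^{\binom{|S|}{2}}=(-1)^{|S|(|S|-1)/2}$. The product of these two signs is exactly $\sigma_{JS}=(-1)^{|S|(|J\backslash S|+\frac{|S|-1}{2})}$. The delicate point is to confirm, with a fixed convention for the one-site integral $\int\mathrm{d}(\overline{\phi}_\alpha,\phi_\alpha)\,\overline{\phi}_\alpha\phi_\alpha$, that the per-site integrations of the contracted pairs and the relocation of the survivors $\Psi_{J\backslash S},\overline{\Psi}_{K\backslash S}$ into their final positions introduce no additional sign beyond $\sigma_S\sigma_{JS}$; granting this, (\ref{starprod}) follows.
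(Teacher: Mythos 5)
Your proposal is correct and follows essentially the same route as the paper's own proof: expand the exponentials site-by-site via (\ref{expfkt}), split $M$ into the disjoint pieces $S$, $J\backslash S$, $K\backslash S$, and $M\backslash\left(J\cup K\right)$, identify the forced term of the exponentials at each site, and obtain $\sigma_{JS}$ exactly as the product of the block-commutation sign $\left(-1\right)^{|S||J\backslash S|}$ with the pairing sign $\left(-1\right)^{\frac{1}{2}|S|\left(|S|-1\right)}$. The one point you flag as delicate (that the per-site integrations and the placement of the survivors $\Psi_{J\backslash S}$, $\overline{\Psi}_{K\backslash S}$ contribute no further sign) is treated at the same level of detail in the paper, which simply performs the integration and notes that all involved sets are disjoint.
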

\begin{proof}
Writing $S:=J\cap K$, we face the integral
\begin{align*}
\nonumber \left(\overline{\Psi}_{I}\Psi_{J}\right)\star\left(\overline{\Psi}_{K}\Psi_{L}\right)&=\sigma_S\cdot\mathrm{e}^{-\left(\overline{\Psi},\Psi\right)}\overline{\Psi}_{I}\int\mathrm{d}(\overline{\Phi},\Phi)\,\Phi_{S}\Phi_{J\backslash S}\overline{\Phi}_{S}\overline{\Phi}_{K\backslash S}\\
   &\qquad \times\prod\limits_{\alpha\in M}
      \left(1+\overline{\phi}_\alpha\psi_\alpha+\overline{\psi}_\alpha\phi_\alpha
      -\overline{\phi}_\alpha\phi_\alpha-\overline{\phi}_\alpha\phi_\alpha\overline{\psi}_\alpha
	  \psi_\alpha\right)\Psi_{L},
\end{align*}
where we use
\begin{align*}
 \prod\limits_{\alpha\in M}
      \left(1+\overline{\phi}_\alpha\psi_\alpha+\overline{\psi}_\alpha\phi_\alpha
      -\overline{\phi}_\alpha\phi_\alpha-\overline{\phi}_\alpha\phi_\alpha\overline{\psi}_\alpha
	  \psi_\alpha\right)=
\mathrm{e}^{\left(\overline{\Psi},\Phi\right)}
      \mathrm{e}^{-\left(\overline{\Phi},\Phi\right)}
      \mathrm{e}^{\left(\overline{\Phi},\Psi \right)}
\end{align*}
as a consequence of (\ref{expfkt}).
In the next step we write $M=\left(M\backslash\left(J\cup K\right)\right)\dot{\cup} \left(J\backslash S\right)\dot{\cup} \left(K\backslash S\right) \dot{\cup} S$
(where $\dot{\cup}$ denotes a disjoint union) and arrive at
\begin{align*}
\nonumber\left(\overline{\Psi}_{I}\Psi_{J}\right)\star\left(\overline{\Psi}_{K}\Psi_{L}\right)  =&\sigma_S\sigma_{SJ}\cdot\mathrm{e}^{-\left(\overline{\Psi},\Psi\right)}\overline{\Psi}_{I}\int\mathrm{d}(\overline{\Phi},\Phi)\,\prod\limits_{\alpha\in S}\phi_\alpha\overline{\phi}_\alpha\\
\nonumber&\qquad\times    \prod\limits_{\alpha\in J\backslash S}\left( \phi_\alpha+\phi_\alpha\overline{\phi}_\alpha\psi_\alpha\right)
    \prod\limits_{\alpha\in K\backslash S}\left(\overline{\phi}_\alpha+\overline{\phi}_\alpha\overline{\psi}_\alpha\phi_\alpha\right)\\
   &\qquad \times\prod\limits_{\substack{\alpha\in M\\ \backslash\left(J\cup K\right)}}
      \left(1+\overline{\phi}_\alpha\psi_\alpha+\overline{\psi}_\alpha\phi_\alpha
      -\overline{\phi}_\alpha\phi_\alpha-\overline{\phi}_\alpha\phi_\alpha\overline{\psi}_\alpha
	  \psi_\alpha\right)\Psi_{L}.
\end{align*}
The sign $\sigma_{JS}:=\left(-1\right)^{|S|\left(|J\backslash S|+\frac{|S|-1}{2}\right)}$ occurs due to the permutation of all $\phi$'s in $\Phi_{S}$ with
all $\phi$'s in $\Phi_{J\backslash S}$, and $\Phi_{S}\overline{\Phi}_{S}=\left(-1\right)^{\frac{1}{2}|S|\left(|S|-1\right)}
  \left(\prod\limits_{\alpha\in S}\phi_\alpha\overline{\phi}_\alpha\right)$. Now we can perform the integration and arrive at
\begin{align*}
\nonumber& \left(\overline{\Psi}_{I}\Psi_{J}\right)\star\left(\overline{\Psi}_{K}\Psi_{L}\right)
 = \sigma_S\sigma_{JS}\cdot\mathrm{e}^{-\left(\overline{\Psi},\Psi\right)}
    \overline{\Psi}_{I}\prod\limits_{\alpha\in J\backslash S}\psi_\alpha
		\prod\limits_{\alpha\in K\backslash S}\overline{\psi}_\alpha
		\prod\limits_{\substack{\alpha\in M\\ \backslash\left(J\cup K\right)}}\left(1+\overline{\psi}_\alpha
	  \psi_\alpha\right)\Psi_{L},
\end{align*}
as claimed in (\ref{starprod}), since all involved sets are disjoint.
\end{proof}
There are several properties of the star product which follow from Lemma~\ref{basicstar}.
\begin{lem} \label{assoz}
 For all $\mu,\eta,\nu\in\cG_M$ we have
\begin{align*}
 \mu\star\left(\eta\star \nu\right)=\left(\mu\star \eta\right)\star \nu.
\end{align*}
\end{lem}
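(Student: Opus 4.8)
The plan is to avoid the sign–bookkeeping that a direct appeal to Lemma~\ref{basicstar} would entail, and instead to expand both triple products straight from the definition of $\star$ into iterated Grassmann integrals, showing that each collapses to one and the same symmetric double integral. By linearity of the Grassmann integral it would in fact suffice to treat monomials $\mu,\eta,\nu$, but the integral manipulation goes through verbatim for general elements, so I would not reduce to that case at all.

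First I would compute $(\mu\star\eta)\star\nu$, writing the inner product with an auxiliary set $\{\overline{\phi}_i,\phi_i\}$ and the outer product with a second auxiliary set $\{\overline{x}_i,x_i\}$ (with the obvious meaning $(\overline{\Psi},X)=\sum_i\overline{\psi}_i x_i$, and so on). Recalling that each quadratic $\overline{\psi}_\alpha\phi_\alpha$ commutes with every element of $\cG_M$, as noted after~(\ref{expfkt}), all the factors $\mathrm{e}^{\pm(\cdots)}$ are central and commute with one another. Substituting the integral for $\mu\star\eta$, evaluated at $(\overline{\psi},x)$, into the integral defining the outer star product, the crucial observation is that the factor $\mathrm{e}^{-(\overline{\Psi},X)}$ produced by the inner product cancels exactly the factor $\mathrm{e}^{(\overline{\Psi},X)}$ coming from the outer product, leaving
\begin{equation*}
(\mu\star\eta)\star\nu=\int\mathrm{d}(\overline{X},X)\,\mathrm{d}(\overline{\Phi},\Phi)\,\mu(\overline{\psi},\phi)\,\eta(\overline{\phi},x)\,\nu(\overline{x},\psi)\,\mathrm{e}^{-(\overline{\Psi},\Psi)}\mathrm{e}^{(\overline{\Psi},\Phi)}\mathrm{e}^{-(\overline{\Phi},\Phi)}\mathrm{e}^{(\overline{\Phi},X)}\mathrm{e}^{-(\overline{X},X)}\mathrm{e}^{(\overline{X},\Psi)}.
\end{equation*}
Carrying out the analogous computation for $\mu\star(\eta\star\nu)$, now with $\{\overline{x}_i,x_i\}$ the inner auxiliary set and $\{\overline{\phi}_i,\phi_i\}$ the outer one, the cancellation is instead between $\mathrm{e}^{-(\overline{\Phi},\Psi)}$ and $\mathrm{e}^{(\overline{\Phi},\Psi)}$; what remains is precisely the same collection of six exponential factors $\mathrm{e}^{-(\overline{\Psi},\Psi)},\mathrm{e}^{(\overline{\Psi},\Phi)},\mathrm{e}^{-(\overline{\Phi},\Phi)},\mathrm{e}^{(\overline{\Phi},X)},\mathrm{e}^{-(\overline{X},X)},\mathrm{e}^{(\overline{X},\Psi)}$ multiplying the same product $\mu(\overline{\psi},\phi)\,\eta(\overline{\phi},x)\,\nu(\overline{x},\psi)$.

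Since all six exponentials are central they may be reordered freely, so the two integrands coincide identically. The main obstacle, and the step I would spell out with care, is the Grassmann analogue of Fubini's theorem needed to identify $\int\mathrm{d}(\overline{X},X)\,\mathrm{d}(\overline{\Phi},\Phi)$ with $\int\mathrm{d}(\overline{\Phi},\Phi)\,\mathrm{d}(\overline{X},X)$: each Grassmann integral is a product of $2|M|$ of the anticommuting derivations and is therefore an \emph{even} operator, so the two integrations over the disjoint generator sets $\{\overline{\phi},\phi\}$ and $\{\overline{x},x\}$ commute without any sign. One must also check that substituting $x_i$ for $\psi_i$ (respectively $\overline{\phi}_i$ for $\overline{\psi}_i$) into the partially integrated expression commutes with the remaining integration, which is immediate since that integration does not act on the substituted generators. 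Once these points are settled, the equality of the two symmetric double integrals is manifest, and associativity follows.
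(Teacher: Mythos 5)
Your proposal is correct and is essentially the paper's own proof: the paper likewise expands the iterated star product by the definition into the same symmetric double Grassmann integral over two auxiliary generator sets (its $\omega,\phi$ play the roles of your $\phi,x$), with the same telescoping chain of central exponential factors, and then recognizes that integrating out the inner variables yields $\left(\mu\star\eta\right)\star\nu$. The only difference is presentational: you expand both sides down to the common symmetric expression and spell out the exponential cancellation and the commutation of the two integrations (the Fubini step), which the paper performs implicitly in a one-sided computation.
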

\begin{proof}
 By the definition of the star product we have
 \begin{align*}
\nonumber  \mu\star\left(\eta\star\nu\right)&=\mu\left(\overline{\psi},\psi\right)\star
      \int \mathrm{d}(\overline{\Phi},\Phi)\,\eta\left(\overline{\psi},\phi\right)\nu\left(\overline{\phi},\psi\right)
      \mathrm{e}^{-\left(\overline{\Psi},\Psi\right)+\left(\overline{\Psi},\Phi\right)
	-\left(\overline{\Phi},\Phi\right)+\left(\overline{\Phi},\Psi\right)}\\
\nonumber   &=\int\mathrm{d}(\overline{\Omega},\Omega)\int\mathrm{d}(\overline{\Phi},\Phi)\,\mu\left(\overline{\psi},\omega\right)
	\eta\left(\overline{\omega},\phi\right)\nu\left(\overline{\phi},\psi\right)\\
&\qquad\qquad\times\mathrm{e}^{-\left(\overline{\Psi},\Psi\right)+\left(\overline{\Psi},\Omega\right)
	-\left(\overline{\Omega},\Omega\right)+\left(\overline{\Omega},\Phi\right)
	-\left(\overline{\Phi},\Phi\right)+\left(\overline{\Phi},\Psi\right)}.
 \end{align*}
Performing the integration with respect to $\left(\overline{\phi},\phi\right)$ we gain
\begin{align*}
&\mu\star\left(\eta\star\nu\right)
=\int\mathrm{d}(\overline{\Omega},\Omega)\,\mu\left(\overline{\psi},\omega\right)
	\eta\left(\overline{\omega},\psi\right)\mathrm{e}^{-\left(\overline{\Psi},\Psi\right)+\left(\overline{\Psi},\Omega\right)
	-\left(\overline{\Omega},\Omega\right)+\left(\overline{\Omega},\Psi\right)}\star\nu\left(\overline{\psi},\psi\right),
\end{align*}
which is, in fact, $\left(\mu\star\eta\right)\star\nu$.
\end{proof}
According to the creation and annihilation operators on $\cB\left(\wedge\cH\right)$, there is also an implementation
of the CAR for the generators of $\cG_M$.
\begin{lem}\label{CARonG}
Let $\left\{\overline{\psi}_i,\psi_i\right\}_{i\in M}$ be the generators of $\cG_M$. For $\left\{\mu,\eta\right\}_{\star}:=\mu\star \eta+\eta\star \mu$ we have
\begin{align*}
 \left\{\psi_i,\psi_j\right\}_{\star}=\left\{\overline{\psi}_i,\overline{\psi}_j\right\}_{\star}=0,\qquad\mathrm{and}\qquad\left\{\overline{\psi}_i,\psi_j\right\}_{\star}=\delta_{ij}.
\end{align*}
 \end{lem}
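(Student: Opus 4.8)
The plan is to reduce everything to Lemma~\ref{basicstar}. Each generator is a monomial of the special form $\overline{\Psi}_I\Psi_J$ (with $\psi_i$ corresponding to $I=\emptyset$, $J=\{i\}$, and $\overline{\psi}_i$ to $I=\{i\}$, $J=\emptyset$), so the relevant star products $\overline{\psi}_i\star\psi_j$, $\psi_j\star\overline{\psi}_i$, $\psi_i\star\psi_j$, $\overline{\psi}_i\star\overline{\psi}_j$ can be read off directly from formula~(\ref{starprod}). Throughout I will simplify the resulting expressions using the nilpotency relations $\psi_i^2=\overline{\psi}_i^2=0$ and $(\overline{\psi}_\alpha\psi_\alpha)^2=0$, together with $\mathrm{e}^{-\left(\overline{\Psi},\Psi\right)}=\prod_{\alpha}\left(1-\overline{\psi}_\alpha\psi_\alpha\right)$ from (\ref{expfkt}) and the identity $\left(1-\overline{\psi}_\alpha\psi_\alpha\right)\left(1+\overline{\psi}_\alpha\psi_\alpha\right)=1$, which collapses the prefactor $\mathrm{e}^{-\left(\overline{\Psi},\Psi\right)}$ against the product $\prod_{\alpha\notin J\cup K}\left(1+\overline{\psi}_\alpha\psi_\alpha\right)$ occurring in (\ref{starprod}).

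For the two vanishing anticommutators $\left\{\psi_i,\psi_j\right\}_\star$ and $\left\{\overline{\psi}_i,\overline{\psi}_j\right\}_\star$ the intersection $S=J\cap K$ is empty, so both sign factors $\sigma_S$ and $\sigma_{JS}$ equal $1$ and no contraction occurs. A short computation with the nilpotency relations then yields the clean identities $\psi_i\star\psi_j=\psi_i\psi_j$ and $\overline{\psi}_i\star\overline{\psi}_j=\overline{\psi}_i\overline{\psi}_j$ (all exponential prefactors reduce to $1$). Adding the two orderings and using $\psi_i\psi_j=-\psi_j\psi_i$ (resp. $\overline{\psi}_i\overline{\psi}_j=-\overline{\psi}_j\overline{\psi}_i$) makes each anticommutator vanish; the case $i=j$ is immediate from $\psi_i^2=\overline{\psi}_i^2=0$.

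For the mixed relation $\left\{\overline{\psi}_i,\psi_j\right\}_\star$ I treat $\overline{\psi}_i\star\psi_j$ and $\psi_j\star\overline{\psi}_i$ separately. In $\overline{\psi}_i\star\psi_j$ one has $J=\emptyset$, hence $S=\emptyset$ and no contraction, and simplification gives $\overline{\psi}_i\star\psi_j=\overline{\psi}_i\psi_j$ for all $i,j$. The interesting term is $\psi_j\star\overline{\psi}_i$: for $i\neq j$ again $S=\emptyset$, and one gets $\psi_j\star\overline{\psi}_i=\psi_j\overline{\psi}_i=-\overline{\psi}_i\psi_j$, cancelling the first term. For $i=j$, however, $S=J\cap K=\{i\}$ is \emph{nonempty}, so (\ref{starprod}) produces a genuine contraction: with $J\setminus S=K\setminus S=\emptyset$ and $\sigma_S=\sigma_{JS}=1$ the monomial collapses to the constant $1$ times $\prod_{\alpha\neq i}\left(1+\overline{\psi}_\alpha\psi_\alpha\right)$, which after absorbing $\mathrm{e}^{-\left(\overline{\Psi},\Psi\right)}$ equals $1-\overline{\psi}_i\psi_i$. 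Summing, $\left\{\overline{\psi}_i,\psi_i\right\}_\star=\overline{\psi}_i\psi_i+\left(1-\overline{\psi}_i\psi_i\right)=1$, so that $\left\{\overline{\psi}_i,\psi_j\right\}_\star=\delta_{ij}$ in all cases.

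The only genuine obstacle is the careful bookkeeping in this single $S\neq\emptyset$ case: checking that $\sigma_S$ and $\sigma_{JS}$ are both $+1$ when $S=\{i\}$ and $J\setminus S=\emptyset$, and correctly collapsing the exponential prefactors through the nilpotency identities. Everything else is routine once one observes that $\star$ behaves like a normal-ordered product — for factors already in normal order it reproduces the ordinary Grassmann product, while the ``wrong'' ordering $\psi_i\star\overline{\psi}_i$ is exactly where the contraction term produces the expected $+1$.
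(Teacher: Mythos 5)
Your proposal is correct and follows essentially the same route as the paper: both reduce each anticommutator to Lemma~\ref{basicstar} with the appropriate choices of $I,J,K,L$, use the identity $\mathrm{e}^{-\left(\overline{\Psi},\Psi\right)}\prod_{\alpha\in M\backslash\left(J\cup K\right)}\left(1+\overline{\psi}_\alpha\psi_\alpha\right)=\prod_{\alpha\in J\cup K}\left(1-\overline{\psi}_\alpha\psi_\alpha\right)$ together with nilpotency to collapse the prefactors, and isolate the single contraction case $\psi_i\star\overline{\psi}_i=1-\overline{\psi}_i\psi_i$ (where $S=J=K$, $\sigma_S=\sigma_{JS}=1$) as the source of the $\delta_{ij}$. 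Your case bookkeeping and sign checks match the paper's proof step for step.
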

\begin{proof}
 The identities follow directly from Lemma~\ref{basicstar} by an appropriate choice of $I,J,K$ and $L$.
We observe that
\begin{align*}
 \mathrm{e}^{-\left(\overline{\Psi},\Psi\right)}\prod\limits_{\substack{\alpha\in M\\ \backslash\left(J\cup K\right)}}\left(1+\overline{\psi}_\alpha
	  \psi_\alpha\right)=\prod\limits_{\substack{\alpha\in J\cup K}}\left(1-\overline{\psi}_\alpha
	  \psi_\alpha\right)
\end{align*}
and conclude for the first identity with $I=K=\emptyset$, and $J=\left\{i\right\}$, $L=\left\{j\right\}$ in (\ref{starprod})
that $S=\emptyset$ and, therefore, $\sigma_S=\sigma_{JS}=1$. This yields
\begin{align}
 \psi_i\star\psi_j=\left(1-\overline{\psi}_i\psi_i\right)\psi_i\psi_j=\psi_i\psi_j. \label{fz}
\end{align}
Setting $J=\left\{j\right\}$ and $L=\left\{i\right\}$, we gain $\psi_j\star\psi_i=\psi_j\psi_i$ and, hence,
$\psi_i\star\psi_j+\psi_j\star\psi_i=\psi_i\psi_j+\psi_j\psi_i=0$. Equivalently, we obtain $\overline{\psi}_i\overline{\psi}_j+\overline{\psi}_j\overline{\psi}_i=0$.\\
For the last identity we set $J=K=\emptyset$, $I=\left\{i\right\}$ and $L=\left\{j\right\}$. On the one hand, (\ref{starprod}) leads to
\begin{align*}
\overline{\psi}_i\star\psi_j=\overline{\psi}_i\psi_j,
\end{align*}
which is valid for both $i=j$ and $i\neq j$. 
On the other hand, with $I=L=\emptyset$, and $J=\left\{j\right\}$ and $K=\left\{i\right\}$, we have to distinguish between the cases $J=K$ and $J\neq K$.
For $J\neq K$ we have
\begin{align*}
\nonumber \psi_j\star\overline{\psi}_i&=\left(1-\overline{\psi}_i\psi_i\right)\left(1-\overline{\psi}_j\psi_j\right)\psi_j\overline{\psi}_i\\
      &=\psi_j\overline{\psi}_i.
\end{align*}
For $J=K$ we have $S=J=K$ and thus
\begin{align}
\psi_j\star\overline{\psi}_i=\left(1-\overline{\psi}_i\psi_i\right). \label{sz}
\end{align}
Together, the last two results give $\psi_j\star\overline{\psi}_i=\delta_{ij}-\overline{\psi}_i\psi_j$. Finally, we arrive at
$\overline{\psi}_i\star\psi_j+\psi_j\star\overline{\psi}_i=\delta_{ij}$. We mention that in (\ref{fz})-(\ref{sz}) $\sigma_S=\sigma_{JS}=1$
due to the choice of the sets $I,\ J,\ K$ and $L$.
\end{proof}
By a straightforward calculation using Lemma~\ref{basicstar} one can also show that for any generator $\left\{\overline{\psi}_i,\psi_i\right\}_{i\in M}$ of $\cG_M$ we have
the following:
\begin{cor} \label{sterneigen}
Let $\left\{\overline{\psi}_i,\psi_i\right\}_{i\in M}$ be the generators of $\cG_M$. Then we have
\begin{align*}
 \overline{\psi}_{i_1}\star\cdots\star\overline{\psi}_{i_m}\star\psi_{j_1}\star\cdots\star\psi_{j_n}
    =\overline{\psi}_{i_1}\cdots\overline{\psi}_{i_m}\psi_{j_1}\cdots\psi_{j_n}.
\end{align*}
\end{cor}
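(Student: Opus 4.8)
The plan is to prove Corollary~\ref{sterneigen} by induction on the total number $m+n$ of starred factors, using the two monomial identities established in the proof of Lemma~\ref{CARonG} together with the associativity granted by Lemma~\ref{assoz}. The key observation is that Lemma~\ref{basicstar} simplifies dramatically whenever the star product is applied to monomials whose index sets do not overlap: if one factor is purely of $\overline{\psi}$-type (so $J=\emptyset$) and the other begins with $\overline{\psi}$'s, or more generally whenever the overlap $S=J\cap K$ is empty, then $\sigma_S=\sigma_{JS}=1$ and the exponential prefactor $\mathrm{e}^{-(\overline{\Psi},\Psi)}\prod_{\alpha\notin J\cup K}(1+\overline{\psi}_\alpha\psi_\alpha)=\prod_{\alpha\in J\cup K}(1-\overline{\psi}_\alpha\psi_\alpha)$ reduces to $1$ when multiplied against the surviving monomial. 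This is exactly the mechanism already seen in Equations~(\ref{fz}) and in the $J\neq K$ branch of the $\psi_j\star\overline{\psi}_i$ computation.

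Concretely, I would first record the two base cases extracted from the previous lemma: for distinct generators of the same type, $\overline{\psi}_i\star\overline{\psi}_j=\overline{\psi}_i\overline{\psi}_j$ and $\psi_i\star\psi_j=\psi_i\psi_j$ (from~(\ref{fz})), and for mixed generators in the correct normal order, $\overline{\psi}_i\star\psi_j=\overline{\psi}_i\psi_j$ for all $i,j$. These cover $m+n=2$. For the inductive step, I would peel off the leftmost factor and write
\begin{align*}
 \overline{\psi}_{i_1}\star\big(\overline{\psi}_{i_2}\star\cdots\star\psi_{j_n}\big),
\end{align*}
where by the induction hypothesis the parenthesized expression equals the ordinary product $\overline{\psi}_{i_2}\cdots\psi_{j_n}=\overline{\Psi}_{I'}\Psi_{J}$ with $I'=\{i_2,\dots,i_m\}$. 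Applying Lemma~\ref{basicstar} with the leftmost monomial $\overline{\Psi}_{\{i_1\}}$ (i.e.\ $I=\{i_1\}$, $J=\emptyset$ on the left factor) forces the overlap set $S=\emptyset$, so both signs are trivial and the prefactor collapses, leaving precisely the concatenated ordinary product $\overline{\psi}_{i_1}\overline{\Psi}_{I'}\Psi_J$.

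The one genuine point to check is the boundary between the $\overline{\psi}$-block and the $\psi$-block, namely the single mixed star product of the form $\overline{\psi}_{i_m}\star(\psi_{j_1}\cdots\psi_{j_n})$ that arises once the induction has consumed all the barred factors on the left. Here the left factor has $J=\emptyset$ so again $S=J\cap K=\emptyset$, the signs are $1$, and the identity $\overline{\psi}_i\star\psi_j=\overline{\psi}_i\psi_j$ generalizes directly; no anticommutator correction appears because the problematic term $\psi_j\star\overline{\psi}_i$ of~(\ref{sz}) never occurs in a normal-ordered string. I expect this boundary case to be the only place where one must verify that the $\delta_{ij}$-type contribution genuinely does not arise, and once that is confirmed the rest is a routine bookkeeping of disjoint index sets. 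The result is then immediate, since it asserts nothing more than that the star product reproduces the ordinary wedge product on already normal-ordered monomials, which is consistent with the non-multiplicativity of $\Theta$ noted earlier applying only to badly ordered factors.
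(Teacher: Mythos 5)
Your proof is correct, and it rests on the same two pillars as the paper's own argument: associativity (Lemma~\ref{assoz}) and the monomial formula of Lemma~\ref{basicstar} in the disjoint regime $S=J\cap K=\emptyset$, where $\sigma_S=\sigma_{JS}=1$ and the prefactor $\mathrm{e}^{-(\overline{\Psi},\Psi)}\prod_{\alpha\in M\backslash(J\cup K)}\left(1+\overline{\psi}_\alpha\psi_\alpha\right)=\prod_{\alpha\in J\cup K}\left(1-\overline{\psi}_\alpha\psi_\alpha\right)$ is absorbed by the surviving monomial. What differs is only the decomposition: the paper groups the string into two pure blocks, $\left(\overline{\psi}_{i_1}\star\cdots\star\overline{\psi}_{i_m}\right)\star\left(\psi_{j_1}\star\cdots\star\psi_{j_n}\right)$, evaluates each block, and finishes with one mixed application of (\ref{starprod}) with $I=\left\{i_1,\dots,i_m\right\}$, $L=\left\{j_1,\dots,j_n\right\}$, $J=K=\emptyset$, whereas you run an induction on $m+n$ peeling off the leftmost generator. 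The comparison is mildly instructive in both directions. The paper's route is shorter on the page, but its evaluation of the pure blocks (``set $I=\left\{i_1,\dots,i_m\right\}$ and $J=K=L=\emptyset$ in (\ref{starprod})'') is really an iterated application of the two-monomial lemma --- precisely the induction you spell out --- so your write-up is the more explicit rendering of the same idea. Conversely, the block boundary you single out as the one delicate point is, in the paper's organization, exactly that final mixed application: there $J\cup K=\emptyset$, so the prefactor is $\prod_{\alpha\in M}\left(1-\overline{\psi}_\alpha\psi_\alpha\right)\left(1+\overline{\psi}_\alpha\psi_\alpha\right)=1$ identically, and no $\delta_{ij}$ correction can arise, since such terms require $S\neq\emptyset$, i.e.\ some $\psi_j$ standing to the left of an $\overline{\psi}_j$ as in (\ref{sz}), which a normal-ordered string never produces. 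So your boundary check is right, and it coincides with the observation the paper makes implicitly.
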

\begin{proof}
 We use the associativity $\overline{\psi}_{i_1}\star\cdots\star\overline{\psi}_{i_m}\star\psi_{j_1}\star\cdots\star\psi_{j_n}
=\left(\overline{\psi}_{i_1}\star\cdots\star\overline{\psi}_{i_m}\right)\star\left(\psi_{j_1}\star\cdots\star\psi_{j_n}\right)$
and calculate the brackets using Lemma~\ref{basicstar}. For the first bracket we set in (\ref{starprod}) $I=\left\{i_1,\dots,i_m\right\}$ and
$J=K=L=\emptyset$. For the second bracket we use $I=J=K=\emptyset$ and $L=\left\{j_1,\dots,j_n\right\}$. For both we 
have $\sigma_S=\sigma_{JS}=1$ and we conclude 
 \begin{align*}
 \overline{\psi}_{i_1}\star\cdots\star\overline{\psi}_{i_m}\star\psi_{j_1}\star\cdots\star\psi_{j_n}
    =\left(\overline{\psi}_{i_1}\cdots\overline{\psi}_{i_m}\right)\star\left(\psi_{j_1}\cdots\psi_{j_n}\right).
\end{align*}
The last star product can be calculated by setting $I=\left\{i_1,\dots,i_m\right\}$, $L=\left\{j_1,\dots,j_n\right\}$ and $J=K=\emptyset$ in (\ref{starprod}).
Again, $\sigma_S=\sigma_{JS}=1$ and we arrive at the assertion.
\end{proof}
We emphasize that
\begin{align*}
 \overline{\psi}_i\psi_j=\overline{\psi}_i\star\psi_j,\quad\text{but}\quad\psi_i\overline{\psi}_j=-\overline{\psi}_j\psi_i=
	-\overline{\psi}_j\star\psi_i.
\end{align*}
This implies that the star product can be inserted (or skipped) only if the monomial in $\psi$ and $\overline{\psi}$ is 
normal-ordered (i.e., all $\overline{\psi}$'s are to the left of all $\psi$'s). As follows from the proof, monomials containing only $\psi$'s or $\overline{\psi}$'s can also be considered as normal-ordered in the sense that
we can can identify
$\overline{\psi}_{i_1}\star\cdots\star\overline{\psi}_{i_m}=\overline{\psi}_{i_1}\cdots\overline{\psi}_{i_m}$ and 
$\psi_{j_1}\star\cdots\star\psi_{j_n}=\psi_{j_1}\cdots\psi_{j_n}$.
\begin{lem} \label{prodrule}
   Let $N\in\mathbbm{N}$ and $A_i\in\mathcal{B}\left(\wedge\cH\right)$ for $i\in\left\{1,\dots,N\right\}$. Then
\begin{align}
 \Theta\left(A_1 A_2\cdots A_N\right)=\Theta\left(A_1\right)
      \star\Theta\left(A_2\right)\star\dots\star\Theta\left(A_N\right). \label{starvswedge}
\end{align}
\end{lem}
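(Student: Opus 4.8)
The plan is to show that $\Theta$, although not multiplicative for the ordinary operator product, is an algebra homomorphism from $\big(\cB(\wedge\cH),\cdot\big)$ into $\big(\cG_M,\star\big)$; Equation~(\ref{starvswedge}) is precisely this homomorphism property for an $N$-fold product. First I would reduce to single generators. Since both $\Theta$ and the star product are (multi)linear and every $A_i\in\cB(\wedge\cH)$ is a linear combination of words in the generators $\e_k,\v_k$, expanding all the $A_i$ and collecting terms reduces the claim to the case in which each factor is a single creation or annihilation operator, so that $A_1\cdots A_N=g_1\cdots g_n$ with $g_r\in\{\e_k,\v_k\}_{k\in M}$. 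Using associativity of $\star$ (Lemma~\ref{assoz}) to regroup, it then suffices to establish
\begin{align*}
 \Theta\left(g_1\cdots g_n\right)=\Theta\left(g_1\right)\star\cdots\star\Theta\left(g_n\right).
\end{align*}

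The key observation is that the CAR for the operators $\e_k,\v_k$ and the star-CAR of Lemma~\ref{CARonG} have identical structure under the correspondence $\e_k\mapsto\overline{\psi}_k$, $\v_k\mapsto\psi_k$: the relations $\{\e_i,\e_j\}=\{\v_i,\v_j\}=0$, $\{\v_i,\e_j\}=\delta_{ij}$ on the one side match $\{\overline{\psi}_i,\overline{\psi}_j\}_\star=\{\psi_i,\psi_j\}_\star=0$, $\{\psi_i,\overline{\psi}_j\}_\star=\delta_{ij}$ on the other. I would therefore normal-order both sides in parallel. On the operator side, repeatedly applying the CAR to move all creation operators to the left of all annihilation operators expresses $g_1\cdots g_n=\sum_\alpha\lambda_\alpha N_\alpha$ as a linear combination of normal-ordered monomials $N_\alpha$, with scalars $\lambda_\alpha\in\mathbbm{C}$ recording the accumulated signs and contractions $\delta_{ij}$. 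Carrying out the same sequence of transpositions on $\Theta(g_1)\star\cdots\star\Theta(g_n)$ by means of the star-CAR produces $\sum_\alpha\lambda_\alpha\,(N_\alpha)_\star$ with exactly the same coefficients $\lambda_\alpha$, where $(N_\alpha)_\star$ denotes the corresponding normal-ordered $\star$-product of generators.

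To finish, I would invoke Corollary~\ref{sterneigen}, which identifies each normal-ordered $\star$-product of generators with the ordinary Grassmann monomial $\overline{\psi}_{i_1}\cdots\overline{\psi}_{i_m}\psi_{j_1}\cdots\psi_{j_n}$; by the defining Equation~(\ref{deda}) this monomial equals $\Theta(N_\alpha)$. Hence
\begin{align*}
 \Theta\left(g_1\right)\star\cdots\star\Theta\left(g_n\right)=\sum_\alpha\lambda_\alpha\,\Theta\left(N_\alpha\right)=\Theta\Big(\sum_\alpha\lambda_\alpha N_\alpha\Big)=\Theta\left(g_1\cdots g_n\right),
\end{align*}
where linearity of $\Theta$ is used in the last two steps.

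The step I expect to be the main obstacle is the bookkeeping guaranteeing that the coefficients $\lambda_\alpha$ obtained by normal-ordering coincide on both sides. This is where Lemma~\ref{CARonG} does the essential work: because each elementary transposition contributes the same sign and the same contraction term in $\cB(\wedge\cH)$ and in $(\cG_M,\star)$, and because both products are associative (Lemma~\ref{assoz}), the two normal-ordering procedures are term-by-term identical and the outcome is independent of the order in which the transpositions are performed. Alternatively, one could phrase the argument structurally: $\cB(\wedge\cH)$ is the quotient of the free algebra on $\{\e_k,\v_k\}$ by the CAR ideal, Lemma~\ref{CARonG} shows this ideal lies in the kernel of the $\star$-homomorphism sending the free generators to $\overline{\psi}_k,\psi_k$, and Corollary~\ref{sterneigen}, together with the fact that normal-ordered monomials form a basis of $\cB(\wedge\cH)$, identifies the induced homomorphism with $\Theta$.
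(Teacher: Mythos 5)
Your proof is correct and follows essentially the same route as the paper's: both arguments reduce the claim via linearity and associativity (Lemma~\ref{assoz}), exploit the fact that the operator CAR and the star-CAR of Lemma~\ref{CARonG} have identical structure so that normal-ordering produces the same coefficients on both sides, and then identify the normal-ordered star-monomials with $\Theta$-images via Corollary~\ref{sterneigen} and Equation~(\ref{deda}). The only difference is organizational: the paper reduces to $N=2$ with general factors and normal-orders the operator product before undoing that reordering on the Grassmann side, whereas you reduce to words of single generators and carry out the normal-ordering on both sides in parallel.
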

\begin{proof}
Due to the associativity of the star product it suffices to consider the assertion for $N=2$. We use the CAR 
to establish normal order in the product $A_1 A_2 \in\cB\left(\wedge\cH\right)$ and indicate this order
by $\substack{\bullet\\ \bullet}\, A_1 A_2\, \substack{\bullet\\ \bullet}$. For some $a_{\substack{i_1\dots i_m\\ j_1\dots j_n}}\in\mathbbm{C}$, we can write 
\begin{align*}
  \substack{\bullet\\ \bullet}\,A_1 A_2\,\substack{\bullet\\ \bullet}\ =\sum\limits_{m,n}\sum\limits_{\substack{i_1\dots i_m\\ j_1\dots j_n}\in M}a_{\substack{i_1\dots i_m\\ j_1\dots j_n}}
	\e_{i_1}\cdots\e_{i_m}\v_{j_1}\cdots\v_{j_n}
\end{align*}
and apply $\Theta$. Together with Corollary \ref{sterneigen} we arrive at
\begin{align}
 \Theta\left(\substack{\bullet\\ \bullet}\,A_1A_2\,\substack{\bullet\\ \bullet}\right)=\sum\limits_{m,n}\sum\limits_{\substack{i_1\dots i_m\\ j_1\dots j_n}\in M}a_{\substack{i_1\dots i_m\\ j_1\dots j_n}}\,
      \overline{\psi}_{i_1}\star\cdots\star\overline{\psi}_{i_m}\star\psi_{j_1}\star\cdots\star\psi_{j_m}. \label{the12}
\end{align}
Now we can use the CAR on $\cG_M$ to restore the same order we had in $A_1 A_2$ within the r.h.s.\ of (\ref{the12}) and recognize that it 
equals $\Theta\left(A_1\right)\star\Theta\left(A_2\right)$. In other words, we have
\begin{align*}
 \sum\limits_{m,n}\sum\limits_{\substack{i_1\dots i_m\\ j_1\dots j_n}\in M}a_{\substack{i_1\dots i_m\\ j_1\dots j_n}}\,
    \overline{\psi}_{i_1}\star\cdots\star\overline{\psi}_{i_m}\star\psi_{j_1}\star\cdots\star\psi_{j_m}
     =\ \substack{\bullet\\ \bullet}\,\Theta\left(A_1\right)\star\Theta\left(A_2\right)\,\substack{\bullet\\ \bullet},
\end{align*}
which gives the assertion. 
\end{proof} 
We can equip $\left(\cG_M, +, \star\right)$ with an involution $\left(\,\cdot\,\right)^*$ such
that $\left(\cG_M, +, \star\right)$ becomes a *-algebra. 
\begin{defn} \label{invol}
 For all $\mu_i\in\cG_M$, $i\in\mathbbm{N}$, and $c\in\mathbbm{C}$, the involution $\left(\,\cdot\,\right)^*$ on $\left(\cG_M, +, \star\right)$ is defined by
 $\left(\psi_i\right)^*:=\overline{\psi}_i$ and $\left(\overline{\psi}_i\right)^*:=\psi_i$ $\forall i\in M$, and 
 \begin{align*}
  \left(c\,\mu_1\cdots \mu_n\right)^*:=\overline{c}\,\mu_n^*\cdots \mu_1^*.
 \end{align*}
\end{defn}
\begin{rem} \label{invorul}
 For $\mu\equiv\mu\left(\overline{\psi},\phi\right):=\sum\limits_{I,J}a_{IJ}\,\overline{\Psi}_I\Phi_J$ and $a_{IJ}\in\mathbbm{C}$, the involution
 $\mu^*$ is given by $\mu^*\left(\overline{\phi},\psi\right)=\sum\limits_{I,J}\overline{a}_{IJ}\,\overline{\Phi}_{J'}\Psi_{I'}=
  \sum\limits_{I,J}\left(-1\right)^{\frac{1}{2}|I|\left(|I|-1\right)+\frac{1}{2}|J|\left(|J|-1\right)}
      \overline{\alpha}_{IJ}\,\overline{\Phi}_J\Psi_I$.
 We emphasize that $\left(\mu\left(\overline{\psi},\phi\right)\right)^*=\mu^*\left(\overline{\phi},\psi\right)\neq \left(\mu\left(\overline{\phi},\psi\right)\right)^*$.
\end{rem}
\begin{lem}
The involution in Definition \ref{invol} is compatible with $\Theta$, the Grassmann integration, and the star product:
\begin{enumerate}
 \item[a)] $\Theta\left(\left(\,\cdot\,\right)^*\right)=\left(\Theta\left(\,\cdot\,\right)\right)^*$,
 \item[b)] $\int\mathrm{d}(\overline{\Psi},\Psi)\left(\,\cdot\,\right)^*=
	      \left[\int\mathrm{d}(\overline{\Psi},\Psi)\left(\,\cdot\,\right)\right]^*$,
 \item[c)] $\left(\mu\star \eta\right)^*=\eta^*\star \mu^*$.
\end{enumerate}
\end{lem}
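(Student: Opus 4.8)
The three identities are proved in the order a), b), c), with a) and the product rule (Lemma~\ref{prodrule}) feeding directly into c).

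\emph{Part a).} Both sides of $\Theta((\,\cdot\,)^*)=(\Theta(\,\cdot\,))^*$ are antilinear in their argument: $\Theta$ is complex-linear, whereas the adjoint on $\cB(\wedge\cH)$ and the involution on $\cG_M$ are antilinear, so $A\mapsto\Theta(A^*)$ and $A\mapsto(\Theta(A))^*$ are both antilinear. Hence it suffices to verify the identity on a spanning set, and the normal-ordered monomials $A=\e(\psi_{i_1})\cdots\e(\psi_{i_m})\v(\psi_{j_1})\cdots\v(\psi_{j_n})$ form a basis of $\cB(\wedge\cH)$ (a dimension count gives $4^{|M|}=(\mathrm{dim}\,\wedge\cH)^2$ of them). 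For such an $A$ the operator adjoint reverses the factors and exchanges $\e\leftrightarrow\v$, so $A^*=\e(\psi_{j_n})\cdots\e(\psi_{j_1})\v(\psi_{i_m})\cdots\v(\psi_{i_1})$ is again normal-ordered, and (\ref{deda}) gives $\Theta(A^*)=\overline{\psi}_{j_n}\cdots\overline{\psi}_{j_1}\psi_{i_m}\cdots\psi_{i_1}$. On the other hand, the Grassmann involution applied to $\Theta(A)=\overline{\psi}_{i_1}\cdots\overline{\psi}_{i_m}\psi_{j_1}\cdots\psi_{j_n}$ reverses the generators and exchanges $\overline{\psi}\leftrightarrow\psi$, producing exactly the same monomial. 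The two sides therefore agree on the basis, hence everywhere. The point that must be checked is precisely that \emph{no} spurious sign appears; it does not, because the operator adjoint and the Grassmann involution reverse the order of factors in the same way.

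\emph{Part b).} The Grassmann integral annihilates every monomial $\overline{\Psi}_I\Psi_J$ except $\overline{\Psi}_M\Psi_M$, so for $\mu=\sum_{I,J}a_{IJ}\overline{\Psi}_I\Psi_J$ one has $\int\mathrm{d}(\overline{\Psi},\Psi)\,\mu=s\,a_{MM}$, where $s:=\int\mathrm{d}(\overline{\Psi},\Psi)\,\overline{\Psi}_M\Psi_M\in\{\pm 1\}\subset\mathbbm{R}$ is a fixed real sign. By Remark~\ref{invorul} the coefficient of $\overline{\Psi}_M\Psi_M$ in $\mu^*$ equals $\overline{a}_{MM}$, the accompanying factor $(-1)^{|M|(|M|-1)}$ being $+1$ since $|M|(|M|-1)$ is even. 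Hence $\int\mathrm{d}(\overline{\Psi},\Psi)\,\mu^*=s\,\overline{a}_{MM}$, whereas $[\int\mathrm{d}(\overline{\Psi},\Psi)\,\mu]^*=\overline{s\,a_{MM}}=s\,\overline{a}_{MM}$ because $s$ is real; the two expressions coincide.

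\emph{Part c).} Here I would avoid a direct computation via Lemma~\ref{basicstar} (which would force one to track the signs $\sigma_S,\sigma_{JS}$) and instead exploit that $\Theta$ is a vector-space isomorphism intertwining operator multiplication with $\star$. Writing arbitrary $\mu,\eta\in\cG_M$ as $\mu=\Theta(A)$, $\eta=\Theta(B)$ with $A,B\in\cB(\wedge\cH)$, Lemma~\ref{prodrule} gives $\mu\star\eta=\Theta(AB)$, and then part a) together with anti-multiplicativity of the operator adjoint yields
\begin{align*}
 (\mu\star\eta)^*=(\Theta(AB))^*=\Theta((AB)^*)=\Theta(B^*A^*)=\Theta(B^*)\star\Theta(A^*)=\eta^*\star\mu^*,
\end{align*}
using Lemma~\ref{prodrule} once more and part a) in the last two steps. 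The only genuine obstacle is part a): once the sign bookkeeping there is settled, b) follows from reading off a single coefficient and c) is an immediate consequence of the functoriality of $\Theta$.
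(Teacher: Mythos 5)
Your proof is correct, and while your parts a) and b) run along the same lines as the paper's, your part c) takes a genuinely different route. For a) the paper performs exactly your computation in summation notation: it writes $A=\sum_{I,J}a_{IJ}\,C^*_I C_J$ with $C^*_I:=\e_{i_1}\cdots\e_{i_m}$, $C_J:=\v_{j_1}\cdots\v_{j_n}$, and checks that the operator adjoint and the Grassmann involution both reverse the factors and exchange starred and unstarred objects, so that $\left(\Theta\left(A\right)\right)^*=\sum_{I,J}\overline{a}_{IJ}\,\overline{\Psi}_{J'}\Psi_{I'}=\Theta\left(A^*\right)$; your remark that no spurious sign appears is precisely the content of that computation. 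For b) the paper is far terser --- it merely asserts the formal self-adjointness $\left(\frac{\delta}{\delta\overline{\psi}_i}\frac{\delta}{\delta\psi_i}\right)^*\mu=\frac{\delta}{\delta\overline{\psi}_i}\frac{\delta}{\delta\psi_i}\mu$ --- so your top-coefficient argument (the integral extracts $a_{MM}$ with a fixed real sign, and by Remark~\ref{invorul} the corresponding coefficient of $\mu^*$ is $\overline{a}_{MM}$ because $\left(-1\right)^{|M|\left(|M|-1\right)}=1$) is a more explicit justification of the same fact. The real divergence is in c): the paper stays inside the Grassmann calculus, applying b) to the defining integral of $\mu\star\eta$ and using $\left(\mathrm{e}^{\left(\,\cdot\,\right)}\right)^*=\mathrm{e}^{\left(\,\cdot\,\right)}$ together with Remark~\ref{invorul} to recognize the transformed integrand as that of $\eta^*\star\mu^*$; you instead transport the anti-multiplicativity of the operator adjoint through the vector-space isomorphism $\Theta$, invoking Lemma~\ref{prodrule} twice and part a). Your route is legitimate: Lemma~\ref{prodrule} precedes this lemma, its proof nowhere uses the involution, and $\Theta$ is a bijection by construction (it maps the basis of normal-ordered monomials onto the basis $\left\{\overline{\Psi}_I\Psi_J\right\}$), so there is no circularity, and you gain complete freedom from kernel manipulation and sign bookkeeping. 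What the paper's direct computation buys in exchange is independence from Lemma~\ref{prodrule}: its c) is established purely within the star-product formalism, so the compatibility of the involution with $\star$ holds as a statement about the Grassmann integral alone, without appealing to the operator picture it is meant to replace.
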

\begin{proof}
 We prove $a)$ and $b)$. $c)$ is a consequence of $b)$.
 \begin{enumerate}
  \item[a)] For any $I,J\subseteq M$, we abbreviate 
	  $C^*_I:=\e_{i_1}\cdots\e_{i_m}$ and $C_J:=\v_{j_1}\cdots\v_{j_n}$ and write any $A\in\cB\left(\cH\right)$ as
	  $A=\sum\limits_{I,J}a_{IJ}\,C^*_I C_J$ for some $a_{IJ}\in\mathbbm{C}$. This leads to
	  \begin{align*}
	  \nonumber \left(\Theta\left(A\right)\right)^*&=\bigg(\sum\limits_{I,J}a_{IJ}\,\overline{\Psi}_I\Psi_J\bigg)^*
	    =\sum\limits_{I,J}\overline{a}_{IJ}\,\overline{\Psi}_{J'}\Psi_{I'}=\Theta\bigg(\sum\limits_{I,J}\overline{a}_{IJ}\,C^*_{J'}C_{I'}\bigg)\\
	    &=\Theta\bigg(\bigg(\sum\limits_{I,J}a_{IJ}\,C^*_I C_J\bigg)^*\bigg)=\Theta\left(A^*\right).
	  \end{align*}
  \item[b)] For a fixed, but arbitrary $i\in M$ and $\mu\in\cG_M$ we formally have
	    $\left(\frac{\delta}{\delta\overline{\psi}_i}\frac{\delta}{\delta\psi_i}\right)^*\mu=\frac{\delta}{\delta\overline{\psi}_i}\frac{\delta}{\delta\psi_i}\mu$,
	    which gives the assertion.
  \item[c)] We calculate the l.h.s. of $c)$ according to $b)$ and Remark \ref{invorul}:
	\begin{align*}
\nonumber	 \left(\mu\star\eta\right)^*&=
	  \int\mathrm{d}(\overline{\Phi},\Phi)\,
      \eta^*\left(\overline{\psi},\phi\right)\mu^*\left(\overline{\phi},\psi\right)
      \mathrm{e}^{-\left(\overline{\Psi},\Psi\right)}
      \mathrm{e}^{\left(\overline{\Psi},\Phi\right)}
      \mathrm{e}^{-\left(\overline{\Phi},\Phi\right)}
      \mathrm{e}^{\left(\overline{\Phi},\Psi \right)}\\
      &=\eta^*\star\mu^*,
	\end{align*}
    since $\left(\mathrm{e}^{\left(\,\cdot\,\right)}\right)^*=\mathrm{e}^{\left(\,\cdot\,\right)}$.
 \end{enumerate}
\end{proof}
A key property of the Grassmann integral for deriving representability conditions as in the next section is the cyclicity property which has its equivalent in the cyclicity of the trace, i.e.,
$\mathrm{tr}\left\{AB\right\}=\mathrm{tr}\left\{BA\right\}$.
\begin{thm}  \label{cyclic1}
 For $\mu,\eta\in\cG_M$, we have
\begin{align*}
 \int\cD(\overline{\Psi},\Psi)\left(\mu\star\eta\right)  
   =\int\cD(\overline{\Psi},\Psi)\left(\eta\star\mu\right)  .
\end{align*}
\end{thm}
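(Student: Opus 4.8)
The plan is to transport the identity to the operator algebra $\cB(\wedge\cH)$ through the map $\Theta$ and then to read it off from the cyclicity of the ordinary trace. Since $\Theta:\cB(\wedge\cH)\to\cG_M$ is a vector space isomorphism, I would write $\mu=\Theta(A)$ and $\eta=\Theta(B)$ with uniquely determined $A,B\in\cB(\wedge\cH)$. Lemma~\ref{prodrule} (for $N=2$) then converts the star products into operator products, $\mu\star\eta=\Theta(AB)$ and $\eta\star\mu=\Theta(BA)$, so the assertion becomes the cyclicity $\tau(AB)=\tau(BA)$ of the linear functional $\tau(X):=\int\cD(\overline\Psi,\Psi)\,\Theta(X)$ on $\cB(\wedge\cH)$.

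To obtain this, I would show that $\tau$ is a scalar multiple of the trace, $\tau=c\cdot\mathrm{tr}_{\wedge\cH}$ with a constant $c$ independent of $X$; cyclicity of $\tau$ is then inherited from $\mathrm{tr}_{\wedge\cH}\{AB\}=\mathrm{tr}_{\wedge\cH}\{BA\}$. By linearity it suffices to compare the two functionals on the basis of normal-ordered monomials $C^*_I C_J$, $I,J\subseteq M$, for which $\Theta(C^*_I C_J)=\overline\Psi_I\Psi_J$. Using $\mathrm{e}^{2(\overline\Psi,\Psi)}=\prod_\alpha(1+2\overline\psi_\alpha\psi_\alpha)$ and the fact that a surviving term of the Grassmann integral must supply every $\overline\psi_\alpha$ and every $\psi_\alpha$ exactly once, an index-balance count forces $I=J$ (a chosen pair $2\overline\psi_\alpha\psi_\alpha$ carries a barred and an unbarred generator at once), and for $I=J$ the pair is forced precisely for $\alpha\notin I$. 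Reordering $\overline\Psi_I\Psi_I$ into $\prod_{\alpha\in I}\overline\psi_\alpha\psi_\alpha$ then gives $\tau(C^*_I C_I)=(-1)^{\frac{1}{2}|I|(|I|-1)}\,2^{|M|-|I|}\,c$ with $c:=\int\mathrm{d}(\overline\Psi,\Psi)\prod_{\alpha}\overline\psi_\alpha\psi_\alpha$. On the operator side the same reordering gives $C^*_I C_I=(-1)^{\frac{1}{2}|I|(|I|-1)}\prod_{\alpha\in I}\e_\alpha\v_\alpha$, whose Fock-space trace counts the occupation sets containing $I$ and hence equals $(-1)^{\frac{1}{2}|I|(|I|-1)}\,2^{|M|-|I|}$, while for $I\neq J$ the operator shifts the occupation and is traceless. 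Comparing, $\tau=c\cdot\mathrm{tr}_{\wedge\cH}$, as desired.

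The hard part is this basis computation, and within it the sign bookkeeping: the permutation sign $(-1)^{\frac{1}{2}|I|(|I|-1)}$ produced by rearranging $\overline\Psi_I\Psi_I$ into mode-factorized form must coincide with the one produced by rearranging $C^*_I C_I$ into $\prod_{\alpha\in I}\e_\alpha\v_\alpha$. This coincidence is exactly what makes $c$ genuinely independent of $I$, so that $\tau$ is an honest multiple of the trace rather than a sign-twisted variant; I would verify it together with the vanishing of both $\tau(C^*_I C_J)$ and $\mathrm{tr}_{\wedge\cH}\{C^*_I C_J\}$ for $I\neq J$. A more pedestrian alternative avoids $\tau$ entirely: reduce to monomials $\mu=\overline\Psi_I\Psi_J$, $\eta=\overline\Psi_K\Psi_L$, insert the explicit star product of Lemma~\ref{basicstar}, simplify using $\mathrm{e}^{2(\overline\Psi,\Psi)}\mathrm{e}^{-(\overline\Psi,\Psi)}=\mathrm{e}^{(\overline\Psi,\Psi)}$, and match the resulting combinatorial sign against the one arising from $\eta\star\mu$; the obstacle there is once more the signs, which the $\Theta$-route organizes far more transparently.
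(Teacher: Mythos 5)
Your proof is correct, but it takes a genuinely different route from the paper's. The paper proves Theorem~\ref{cyclic1} entirely inside the Grassmann algebra: it reduces to monomials $\mu=\overline{\Psi}_I\Psi_J$, $\eta=\overline{\Psi}_K\Psi_L$, inserts the explicit star product of Lemma~\ref{basicstar}, tracks the intersection sets $S=J\cap K$ and $T=I\cap L$, derives the vanishing conditions ($I\cup K\neq J\cup L$ kills both integrals, the ``chequerboard''), and finishes by checking that the two anticommutation signs $\tilde{\sigma}$ and $\widehat{\sigma}$ coincide once $I\backslash T=J\backslash S$ and $L\backslash T=K\backslash S$ are enforced --- exactly the ``pedestrian alternative'' you sketch at the end. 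You instead transport the identity through $\Theta$: writing $\mu=\Theta(A)$, $\eta=\Theta(B)$ and invoking Lemma~\ref{prodrule}, the claim becomes $\tau(AB)=\tau(BA)$ for $\tau(X):=\int\cD(\overline{\Psi},\Psi)\,\Theta(X)$, which follows once $\tau$ is shown to be a multiple of $\mathrm{tr}_{\wedge\cH}$. That last step is precisely the paper's own trace formula (\ref{TRACE}), which the paper states and proves \emph{after} Theorem~\ref{cyclic1} but by a basis computation identical to yours and independent of cyclicity; likewise Lemma~\ref{prodrule} rests only on Lemma~\ref{basicstar}, associativity, and the CAR on $\cG_M$, so your argument has no circularity --- it simply reorders the paper's exposition, proving the trace representation first and obtaining cyclicity as a corollary. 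What your route buys is conceptual economy and far lighter sign bookkeeping: the only sign is the single reordering factor $(-1)^{\frac{1}{2}|I|(|I|-1)}$, which appears identically on the Grassmann and operator sides. What the paper's direct computation buys, and yours does not, is the explicit evaluation of $\int\cD(\overline{\Psi},\Psi)\,\mu\star\eta$ for monomials and the associated vanishing conditions, which the paper reuses later: Remark~\ref{intlsg} records the closed formula, and the proof of Theorem~\ref{positivity} cites Equation~(\ref{laseq}) to discard cross terms. So if your proof were substituted into the paper, those later references would need to be re-derived separately.
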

\begin{proof}
Without loss of generality, we can set  
\begin{align*}
 \mu:=\overline{\Psi}_I\Psi_J\quad\text{and}\quad \eta:=\overline{\Psi}_K\Psi_L
\end{align*}
and observe with (\ref{starprod}) and $T:=I\cap L$ 
\begin{align*}
 &\int\cD(\overline{\Psi},\Psi)\,\mu\star\eta=\sigma_S\sigma_T \sigma_{JS}\int\cD(\overline{\Psi},\Psi)
    \cdot\mathrm{e}^{-\left(\overline{\Psi},\Psi\right)}\\
    &\qquad\qquad\qquad\qquad\qquad\qquad\times\overline{\Psi}_T\overline{\Psi}_{I\backslash T}\prod\limits_{\alpha\in J\backslash S}\psi_\alpha
		\prod\limits_{\alpha\in K\backslash S}\overline{\psi}_\alpha
		\prod\limits_{\substack{\alpha\in M\\ \backslash\left(J\cup K\right)}}\left(1+\overline{\psi}_\alpha
	  \psi_\alpha\right)\Psi_T\Psi_{L\backslash T}.
\end{align*}
Afterwards, we rearrange the factors and arrive at
\begin{align}
\nonumber &\int\cD(\overline{\Psi},\Psi)\,\mu\star\eta=\sigma_S\sigma_T\tilde{\sigma}\int\mathrm{d}(\overline{\Psi},\Psi)\,
        \overline{\Psi}_{I\backslash T}\overline{\Psi}_{K\backslash S}\Psi_{J\backslash S}\Psi_{L\backslash T}\prod\limits_{\alpha\in T}\overline{\psi}_\alpha\psi_\alpha\\
  &\qquad\qquad\qquad\qquad\qquad\qquad\times  \prod\limits_{\alpha\in M}\left(1+\overline{\psi}_\alpha\psi_\alpha\right) 
 \prod\limits_{\substack{\alpha\in M\\ \backslash\left(J\cup K\right)}}\left(1+\overline{\psi}_\alpha\psi_\alpha\right) ,    \label{Int2}
\end{align}
where $\tilde{\sigma}\in\left\{\pm 1\right\}$ corresponds to the signs resulting from the anticommutations and is given by 
\begin{align*}
           \tilde{\sigma}:=\left(-1\right)^{|S||J\backslash S|+|T||K\backslash S|+\frac{1}{2}|S|\left(|S|-1\right)
      +\frac{1}{2}|T|\left(|T|-1\right)+|T||J\backslash S|+|T||I\backslash T|+|K\backslash S||J\backslash S|}.                 
\end{align*}
To go on, we need some preparation. First of all, we observe that
\begin{align*}
\nonumber &\prod\limits_{\alpha\in M}\left(1+\overline{\psi}_\alpha\psi_\alpha\right)
\prod\limits_{\substack{\alpha\in M\\ \backslash\left(J\cup K\right)}}\left(1+\overline{\psi}_\alpha\psi_\alpha\right)
=\prod\limits_{\substack{\alpha\in M\\ \backslash\left(J\cup K\right)}}\left(1+2\overline{\psi}_\alpha\psi_\alpha\right)
  \prod\limits_{\alpha\in J\cup K}\left(1+\overline{\psi}_\alpha\psi_\alpha\right).
\end{align*}
On the one hand, we have $J\cup K=\left(J\backslash S\right)\dot{\cup}\left(K\backslash S\right)\dot{\cup} S$, which implies:
\begin{align*}
   \prod\limits_{\alpha\in J\cup K}\left(1+\overline{\psi}_\alpha\psi_\alpha\right) \overline{\Psi}_{K\backslash S}\Psi_{J\backslash S}
  =  \prod\limits_{\alpha\in S}\left(1+\overline{\psi}_\alpha\psi_\alpha\right) \overline{\Psi}_{K\backslash S}\Psi_{J\backslash S}.
\end{align*}
On the other hand, we have by the same arguments
\begin{align*}
\nonumber & \prod\limits_{\substack{\alpha\in M\\ \backslash\left(J\cup K\right)}}\left(1+2\overline{\psi}_\alpha\psi_\alpha\right) 
    \overline{\Psi}_{I\backslash T}\overline{\Psi}_{K\backslash S}\Psi_{J\backslash S}\Psi_{L\backslash T}\prod\limits_{\alpha\in T}\overline{\psi}_\alpha\psi_\alpha\\
  &\qquad\qquad\qquad= \prod\limits_{\substack{\alpha\in M\\ \backslash\left(J\cup K\cup I\cup L\right)}}\left(1+2\overline{\psi}_\alpha\psi_\alpha\right) 
    \overline{\Psi}_{I\backslash T}\overline{\Psi}_{K\backslash S}\Psi_{J\backslash S}\Psi_{L\backslash T}\prod\limits_{\alpha\in T}\overline{\psi}_\alpha\psi_\alpha,
\end{align*}
since $I\cup L\equiv\left(I\backslash T\right)\dot{\cup} \left(L\backslash T\right)\dot{\cup} T$. Consequently, our latter calculations
lead in~(\ref{Int2}) to 
\begin{align}
 \nonumber &\int\cD(\overline{\Psi},\Psi)\,\mu\star\eta=\sigma_S\sigma_T\tilde{\sigma}
  \int\mathrm{d}(\overline{\Psi},\Psi)\,
        \overline{\Psi}_{I\backslash T}\overline{\Psi}_{K\backslash S}\Psi_{J\backslash S}\Psi_{L\backslash T}\prod\limits_{\alpha\in T}\overline{\psi}_\alpha\psi_\alpha\\
  &\qquad\qquad\qquad\qquad\qquad\times  \prod\limits_{\alpha\in S}\left(1+\overline{\psi}_\alpha\psi_\alpha\right) 
 \prod\limits_{\substack{\alpha\in M\\ \backslash\left(J\cup K\cup I\cup L\right)}}\left(1+2\overline{\psi}_\alpha\psi_\alpha\right) . \label{laseq}
\end{align}
Let us take a closer look at the involved sets. First of all, we observe
\begin{enumerate}
 \item[(I)] $K\backslash S \cap J\backslash S=\emptyset$
 \item[(II)] $I\cup \left(K\backslash S\right) = L\cup \left(J\backslash S\right)$
 \item[(III)] $I\cap \left(K\backslash S\right) = \emptyset$
 \item[(IV)] $L\cap \left(J\backslash S\right) = \emptyset$.
\end{enumerate}
In any other case we have $\int\cD(\overline{\Psi},\Psi)\,\mu\star\eta=\int\cD(\overline{\Psi},\Psi)\,\eta\star\mu=0$. 
These observations have some consequences:
\begin{enumerate}
 \item[a)] (II) and (I) $\Rightarrow$ $\left(K\backslash S\right)\subseteq L$ and $\left(J\backslash S\right)\subseteq I$
  $\Rightarrow$ $\exists$ $T_1,T_2\subseteq M$ s.th. $I=\left(J\backslash S\right)\dot{\cup} T_1$ and $L=\left(K\backslash S\right)\dot{\cup} T_2$.
 \item[b)] (III) and $I=\left(J\backslash S\right)\dot{\cup} T_1$ $\Rightarrow$ $\left(\left(J\backslash S\right)\dot{\cup} T_1\right)\cap \left(K\backslash S\right)=\emptyset$
	    $\Rightarrow$ $T_1\cap K\backslash S=\emptyset$. Analogously: (IV) and $L=\left(K\backslash S\right)\dot{\cup} T_1$ $\Rightarrow$ 
	    $T_2\cap\left(J\backslash S\right)=\emptyset$.
 \item[c)] (II) and b) $\Rightarrow$ $T_1=T_2$, since all sets on the l.h.s. and r.h.s. of (II) are disjoint, respectively.
 \item[d)] a), b) and c) $\Rightarrow$ $L\cap I=\left(\left(K\backslash S\right)\dot{\cup} T_1\right)\cap \left(\left(J\backslash S\right)\dot{\cup} T_2\right)=T_1\cap T_2=:T$.
\end{enumerate}
Back to a), we see that $I=\left(J\backslash S\right)\dot{\cup} T$ or $I\backslash T=J\backslash S$, and that 
    $L=\left(K\backslash S\right)\dot{\cup} T$ implies $L\backslash T=K\backslash S$. This is illustrated in the 
following figure.
\begin{figure}[h]
    \centering
    \includegraphics[scale=0.6]{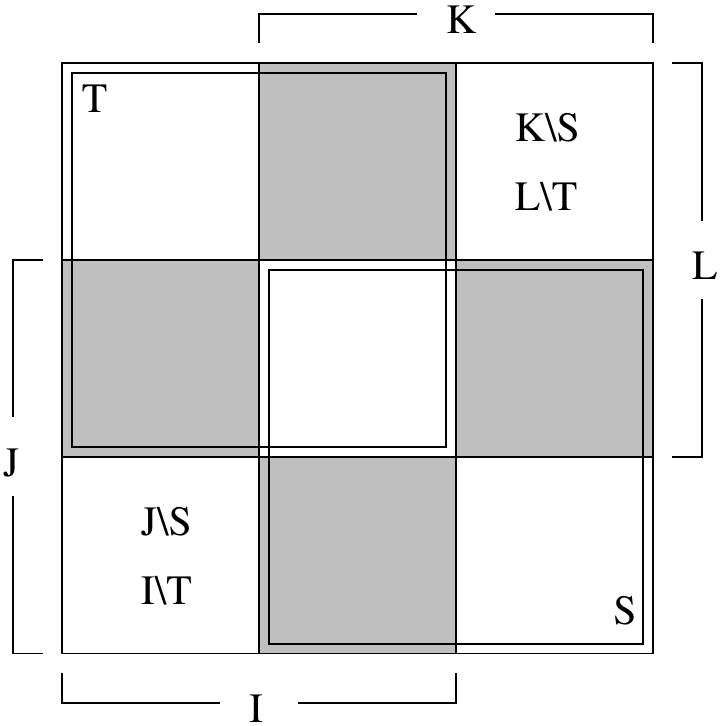}
    \captionsetup{margin=23pt,font=small,labelfont=bf,format=hang}
    \caption*{Breteaux chequerboard: {\emph{The integrals vanish if $J\cup L \neq I\cup K$. $S:=J\cap K$ and $T:=I\cap L$. Grey areas represent empty subsets.}}}
\end{figure}\\
We go on in (\ref{laseq}) and take the intersection $S\cap T$ into account. The term
$\prod\limits_{\alpha\in T}\overline{\psi}_\alpha\psi_\alpha \prod\limits_{\alpha\in S}\left(1+\overline{\psi}_\alpha\psi_\alpha\right) $
contributes to the the integral as follows:
\begin{align*}
\nonumber& \prod\limits_{\alpha \in T\cup S}\frac{\delta}{\delta\overline{\psi}_\alpha}\frac{\delta}{\delta\psi_\alpha} 
  \prod\limits_{\alpha\in T}\overline{\psi}_\alpha\psi_\alpha\prod\limits_{\beta\in S}\left(1+\overline{\psi}_\beta\psi_\beta\right)=\prod\limits_{\alpha \in T\cup S}\frac{\delta}{\delta\overline{\psi}_\alpha}\frac{\delta}{\delta\psi_\alpha}
  \prod\limits_{\alpha\in T\cup S}\overline{\psi}_\alpha\psi_\alpha,
\end{align*}
since $\prod\limits_{\alpha\in T\cap S}\overline{\psi}_\alpha\psi_\alpha\prod\limits_{\beta\in T\cap S}\left(1+\overline{\psi}_\beta\psi_\beta\right)
  =\prod\limits_{\alpha\in T\cap S}\overline{\psi}_\alpha\psi_\alpha$ and
\begin{align*}
 &\prod\limits_{\alpha \in S\backslash \left(T\cap S\right)}\frac{\delta}{\delta\overline{\psi}_\alpha}\frac{\delta}{\delta\psi_\alpha}
	\prod\limits_{\beta\in S\backslash \left(T\cap S\right)}\left(1+\overline{\psi}_\beta\psi_\beta\right)
=\prod\limits_{\alpha \in S\backslash \left(T\cap S\right)}\frac{\delta}{\delta\overline{\psi}_\alpha}\frac{\delta}{\delta\psi_\alpha}
	\prod\limits_{\beta\in S\backslash \left(T\cap S\right)}\overline{\psi}_\beta\psi_\beta.
\end{align*}
This finishes our calculations and we conclude:
\begin{align}
 \nonumber &\int\cD(\overline{\Psi},\Psi)\,\mu\star\eta=\sigma_S\sigma_T\tilde{\sigma}
    \int\mathrm{d}(\overline{\Psi},\Psi)\,
        \prod\limits_{\alpha\in T\cup S}\overline{\psi}_\alpha\psi_\alpha\\
  &\qquad\qquad\qquad\qquad\qquad\times 
 \prod\limits_{\substack{\alpha\in M\\ \backslash\left(J\cup K\cup I\cup L\right)}}\left(1+2\overline{\psi}_\alpha\psi_\alpha\right) 
      \overline{\Psi}_{I\backslash T}\overline{\Psi}_{K\backslash S}\Psi_{J\backslash S}\Psi_{L\backslash T}.    \label{endcalc}
\end{align}
The r.h.s.\ of the assertion in Theorem~\ref{cyclic1} can be calculated analogously. The result is
\begin{align*}
 \nonumber &\int\cD(\overline{\Psi},\Psi)\,\eta\star\mu=\sigma_T\sigma_S\widehat{\sigma}\int\mathrm{d}(\overline{\Psi},\Psi)\,
        \prod\limits_{\alpha\in S\cup T}\overline{\psi}_\alpha\psi_\alpha\\
  &\qquad\qquad\qquad\qquad\qquad\times 
 \prod\limits_{\substack{\alpha\in M\\ \backslash\left(J\cup K\cup I\cup L\right)}}\left(1+2\overline{\psi}_\alpha\psi_\alpha\right) 
      \overline{\Psi}_{K\backslash S}\overline{\Psi}_{I\backslash T}\Psi_{L\backslash T}\Psi_{J\backslash S},    
\end{align*}
where the sign resulting from the anticommutations is
\begin{align*}
           \widehat{\sigma}:=\left(-1\right)^{|T||L\backslash T|+|S||L\backslash T|+\frac{1}{2}|S|\left(|S|-1\right)
      +\frac{1}{2}|T|\left(|T|-1\right)+|S||I\backslash T|+|S||K\backslash S|+|I\backslash T||L\backslash T|}.                 
\end{align*}
The l.h.s. and the r.h.s. of Theorem~\ref{cyclic1} are symmetric with respect to the involved sets. The proof is complete by the observation
\begin{align*}
 \tilde{\sigma}=\widehat{\sigma}=\left(-1\right)^{\frac{1}{2}|S|\left(|S|-1\right)
      +\frac{1}{2}|T|\left(|T|-1\right)+|K\backslash S||J\backslash S|+|T||K\backslash S|+|S||J\backslash S|},
\end{align*}
which follows from $I\backslash T=J\backslash S$ and $L\backslash T=K\backslash S$.
\end{proof}
\begin{rem}\label{intlsg}
 The integral on the r.h.s.\ of (\ref{endcalc}) can be carried out. Abbreviating $s_Q:=\frac{1}{2}|Q|\left(|Q|-1\right)$ for $Q\subseteq M$, we have
\begin{align*}
\nonumber \int\cD(\overline{\Psi},\Psi)\,\mu\star\eta&=\sigma_S\sigma_T\left(-1\right)^{s_S+s_T+|T||K\backslash S|+|S||J\backslash S|+s_{I\backslash T}+s_{K\backslash S}}\\
\nonumber&\qquad\times\int\mathrm{d}(\overline{\Psi},\Psi)\,\prod\limits_{\alpha\in I\backslash T}\overline{\psi}_\alpha\psi_\alpha\prod\limits_{\alpha\in K\backslash S}\overline{\psi}_\alpha\psi_\alpha\prod\limits_{\alpha\in T\cup S}\overline{\psi}_\alpha\psi_\alpha
\prod\limits_{\alpha\in M\backslash \left(I\cup K\right)}\left(1+2\overline{\psi}_\alpha\psi_\alpha\right)\\
\nonumber &=\sigma_S\sigma_T\left(-1\right)^{s_S+s_T+|T||K\backslash S|+|S||J\backslash S|+s_{I\backslash T}+s_{K\backslash S}}\\
\nonumber &\qquad\times\left(-1\right)^{|I\backslash T|+|K\backslash S|+|T\cup S|}\left(-2\right)^{|M|-|I\cup K|}.
\end{align*}
With $|I\backslash T|+|K\backslash S|+|T\cup S|=|I\cup K|$ we obtain
\begin{align*}
 \int\cD(\overline{\Psi},\Psi)\,\mu\star\eta=\sigma_S\sigma_T\left(-1\right)^{s_J+s_L}2^{|M|-|I\cup K|}
\end{align*}
for $\mu:=\overline{\Psi}_I\Psi_J$ and $\eta:=\overline{\Psi}_K\Psi_L$.
\end{rem}
\begin{rem}
A consequence of Lemma~\ref{assoz} and \ref{cyclic1} is the invariance of the Grassmann integral with respect to cyclic permutations of the integrand:
\begin{align}
\int\mathrm{d}(\overline{\Psi},\Psi)\left(\mu_1\star\mu_2\star\cdots\star\mu_N\right)
	      =\int\mathrm{d}(\overline{\Psi},\Psi)\left(\mu_2\star\cdots\star\mu_N\star\mu_1\right).
	  \label{cyclic2}
\end{align}
This also holds true for $\int\cD(\overline{\Psi},\Psi)\left(\,\cdot\,\right)$, since $\mathrm{e}^{2\left(\overline{\Psi},\Psi\right)}$
commutes with any $\mu\in\cG_M$.
\end{rem}

Given an involution on $\left(\cG_M, +, \star\right)$, we define the 
property of positivity on $\cG_M$ as follows.
\begin{defn}
 We call $\mu\in\cG_M$ positive semi-definite, shortly $\mu\geq 0$, if there exists
an $\eta \in\cG_M$ such that
\begin{align*}
\mu= \eta^*\star\eta.
\end{align*}
\end{defn}

Approaching the problem of representability by Grassmann integration, an important result is the following theorem.
\begin{thm} \label{positivity}
 For any $\mu\in\cG_M$ with $\mu\geq 0$ we have
\begin{align} 
 \left(-1\right)^{|M|}\int\cD(\overline{\Psi},\Psi)\, \mu  \geq 0 . \label{posit}
\end{align}
\end{thm}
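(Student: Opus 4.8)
The plan is to transport the whole statement through the map $\Theta$ and reduce it to the positivity of a trace on $\wedge\cH$. First I would record that $\Theta\colon\cB(\wedge\cH)\to\cG_M$ is a linear bijection: it sends the basis of normal-ordered monomials $\e_{i_1}\cdots\e_{i_m}\v_{j_1}\cdots\v_{j_n}$ to the basis $\{\overline{\Psi}_I\Psi_J\}_{I,J\subseteq M}$ of $\cG_M$, and both spaces have dimension $4^{|M|}$. By Lemma~\ref{prodrule} it intertwines operator multiplication with the star product, $\Theta(AB)=\Theta(A)\star\Theta(B)$, and by the compatibility of $\Theta$ with the involution (part a) of the preceding lemma) it satisfies $\Theta(A^*)=\Theta(A)^*$. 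Hence $\Theta$ is a $*$-isomorphism of $(\cB(\wedge\cH),\cdot)$ onto $(\cG_M,\star)$. Consequently, given $\mu\geq 0$, i.e.\ $\mu=\eta^*\star\eta$ with $\eta\in\cG_M$, I would set $A:=\Theta^{-1}(\eta)\in\cB(\wedge\cH)$ and obtain $\mu=\Theta(A^*)\star\Theta(A)=\Theta(A^*A)$.

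The crux is then the trace formula
\begin{align}
 (-1)^{|M|}\int\cD(\overline{\Psi},\Psi)\,\Theta(B)=\mathrm{tr}_{\wedge\cH}\{B\}\qquad\text{for all }B\in\cB(\wedge\cH). \label{eq:trace-prop}
\end{align}
To prove \eqref{eq:trace-prop} I would argue that the linear functional $\phi(B):=\int\cD(\overline{\Psi},\Psi)\,\Theta(B)$ is tracial. Indeed, combining Lemma~\ref{prodrule} with the cyclicity of Theorem~\ref{cyclic1},
\begin{align*}
 \phi(AB)=\int\cD(\overline{\Psi},\Psi)\big(\Theta(A)\star\Theta(B)\big)
   =\int\cD(\overline{\Psi},\Psi)\big(\Theta(B)\star\Theta(A)\big)=\phi(BA),
\end{align*}
and surjectivity of $\Theta$ guarantees that $\Theta(A),\Theta(B)$ exhaust $\cG_M$. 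Since $\wedge\cH$ is finite-dimensional, $\cB(\wedge\cH)$ is a full matrix algebra, on which the commutators span the hyperplane of traceless elements; hence every tracial functional is a scalar multiple of $\mathrm{tr}_{\wedge\cH}$, so $\phi=c\,\mathrm{tr}_{\wedge\cH}$ for some $c\in\CC$. The constant is fixed by $B=\mathbbm{1}$: here $\mathrm{tr}_{\wedge\cH}\{\mathbbm{1}\}=2^{|M|}$, while $\Theta(\mathbbm{1})=1$ and a mode-by-mode evaluation using $\int\mathrm{d}(\overline{\psi}_\alpha,\psi_\alpha)\,\overline{\psi}_\alpha\psi_\alpha=-1$ (equivalently, the intermediate computation of Remark~\ref{intlsg} for $I=J=K=L=\emptyset$) gives $\phi(\mathbbm{1})=\int\cD(\overline{\Psi},\Psi)\,1=(-2)^{|M|}$. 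Thus $c=(-1)^{|M|}$, which is exactly \eqref{eq:trace-prop}.

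Putting the pieces together, for $\mu=\Theta(A^*A)$ the formula \eqref{eq:trace-prop} yields
\begin{align*}
 (-1)^{|M|}\int\cD(\overline{\Psi},\Psi)\,\mu=\mathrm{tr}_{\wedge\cH}\{A^*A\}\geq 0,
\end{align*}
since $A^*A$ is a positive semi-definite operator on $\wedge\cH$, which is precisely (\ref{posit}). I expect the main obstacle to be the identification \eqref{eq:trace-prop} of the weighted Grassmann integral with the trace, and in particular pinning down the sign $(-1)^{|M|}$ of the normalization constant. The tracial-functional argument through Theorem~\ref{cyclic1} is what avoids the alternative, laborious route of a direct combinatorial evaluation of both sides of \eqref{eq:trace-prop} on all basis monomials $\overline{\Psi}_I\Psi_J$, where one would have to verify that both sides vanish unless $I=J$ and otherwise agree on the diagonal.
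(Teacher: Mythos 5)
Your proof is correct, but it follows a genuinely different route from the paper. The paper proves Theorem~\ref{positivity} by induction on $|M|$: it decomposes $\xi\in\cG_{M+1}$ as $\eta_{00}+\eta_{01}\psi_{|M|+1}+\overline{\psi}_{|M|+1}\eta_{10}+\overline{\psi}_{|M|+1}\eta_{11}\psi_{|M|+1}$, explicitly integrates out the last pair of generators in $\xi^*\star\xi$, and regroups the surviving terms into a sum of manifestly positive elements of $\cG_M$ (including the combination $(\eta_{00}+\tilde\eta_{11})^*\star(\eta_{00}+\tilde\eta_{11})$), so the claim follows from the induction hypothesis. You instead transport the whole problem to $\cB(\wedge\cH)$: using Lemma~\ref{prodrule}, the involution compatibility, and the fact that $\Theta$ maps the basis of normal-ordered monomials onto the basis $\{\overline{\Psi}_I\Psi_J\}$, you identify $(\cG_M,\star,{}^*)$ with the matrix algebra $\cB(\wedge\cH)$, and then establish the trace formula abstractly: the functional $B\mapsto\int\cD(\overline{\Psi},\Psi)\,\Theta(B)$ is tracial by Theorem~\ref{cyclic1}, hence a multiple of $\mathrm{tr}_{\wedge\cH}$, with the constant $(-1)^{|M|}$ fixed by evaluating on $\mathbbm{1}$. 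This is legitimate and non-circular: the paper's own trace theorem (its Equation~(\ref{TRACE})) is stated only after Theorem~\ref{positivity} and is proved there by direct evaluation on monomials, while your derivation of it uses only Lemma~\ref{prodrule} and Theorem~\ref{cyclic1}, both of which precede the positivity theorem and do not depend on it. What each approach buys: yours is shorter and conceptually cleaner, reduces positivity of the Grassmann integral to positivity of $\mathrm{tr}\{A^*A\}$, and yields the paper's trace theorem as a byproduct (by a different, uniqueness-of-trace argument); the price is reliance on two structural inputs the paper never makes explicit at this point, namely bijectivity of $\Theta$ (asserted only later, in Section~5) and the standard fact that commutators span the traceless elements of a full matrix algebra. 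The paper's induction is more pedestrian but entirely self-contained within the Grassmann calculus. One detail done well: you fix the normalization via $\int\cD(\overline{\Psi},\Psi)\,1=(-2)^{|M|}$, consistent with the mode-by-mode computation in the paper's proof of the trace theorem, and you correctly lean on the intermediate computation of Remark~\ref{intlsg} rather than its final displayed formula, whose stated form loses a factor $(-1)^{|M|}$.
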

\begin{proof}
We use an induction in $|M|\in\mathbbm{N}$. For this purpose, we write any $\xi\in\cG_{M+1}:=
\mathrm{span}\left\{\overline{\psi}_1,\dots,\overline{\psi}_{|M|},\overline{\psi}_{|M|+1},\psi_1,\dots,\psi_{|M|},\psi_{|M|+1}\right\}$
as 
\begin{align}
 \xi=\eta_{00}+\eta_{01}\psi_{|M|+1}+\overline{\psi}_{|M|+1}\eta_{10}+\overline{\psi}_{|M|+1}\eta_{11}\psi_{|M|+1} \label{deco}
\end{align}
 for normal-ordered $\eta_{00},\, \eta_{01},\, \eta_{10},\, \eta_{11}\in\cG_M$. We indicate integration with respect
to a certain index set $M$ by writing $\int\mathrm{d}_{M}(\overline{\Psi},\Psi)$ and $\int\cD_{M}(\overline{\Psi},\Psi)$, respectively.
Furthermore, we recall that
\begin{align*}
\nonumber \mathrm{e}^{E_M}&:=\mathrm{e}^{\left(\overline{\Psi},\Psi\right)}
      \mathrm{e}^{\left(\overline{\Psi},\Phi\right)}
      \mathrm{e}^{-\left(\overline{\Phi},\Phi\right)}
      \mathrm{e}^{\left(\overline{\Phi},\Psi \right)}\\
      &=\prod\limits_{\alpha=1}^{M}\left(1-\overline{\phi}_\alpha \phi_\alpha+\overline{\psi}_\alpha\psi_\alpha
	  +\overline{\phi}_\alpha\psi_\alpha+\overline{\psi}_\alpha\phi_\alpha-2\overline{\psi}_\alpha\psi_\alpha\overline{\phi}_\alpha\phi_\alpha\right).
\end{align*}
In order to show (\ref{posit}) for $|M|=0$, we consider $\mu:=a^*\star a\in\cG_0$ with $a\in\mathbbm{C}$, and observe that with $\int\cD_0(\overline{\Psi},\Psi)=1$ the l.h.s.\ of (\ref{posit}) is nonnegative,
\begin{align*}
 \int\cD_0(\overline{\Psi},\Psi)\,\mu = \left|a\right|^2\geq 0.
\end{align*}
Now we assume that (\ref{posit}) holds for $|M|$ and consider the l.h.s.\ of (\ref{posit}) for $|M|+1$ and 
$\mu=\xi^*\star\xi$. We abbreviate $\psi_{|M|+1}\equiv \psi'$ and $\overline{\psi}_{|M|+1}\equiv\overline{\psi'}$. 
\begin{align}
\nonumber &\left(-1\right)^{|M|+1}\int\cD_{M+1}(\overline{\Psi},\Psi)\left(\xi^*\star\xi\right)\\
\nonumber  &\qquad=\left(-1\right)^{|M|+1}\int\cD_{M+1}(\overline{\Psi},\Psi)\Big[\eta_{00}^*\star\eta_{00}+\eta_{00}^*\star\left(\overline{\psi'}\,\eta_{11}\psi'\right)+\left(\overline{\psi'}\,\eta_{01}^*\right)\star\left(\eta_{01}\psi'\right)\big.\\
  &\phantom{....................}+\left(\eta_{10}^*\psi'\right)\star\left(\overline{\psi'}\,\eta_{10}\right)+\left(\overline{\psi'}\,\eta_{11}^*\psi'\,\right)\star\eta_{00}\big.
  +\left(\overline{\psi'}\,\eta_{11}^*\psi'\right)\star\left(\overline{\psi'}\,\eta_{11}\psi'\right)\Big].  \label{grogla}
\end{align}
Other terms like $\int\cD_{M+1}(\overline{\Psi},\Psi)\eta_{00}^*\star\left(\eta_{01}\psi'\right)$ vanish, as can 
be seen in (\ref{laseq}), since, in this case, $I\cup K\neq J\cup L$.

In the next step, we use the definition of the star product and the identity $\int\mathrm{d}_{M+1}(\overline{\Psi},\Psi)
    =\int\mathrm{d}_{M}(\overline{\Psi},\Psi)\frac{\delta}{\delta\overline{\psi'}}\frac{\delta}{\delta\psi'}$ 
to carry out all integrations with respect to $\psi'$ and $\overline{\psi'}$. We exemplify this step by
the last term on the r.h.s.\ of (\ref{grogla}):
\begin{align*}
\nonumber &\left(-1\right)^{|M|+1}\int\cD_{M+1}(\overline{\Psi},\Psi)\left(\overline{\psi'}\,\eta_{11}^*\psi'\right)\star\left(\overline{\psi'}\,\eta_{11}\psi'\right)\\
\nonumber &\qquad = \left(-1\right)^{|M|+1}\int\mathrm{d}_{M+1}(\overline{\Psi},\Psi)\int\mathrm{d}_{M+1}\left(\overline{\Phi},\Phi\right)
	\overline{\psi'}\,\eta_{11}^*\left(\overline{\psi},\phi\right)\phi'
\overline{\phi'}\,\eta_{11}\left(\overline{\phi},\psi\right)\psi'\,\mathrm{e}^{E_{M+1}}.
\end{align*}
Since $\eta_{11}^*\left(\overline{\psi},\phi\right)\eta_{11}\left(\overline{\phi},\psi\right)$ is even in 
the $\left(\overline{\psi},\psi,\overline{\phi},\phi\right)$ variables, we continue with
\begin{align*}
\nonumber &\left(-1\right)^{|M|+1}\int\cD_{M+1}(\overline{\Psi},\Psi)\left(\overline{\psi'}\,\eta_{11}^*\psi'\right)\star\left(\overline{\psi'}\,\eta_{11}\psi'\right)\\
\nonumber &\qquad = \left(-1\right)^{|M|+1} \int\mathrm{d}_{M}(\overline{\Psi},\Psi)\int\mathrm{d}_{M}\left(\overline{\Phi},\Phi\right)\eta_{11}^*\left(\overline{\psi},\phi\right)\eta_{11}\left(\overline{\phi},\psi\right)\mathrm{e}^{E_M}\\
\nonumber &\qquad\quad \times \frac{\delta}{\delta\overline{\phi'}}\frac{\delta}{\delta\phi'}\frac{\delta}{\delta\overline{\psi'}}\frac{\delta}{\delta\psi'}\overline{\psi'}\phi'\overline{\phi'}\psi'\left(1-\overline{\phi'}\phi'+\overline{\psi'}\psi'+\overline{\phi'}\psi'+\overline{\psi'}\phi'-2\overline{\psi'}\psi'\overline{\phi'}\phi'\right)\\
	  &\qquad = \left(-1\right)^{|M|+2}\int\cD_{M}(\overline{\Psi},\Psi)\,\eta_{11}^*\star\eta_{11}.
\end{align*}
By analogous calculations, we obtain
\begin{align*}
 \nonumber &\left(-1\right)^{|M|+1}\int\cD_{M+1}(\overline{\Psi},\Psi)\left(\xi^*\star\xi\right)\\
 \nonumber &\qquad=\left(-1\right)^{|M|+2}\int\cD_{M}(\overline{\Psi},\Psi)\Big[2\eta_{00}^*\star\eta_{00}+\eta_{00}^*\star\tilde{\eta}_{11}
		+\eta_{01}^*\star\eta_{01}+\eta_{10}^*\star\eta_{10}\\
   &\phantom{....................................................................................}+\tilde{\eta}_{11}^*\star\eta_{00}+\eta_{11}^*\star\eta_{11}\Big],
\end{align*}
where $\tilde{\eta}_{11}:=\sum\limits_{I,J}\left(-1\right)^{|I|+|J|}a_{IJ}\overline{\Psi}_I\Psi_J\in\cG_M$ if $\eta_{11}:=\sum\limits_{I,J}a_{IJ}\overline{\Psi}_I\Psi_J$ for some $a_{IJ}\in\mathbbm{C}$.
$\tilde{\eta}_{11}$ occurs due to the anticommutations of $\psi_{M+1}$ with $\eta_{11}^*$ and of $\overline{\psi}_{M+1}$ with
$\eta_{11}$ in the second and the fifth term on the r.h.s.\ of (\ref{grogla}), respectively. Observing that 
\begin{align*}
\nonumber &\int\cD_{M}(\overline{\Psi},\Psi)\,\tilde{\eta}_{11}^*\star\tilde{\eta}_{11}\\
\nonumber & \qquad=\sum\limits_{I,J,K,L}a_{IJ}\overline{a}_{LK}\left(-1\right)^{|I|+|J|+|K|+|L|}\int\cD_{M}(\overline{\Psi},\Psi)\left(\overline{\Psi}_I\Psi_J\right)\star\left(\overline{\Psi}_K\Psi_L\right)\\
&\qquad= \int\cD_{M}(\overline{\Psi},\Psi)\,\eta_{11}^*\star\eta_{11},
\end{align*}
since $|I|+|J|+|K|+|L|$ is even (otherwise both integrals vanish), we finally conclude
\begin{align*}
 \nonumber &\left(-1\right)^{|M|+1}\int\cD_{M+1}(\overline{\Psi},\Psi)\left(\xi^*\star\xi\right)\\ 
 \nonumber &\qquad=\left(-1\right)^{|M|+2}\int\cD_{M}(\overline{\Psi},\Psi)\Big[\eta_{00}^*\star\eta_{00}+\left(\eta_{00}+\tilde{\eta}_{11}\right)^*\star\left(\eta_{00}+\tilde{\eta}_{11}\right)
 +\eta_{01}^*\star\eta_{01}+\eta_{10}^*\star\eta_{10}\Big],
\end{align*}
which is non-negative by the induction hypothesis.
\end{proof}
Finally, we can express the trace of an operator of $\mathcal{B}\left(\wedge\cH\right)$ and, thanks to 
Lemma~\ref{prodrule}, the trace of 
a product of such operators as a Grassmann integral.
\begin{thm}
 For all $A\in\mathcal{B}\left(\wedge\cH\right)$ we have
\begin{align}
 \mathrm{tr}_{\wedge\cH}\left\{A\right\}=\left(-1\right)^{|M|}
      \int\cD(\overline{\Psi},\Psi)\,\Theta\left(A\right).\label{TRACE}
\end{align}
\end{thm}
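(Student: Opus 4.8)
The plan is to use that both sides of~(\ref{TRACE}) are linear in $A$ and to check the identity on a spanning set of $\cB(\wedge\cH)$. The normal-ordered monomials $C^*_I C_J=\e_{i_1}\cdots\e_{i_m}\v_{j_1}\cdots\v_{j_n}$, with $I,J\subseteq M$ ordered, span $\cB(\wedge\cH)$, and by~(\ref{deda}) they satisfy $\Theta(C^*_I C_J)=\overline{\Psi}_I\Psi_J$. Hence it suffices to establish~(\ref{TRACE}) for $A=C^*_I C_J$, i.e.\ to compare $\mathrm{tr}_{\wedge\cH}\{C^*_I C_J\}$ with $(-1)^{|M|}\int\cD(\overline{\Psi},\Psi)\,\overline{\Psi}_I\Psi_J$.

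First I would compute the trace. On the occupation-number basis of $\wedge\cH$, the operator $C^*_I C_J$ annihilates the sites of $J$ and creates the sites of $I$, so it has a non-vanishing diagonal part only if $I=J$ as sets; thus $\mathrm{tr}_{\wedge\cH}\{C^*_I C_J\}=0$ unless $I=J$. For $I=J$ I would anticommute the annihilation operators into place, using that operators with distinct indices anticommute, to get
\begin{align*}
 C^*_I C_I=\left(-1\right)^{s_I}\prod_{i\in I}\e_i\v_i,\qquad s_I:=\tfrac12|I|\left(|I|-1\right).
\end{align*}
Since $\prod_{i\in I}\e_i\v_i$ is the orthogonal projection onto the basis states in which every site of $I$ is occupied, its trace equals the number $2^{|M|-|I|}$ of free occupations of the remaining sites, so that
\begin{align*}
 \mathrm{tr}_{\wedge\cH}\{C^*_I C_J\}=\delta_{I,J}\left(-1\right)^{s_I}2^{|M|-|I|}.
\end{align*}

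Next I would evaluate the Grassmann integral. Writing $\mathrm{e}^{2(\overline{\Psi},\Psi)}=\prod_{\alpha}(1+2\overline{\psi}_\alpha\psi_\alpha)$ and recalling that $\int\mathrm{d}(\overline{\Psi},\Psi)$ extracts the top-degree coefficient, only the term in which every generator $\overline{\psi}_\alpha$ and $\psi_\alpha$ occurs exactly once contributes; this again forces $I=J$, the modes $\alpha\notin I$ being supplied by the factors $2\overline{\psi}_\alpha\psi_\alpha$ and producing $2^{|M|-|I|}$. Reordering $\overline{\Psi}_I\Psi_I=(-1)^{s_I}\prod_{\alpha\in I}\overline{\psi}_\alpha\psi_\alpha$ and inserting the elementary identity $\int\mathrm{d}(\overline{\Psi},\Psi)\prod_{\alpha\in M}\overline{\psi}_\alpha\psi_\alpha=(-1)^{|M|}$ then gives
\begin{align*}
 \int\cD(\overline{\Psi},\Psi)\,\overline{\Psi}_I\Psi_J=\delta_{I,J}\left(-1\right)^{s_I+|M|}2^{|M|-|I|},
\end{align*}
and multiplication by $(-1)^{|M|}$ reproduces the trace computed above, which proves~(\ref{TRACE}).

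A structurally cleaner alternative, which I would mention, avoids all explicit sign counting except one normalization. Setting $\phi(A):=(-1)^{|M|}\int\cD(\overline{\Psi},\Psi)\,\Theta(A)$, Lemma~\ref{prodrule} gives $\Theta(AB)=\Theta(A)\star\Theta(B)$, and the cyclicity of Theorem~\ref{cyclic1} yields $\phi(AB)=\phi(BA)$ for all $A,B\in\cB(\wedge\cH)$. Thus $\phi$ is a tracial linear functional on $\cB(\wedge\cH)\cong\MM_{2^{|M|}}(\mathbbm{C})$; since this matrix algebra is simple, every tracial functional is a scalar multiple of $\mathrm{tr}_{\wedge\cH}$, so $\phi=c\cdot\mathrm{tr}_{\wedge\cH}$. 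Evaluating on $A=\mathbbm{1}$ with $\int\cD(\overline{\Psi},\Psi)\,1=(-1)^{|M|}2^{|M|}$ and $\mathrm{tr}_{\wedge\cH}\{\mathbbm{1}\}=2^{|M|}$ forces $c=1$. In either route I expect the main obstacle to be the sign bookkeeping of the Grassmann integral, specifically verifying $\int\mathrm{d}(\overline{\Psi},\Psi)\prod_{\alpha}\overline{\psi}_\alpha\psi_\alpha=(-1)^{|M|}$ together with the reordering sign $(-1)^{s_I}$, and thereby confirming that the prefactor $(-1)^{|M|}$ in~(\ref{TRACE}) emerges correctly.
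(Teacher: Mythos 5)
Your first computation is correct and is essentially the paper's own proof: both reduce by linearity to normal-ordered monomials $C^*_I C_J$, observe that the trace and the integral each vanish unless $I=J$, and then match $\mathrm{tr}_{\wedge\cH}\{C^*_I C_I\}=(-1)^{s_I}2^{|M|-|I|}$, $s_I=\tfrac12|I|\left(|I|-1\right)$, against the Grassmann side. The only difference there is cosmetic: the paper reorders into pairs $\psi_{i_\alpha}\overline{\psi}_{i_\alpha}$, picking up $(-1)^{\frac12|I|\left(|I|+1\right)}$ and then $(-2)^{|M|-|I|}$ from the integration, while you keep the pairs $\overline{\psi}_\alpha\psi_\alpha$ with sign $(-1)^{s_I}$ and a factor $(-1)^{|I|}$ from the integration; the two bookkeepings agree. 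Your alternative argument, however, is genuinely different from anything in the paper and is legitimate given the paper's ordering of results: Lemma~\ref{prodrule} and Theorem~\ref{cyclic1} are both proved before~(\ref{TRACE}) and do not rely on it, so defining $\phi(A):=(-1)^{|M|}\int\cD(\overline{\Psi},\Psi)\,\Theta(A)$ one gets $\phi(AB)=\phi(BA)$, and since $\cB\left(\wedge\cH\right)\cong\MM_{2^{|M|}}(\mathbbm{C})$ is a full matrix algebra whose commutators span the traceless matrices, every tracial linear functional is a multiple of $\mathrm{tr}_{\wedge\cH}$; the normalization $\int\cD(\overline{\Psi},\Psi)\,1=(-2)^{|M|}$ then fixes the constant to $1$. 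What this route buys is that all monomial-by-monomial sign counting — precisely the error-prone part of the direct proof — collapses to the single evaluation on $A=\mathbbm{1}$, and it explains structurally why one overall factor $(-1)^{|M|}$ suffices. What it gives up is the explicit value of the integral on each monomial $\overline{\Psi}_I\Psi_J$, which the direct computation yields as a byproduct and which is of the same concrete kind as the evaluations the paper uses elsewhere (cf.\ Remark~\ref{intlsg}).
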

\begin{proof}
 We assume that $A\in\cB\left(\wedge\cH\right)$ is normal-ordered. Due to the linearity of the trace and the Grassmann integral it suffices to
consider $\mathrm{tr}_{\wedge\cH}\left\{\e_{i_1}\cdots\e_{i_m}\v_{j_1}\cdots\v_{j_n}\right\}$, where
$I:=\left\{i_1,\dots,i_m\right\}$ and $J:=\left\{j_1,\dots,j_n\right\}$ are ordered. For $I\neq J$, both the l.h.s. and
the r.h.s. of (\ref{TRACE}) vanish. For $I=J$,
 the l.h.s.\ of (\ref{TRACE}) is given by
\begin{align*}
 \mathrm{tr}_{\wedge\cH}\left\{\e_{i_1}\cdots\e_{i_m}\v_{i_1}\cdots\v_{i_m}\right\}=\left(-1\right)^{\frac{1}{2}|I|\left(|I|-1\right)}
      2^{|M|-|I|}.
\end{align*}
On the r.h.s.\ of (\ref{TRACE}) we have
$\Theta\left(\e_{i_1}\cdots\e_{i_m}\v_{i_1}\cdots\v_{i_m}\right)=\overline{\psi}_{i_1}\cdots\overline{\psi}_{i_m}\psi_{i_1}\cdots \psi_{i_m}$
and, thus,
\begin{align*}
\nonumber \int\cD(\overline{\Psi},\Psi)\,\overline{\psi}_{i_1}\cdots\overline{\psi}_{i_m}\psi_{i_1}\cdots \psi_{i_m}
   & =\left(-1\right)^{\frac{1}{2}|I|\left(|I|+1\right)}\int\cD(\overline{\Psi},\Psi)\,\prod\limits_{\alpha=1}^m\left(\psi_{i_\alpha}\overline{\psi}_{i_\alpha}\right)\\
&=\left(-1\right)^{|M|}\left(-1\right)^{\frac{1}{2}|I|\left(|I|+1\right)}2^{|M|-|I|},
\end{align*}
since $\prod\limits_{\alpha\in I}\left(\psi_{\alpha}\overline{\psi}_{\alpha}\right)\mathrm{e}^{2\left(\overline{\Psi},\Psi\right)}=
\prod\limits_{\alpha\in I}\left(\psi_{\alpha}\overline{\psi}_{\alpha}\right)\prod\limits_{\alpha\in M\backslash I}\left(1+2\overline{\psi}_\alpha\psi_\alpha\right)$ and,
therefore,
\begin{align*}
\prod\limits_{\alpha\in M}\left(\frac{\delta}{\delta\overline{\psi}}\frac{\delta}{\delta\psi}\right)\prod\limits_{\alpha\in I}\left(\psi_{\alpha}\overline{\psi}_{\alpha}\right)\mathrm{e}^{2\left(\overline{\Psi},\Psi\right)}
    =\left(-2\right)^{|M|-|I|}.
\end{align*}
The proof is complete by $\left(-2\right)^{|M|-|I|}=
    \left(-1\right)^{|M|}\left(-1\right)^{|I|}2^{|M|-|I|}$.
\end{proof}
Due to the restriction to a Hilbert space with even dimension, we henceforth skip the factor 
$\left(-1\right)^{|M|}$.


\section{Representability Conditions from Grassmann Integrals}

The last section allows for an application of the Grassmann integration on the problem of representability for fermion
systems. 
In particular, we are interested in necessary conditions for the 1- and 2-pdm to have their origin in a density matrix $\rho$
\cite{BKM}. In the language of Grassmann integration we call the equivalents of density matrices Grassmann densities.
\begin{defn}
 A Grassmann variable $\vartheta^*\star\vartheta\in\cG_M$ is called Grassmann density if it is normalized, i.e., fulfills
\begin{align*}
 \int\cD \left(\overline{\Psi},\Psi\right)\vartheta^*\star\vartheta\,  =1.
\end{align*}
\end{defn}

By definition, the Grassmann density is positive semi-definite and self-adjoint. For a given state $\rho$, the map $\Theta$ immediately provides
$\vartheta^*\star\vartheta$, namely $\vartheta^*\star\vartheta=\Theta\left(\rho\right)$. Thanks to the product rule for $\Theta$ and the positive semi-definiteness
of $\rho$, we also have $\vartheta^*\star\vartheta=\Theta\big(\rho^{\frac{1}{2}}\rho^{\frac{1}{2}}\big)=\Theta\big(\rho^{\frac{1}{2}}\big)
\star\Theta\big(\rho^{\frac{1}{2}}\big)$. $\Theta$ is a bijection and compatible with the involution. This implies that
$\vartheta=\Theta\big(\rho^{\frac{1}{2}}\big)$. 
Given a Grassmann density, we can formulate the problem of representability by Grassmann integrals using the trace-formula (\ref{TRACE}).
\begin{defn}
 Let $\left\{\overline{\psi}_i,\psi_i\right\}_{i \in M}$ be the generators of $\cG_M$ and associate $\left\{\psi_i\right\}_{i\in M}$
with a fixed ONB of $\cH$. The 1-pdm $\gamma_\vartheta\in\cB\left(\cH\right)$ and 2-pdm
$\Gamma_\vartheta\in\cB\left(\cH\otimes\cH\right)$ of a Grassmann density $\vartheta^*\star\vartheta$ are defined by their matrix
elements:
\begin{align}
 \left<\psi_k,\gamma_\vartheta\psi_l\right>&:=\int\cD(\overline{\Psi},\Psi)\,\vartheta^*\star\vartheta\star\overline{\psi}_l
	\star\psi_k\,  \quad\text{and}\label{1RDMGM}\\
 \left<\psi_m\otimes\psi_n ,\Gamma_\vartheta\left(\psi_l\otimes\psi_k\right)\right>&:=\int\cD \left(\overline{\Psi},\Psi\right)\vartheta^*\star\vartheta\star\overline{\psi}_k
	\star\overline{\psi}_l\star\psi_m\star\psi_n\,  \label{2RDMGM}.
\end{align}
\end{defn}

Applying the trace formula (\ref{TRACE}) on (\ref{1RDMGM}) and (\ref{2RDMGM}), respectively, we observe that 
\begin{align*}
  \left<\psi_k,\gamma_\rho\psi_l\right>&=\mathrm{tr}_{\wedge\cH}\left\{\Theta^{-1}\left(\vartheta^*\star\vartheta\right)\e_l\v_k\right\}\quad\text{and}\\
  \left<\psi_m\otimes\psi_n ,\Gamma_\rho\left(\psi_l\otimes\psi_k\right)\right>&=
	  \mathrm{tr}_{\wedge\cH}\left\{\Theta^{-1}\left(\vartheta^*\star\vartheta\right)\e_l\e_k\v_n\v_m\right\},
\end{align*}
which agrees with the common definition of the 1- and 2-pdm \cite{BKM} if we interpret $\Theta^{-1}\left(\vartheta^*\star\vartheta\right)
=\left(\Theta^{-1}\left(\vartheta\right)\right)^*\Theta^{-1}\left(\vartheta\right)$ as a
density matrix $\rho\in\cB\left(\wedge\cH\right)$. The problem of representability can be formulated as follows:
\begin{defn}
 We call $\left(\gamma,\Gamma\right)\in\cB\left(\cH\right)\times\cB\left(\cH\otimes\cH\right)$ representable if there exists a Grassmann density $\vartheta^*\star\vartheta$ such that
$\left(\gamma,\Gamma\right)=\left(\gamma_\vartheta,\Gamma_\vartheta\right)$.
\end{defn}



\subsection{Conditions on the One-Particle Density Matrix}
The lower and upper bound for the eigenvalues of the 1-pdm $\gamma_\vartheta$ of a Grassmann state $\vartheta^*\star\vartheta$ arise directly from the definition of the 1-pdm (see \cite{BKM} for further details).
Here, we would like to derive the conditions by Grassmann integration. To this end, we consider certain subspaces of $\cG_M$.
\begin{defn}	
 For any $n\in\mathbbm{N}$, $n\leq |M|$, we define the subspace
\begin{align*}
 \cG_M^{(n)}:=\mathrm{span}\left\{\overline{\Psi}_I\Psi_J|\ |I|,|J|\leq n\right\}\subseteq\cG_M.
\end{align*}
\end{defn}

Bounds for the 1-pdm rise by considering $\cG_M^{(1)}$. In what follows, we call conditions derived by considering
$\cG_M^{(n)}$ as conditions of $n$-th order.
\begin{lem}
 Theorem~\ref{positivity} implies
\begin{align*}
 \gamma_\vartheta \geq 0.
\end{align*}
\end{lem}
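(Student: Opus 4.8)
The plan is to establish the operator inequality $\gamma_\vartheta\geq 0$ by verifying $\langle f,\gamma_\vartheta f\rangle\geq 0$ for an arbitrary vector $f\in\cH$, and then reading off self-adjointness from the reality and nonnegativity of the quadratic form. First I would expand $f=\sum_{k\in M}c_k\psi_k$ in the fixed ONB. Using that $\langle\,\cdot\,,\,\cdot\,\rangle$ is antilinear in its first and linear in its second argument, together with the linearity of the Grassmann integral, the defining relation (\ref{1RDMGM}) of the $1$-pdm gives
\begin{align*}
 \langle f,\gamma_\vartheta f\rangle
 =\sum_{k,l\in M}\overline{c}_k c_l\int\cD(\overline{\Psi},\Psi)\,
 \vartheta^*\star\vartheta\star\overline{\psi}_l\star\psi_k .
\end{align*}

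The key step is to recognise the double sum over the generators as a single star square. Setting $b:=\sum_{k\in M}\overline{c}_k\psi_k\in\cG_M$, the involution of Definition~\ref{invol} yields $b^*=\sum_{l\in M}c_l\overline{\psi}_l$, whence
\begin{align*}
 \sum_{k,l\in M}\overline{c}_k c_l\,\overline{\psi}_l\star\psi_k
 =\Big(\sum_{l\in M}c_l\overline{\psi}_l\Big)\star\Big(\sum_{k\in M}\overline{c}_k\psi_k\Big)
 =b^*\star b ,
\end{align*}
so that $\langle f,\gamma_\vartheta f\rangle=\int\cD(\overline{\Psi},\Psi)\,\vartheta^*\star\vartheta\star b^*\star b$.

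Next I would bring the integrand into the canonical positive form. Define $\eta:=\vartheta\star b^*$; by the anti-multiplicativity $(\mu\star\eta)^*=\eta^*\star\mu^*$ of the involution one has $\eta^*=(b^*)^*\star\vartheta^*=b\star\vartheta^*$, and therefore $\eta^*\star\eta=b\star\vartheta^*\star\vartheta\star b^*$. Invoking the cyclic invariance (\ref{cyclic2}) of $\int\cD(\overline{\Psi},\Psi)$ to move the leading factor $b$ to the rear, one obtains $\int\cD(\overline{\Psi},\Psi)\,\eta^*\star\eta=\int\cD(\overline{\Psi},\Psi)\,\vartheta^*\star\vartheta\star b^*\star b=\langle f,\gamma_\vartheta f\rangle$. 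Since $\eta^*\star\eta\geq 0$ by definition, Theorem~\ref{positivity} (whose prefactor $(-1)^{|M|}$ equals $+1$ for the even-dimensional $\cH$ considered here) gives $\langle f,\gamma_\vartheta f\rangle\geq 0$. As $f$ is arbitrary and the form is real and nonnegative, $\gamma_\vartheta$ is self-adjoint with $\gamma_\vartheta\geq 0$.

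The only genuinely delicate point I anticipate is the bookkeeping of the previous step: one must choose the correct $\eta=\vartheta\star b^*$ and exploit the cyclicity of the \emph{integral} (Theorem~\ref{cyclic1}) rather than any spurious ``cyclicity of the star product'' to turn the ordered word $b\star\vartheta^*\star\vartheta\star b^*$ back into the integrand that defines $\langle f,\gamma_\vartheta f\rangle$. Everything else—the basis expansion, the sesquilinearity conventions, and the identification $b^*\star b$—is routine.
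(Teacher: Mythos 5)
Your proposal is correct and follows essentially the same route as the paper's own proof: both express $\left<f,\gamma_\vartheta f\right>$ as $\int\cD(\overline{\Psi},\Psi)\,\eta^*\star\eta$ for a degree-one element star-multiplied with the Grassmann density, then invoke Theorem~\ref{positivity} together with the cyclicity of the integral (the paper takes $\eta=\phi\star\vartheta^*$ with $\phi=\sum_k\alpha_k\psi_k$, which is just the adjoint of your $\eta=\vartheta\star b^*$, an immaterial difference). Your handling of the involution, the sign $\left(-1\right)^{|M|}$, and the distinction between cyclicity of the integral versus of the star product all match the paper's argument.
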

\begin{proof}
Let $\left\{\overline{\psi}_i,\psi_i\right\}_{i\in M}$ be the generators of $\cG_M$ and $\alpha_k\in\mathbbm{C}\ \forall\,k\in M$. In Theorem~\ref{positivity}, we make use of Equation~(\ref{cyclic2}) with 
$\eta:=\phi\star\vartheta^*$ and
    $\phi:=\sum\limits_{k\in M}\alpha_k\psi_k\in\cG_M$. We observe that, according to the involution $\left(\,\cdot\,\right)^*$ on $\cG_M$,
 $\phi^*=\sum\limits_{k\in M}\overline{\alpha}_k\overline{\psi}_k$, and $\eta^*=\left(\phi\star\vartheta^*\right)^*=\vartheta\star\phi^*$. This leads to
\begin{align*}
\nonumber 0&\leq \int\cD(\overline{\Psi},\Psi)\,\eta^*\star\eta\,  \\
\nonumber &=\sum\limits_{k,l\in M}\overline{\alpha}_k\alpha_l\int\cD(\overline{\Psi},\Psi)\,\vartheta^*\star\vartheta\star
	      \overline{\psi}_k\star\psi_l\,  \\
      &=\left<f,\gamma_\vartheta f\right>, 
\end{align*}
where $f:=\sum\limits_{i\in M}\overline{\alpha}_i\psi_i\in\cH$ is arbitrary.
\end{proof}
The upper bound for $\gamma_\vartheta$ is given by another choice of $\eta$.
\begin{lem}
Theorem~\ref{positivity} implies
\begin{align*}
 \gamma_\vartheta \leq \mathbbm{1}.
\end{align*}
\end{lem}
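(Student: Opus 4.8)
The plan is to run the lower-bound argument with the roles of the creation and annihilation generators interchanged, and then to trade the resulting ``anti-normal-ordered'' product $\psi\star\overline\psi$ for the normal-ordered one by means of the star-CAR of Lemma~\ref{CARonG}. Concretely, I would fix coefficients $\alpha_k\in\mathbbm{C}$, set $\phi:=\sum_{k\in M}\alpha_k\overline\psi_k\in\cG_M$ (now built from the creation generators), and take $\eta:=\phi\star\vartheta^*$. Since the involution sends $\overline\psi_k\mapsto\psi_k$, one has $\phi^*=\sum_{k}\overline\alpha_k\psi_k$ and $\eta^*=\left(\phi\star\vartheta^*\right)^*=\vartheta\star\phi^*$. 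Theorem~\ref{positivity} (with the factor $(-1)^{|M|}$ suppressed because $|M|$ is even) then guarantees $0\leq\int\cD(\overline{\Psi},\Psi)\,\eta^*\star\eta=\int\cD(\overline{\Psi},\Psi)\,\vartheta\star\phi^*\star\phi\star\vartheta^*$.

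Next I would invoke the cyclicity of the Grassmann integral, Equation~(\ref{cyclic2}) together with the remark that $\mathrm{e}^{2\left(\overline{\Psi},\Psi\right)}$ commutes with every element of $\cG_M$, to move $\vartheta$ past the remaining factors and rewrite the integral as $\int\cD(\overline{\Psi},\Psi)\,\vartheta^*\star\vartheta\star\phi^*\star\phi$. Expanding $\phi^*\star\phi=\sum_{k,l}\overline\alpha_k\alpha_l\,\psi_k\star\overline\psi_l$ then reduces everything to evaluating the building block $\int\cD(\overline{\Psi},\Psi)\,\vartheta^*\star\vartheta\star\psi_k\star\overline\psi_l$.

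The crucial step is to express this building block through $\gamma_\vartheta$. Here I would use the star-anticommutation relation $\left\{\overline\psi_l,\psi_k\right\}_\star=\delta_{kl}$ from Lemma~\ref{CARonG}, i.e.\ $\psi_k\star\overline\psi_l=\delta_{kl}-\overline\psi_l\star\psi_k$. Inserting this and combining the normalization of the Grassmann density, $\int\cD(\overline{\Psi},\Psi)\,\vartheta^*\star\vartheta=1$, with the defining relation~(\ref{1RDMGM}) of $\gamma_\vartheta$ yields $\int\cD(\overline{\Psi},\Psi)\,\vartheta^*\star\vartheta\star\psi_k\star\overline\psi_l=\delta_{kl}-\langle\psi_k,\gamma_\vartheta\psi_l\rangle=\langle\psi_k,(\mathbbm{1}-\gamma_\vartheta)\psi_l\rangle$. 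Summing against $\overline\alpha_k\alpha_l$ and setting $g:=\sum_{i\in M}\alpha_i\psi_i\in\cH$, the nonnegative quantity becomes exactly $\langle g,(\mathbbm{1}-\gamma_\vartheta)g\rangle$ (using antilinearity in the first argument). As $g$ ranges over all of $\cH$, this gives $\mathbbm{1}-\gamma_\vartheta\geq 0$, that is, $\gamma_\vartheta\leq\mathbbm{1}$.

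I expect the only genuine obstacle to be bookkeeping rather than conceptual: one must check that the sign produced by the star-CAR reordering is correct and that the index placement in~(\ref{1RDMGM}) is matched so that the final double sum assembles into $\langle g,(\mathbbm{1}-\gamma_\vartheta)g\rangle$ and not its transpose or complex conjugate. Apart from this delicate tracking of the $\delta_{kl}$ term and the normalization, the computation is a verbatim mirror of the lower-bound proof.
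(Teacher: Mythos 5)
Your proposal is correct and takes essentially the same route as the paper: the paper likewise applies Theorem~\ref{positivity} to $\eta^*\star\eta$ with $\eta$ built from $\vartheta$ and a linear combination of generators, converts the anti-normal-ordered product $\psi_k\star\overline{\psi}_l$ into $\delta_{kl}-\overline{\psi}_l\star\psi_k$ via the star-CAR of Lemma~\ref{CARonG}, and then uses cyclicity, the normalization $\int\cD(\overline{\Psi},\Psi)\,\vartheta^*\star\vartheta=1$, and (\ref{1RDMGM}) to arrive at $\left<g,\left(\mathbbm{1}-\gamma_\vartheta\right)g\right>\geq 0$. The only difference is cosmetic: the paper chooses $\eta=\phi^*\star\vartheta$ with $\phi=\sum_{k}\alpha_k\psi_k$ and anticommutes before integrating, whereas you choose $\eta=\phi\star\vartheta^*$ with $\phi$ built from the $\overline{\psi}_k$ and anticommute inside the integral -- your ordering even reproduces $\vartheta^*\star\vartheta$ after cyclicity exactly as (\ref{1RDMGM}) requires, which is, if anything, slightly more careful than the paper's bookkeeping.
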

\begin{proof}
The bound can be proven by following the steps of the proof of the lower bound. 
Again, we have $\alpha_k\in\mathbbm{C}\ \forall\,k\in M$ and set  
$\phi^*=\sum\limits_{k\in M}\overline{\alpha}_k\overline{\psi}_k\in\cG_M$ and, this time,
$\eta^*=\left(\phi^*\star\vartheta\right)^*=\vartheta^*\star\phi$.
Before we go on, we observe that by the CAR 
on $\cG_M$ given in (\ref{CARonG}),
\begin{align*}
 \nonumber \phi\star\phi^*&=\sum\limits_{k,l\in M}\alpha_k\overline{\alpha}_l\psi_k\star\overline{\psi}_l
      = \sum\limits_{k\in M}\overline{\alpha}_k\alpha_k
	-\sum\limits_{k,l\in M}\alpha_k\overline{\alpha}_l\overline{\psi}_l\star\psi_k.
\end{align*}
Inserting this into the inequality of Theorem~\ref{positivity} and using the associativity of the
star product, we obtain
\begin{align*}
\nonumber 0&\leq \int\cD(\overline{\Psi},\Psi)\,\eta^*\star\eta\,  \\
\nonumber &=\sum\limits_{k\in M}\left|\alpha_k\right|^2
	      -\sum\limits_{k,l\in M}\overline{\alpha}_l\alpha_k\int\cD(\overline{\Psi},\Psi)\,\vartheta^*\star\vartheta\star\overline{\psi}_l\star\psi_k\,  \\
      &=\left<g,\left(\mathbbm{1}-\gamma_\vartheta\right)g \right>,
\end{align*}
where we have used $\int\cD(\overline{\Psi},\Psi)\,\vartheta^*\star\vartheta\,  =1$ and $g:=\sum\limits_{k\in M}\overline{\alpha}_k\psi_k\in\cH$.
\end{proof}
Considering the subspace $\cG_M^{(1)}$, we can summarize our last two results.
\begin{thm}
 Let $\vartheta\star\vartheta^*$ be a Grassmann density and $\gamma_\vartheta$ its 1-pdm. Then the following statements are equivalent:
\begin{itemize}
 \item[a)] $0\leq \gamma_\vartheta \leq \mathbbm{1}$.
 \item[b)] $\forall \mu\in\cG_M^{(1)}$: $\int\cD(\overline{\Psi},\Psi)\,\vartheta^*\star\vartheta \star\mu\geq 0$.
\end{itemize}
\end{thm}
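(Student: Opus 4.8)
The plan is to prove the two implications $b)\Rightarrow a)$ and $a)\Rightarrow b)$ separately, using the two lemmas just established as the computational core. Throughout I write the density as $\vartheta^*\star\vartheta=\Theta(\rho)$; it is normalized, self-adjoint and, being the image of a parity-invariant density matrix $\rho$, even in the generators, so that odd monomials integrate to zero against it and only the even part of a test element contributes. In $b)$ the relevant test elements are the positive ones $\mu\in\cG_M^{(1)}$, exactly as in the two preceding lemmas.

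For $b)\Rightarrow a)$ I would insert into $b)$ the two families $\mu:=\phi^*\star\phi$ and $\mu:=\phi\star\phi^*$ with $\phi:=\sum_{k\in M}\alpha_k\psi_k\in\cG_M^{(1)}$ and $\alpha_k\in\mathbbm{C}$; both elements lie in $\cG_M^{(1)}$. The calculation in the lemma for the lower bound gives $\int\cD(\overline{\Psi},\Psi)\,\vartheta^*\star\vartheta\star\phi^*\star\phi=\langle f,\gamma_\vartheta f\rangle$, and the calculation in the lemma for the upper bound gives $\int\cD(\overline{\Psi},\Psi)\,\vartheta^*\star\vartheta\star\phi\star\phi^*=\langle g,(\mathbbm{1}-\gamma_\vartheta)g\rangle$, where $f$ and $g$ exhaust $\cH$ as the $\alpha_k$ vary. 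Nonnegativity of both integrals, granted by $b)$, yields $\gamma_\vartheta\geq 0$ and $\gamma_\vartheta\leq\mathbbm{1}$ at once.

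For $a)\Rightarrow b)$ I would start from a positive $\mu\in\cG_M^{(1)}$. Only its even part survives the integration, and the even part of a positive element is again operator-positive; it has the form $\mu=c\,\mathbbm{1}+\sum_{i,j\in M}d_{ij}\,\overline{\psi}_i\star\psi_j$ with $c\in\mathbbm{R}$ and $(d_{ij})$ Hermitian. Diagonalizing $(d_{ij})$ by a unitary change of generators $\overline{\chi}_k,\chi_k$ (which leaves the Grassmann integral invariant by the basis-independence lemma) and using the CAR on $\cG_M$ of Lemma~\ref{CARonG} in the form $\overline{\chi}_k\star\chi_k=\mathbbm{1}-\chi_k\star\overline{\chi}_k$ for the indices with $\lambda_k<0$, I would rewrite
\begin{align*}
 \mu=\Big(c-\sum_{k:\lambda_k<0}|\lambda_k|\Big)\mathbbm{1}
  +\sum_{k:\lambda_k>0}\lambda_k\,\overline{\chi}_k\star\chi_k
  +\sum_{k:\lambda_k<0}|\lambda_k|\,\chi_k\star\overline{\chi}_k.
\end{align*}
Integrating against $\vartheta^*\star\vartheta$, the middle sum contributes $\sum_{k:\lambda_k>0}\lambda_k\langle v_k,\gamma_\vartheta v_k\rangle$ and the last sum $\sum_{k:\lambda_k<0}|\lambda_k|\langle v_k,(\mathbbm{1}-\gamma_\vartheta)v_k\rangle$, where $v_k$ is the eigenvector attached to $\chi_k$; both are nonnegative exactly because $0\leq\gamma_\vartheta\leq\mathbbm{1}$. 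The scalar term is nonnegative because operator-positivity of $\mu$, obtained from the product rule for $\Theta$ (Lemma~\ref{prodrule}) and its compatibility with the involution, makes $\Theta^{-1}(\mu)=c\,\mathbbm{1}+\sum_{i,j}d_{ij}\,\e_i\v_j$ nonnegative on $\wedge\cH$; its smallest eigenvalue is $c+\sum_{k:\lambda_k<0}\lambda_k$, whence $c-\sum_{k:\lambda_k<0}|\lambda_k|\geq 0$. Adding the three nonnegative contributions gives $b)$.

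The main obstacle is the scalar term in $a)\Rightarrow b)$: one has to pass from the abstract positivity $\mu\geq 0$ on $(\cG_M,\star)$ to the operator inequality $\Theta^{-1}(\mu)\geq 0$ and then read off the ground-state energy $\sum_{k:\lambda_k<0}\lambda_k$ of the second-quantized one-body operator $\sum_{i,j}d_{ij}\e_i\v_j$ in order to get the sharp lower bound on $c$. The remaining steps are routine bookkeeping: the parity reduction to even $\mu$, the invariance of $\int\cD(\overline{\Psi},\Psi)$ under the unitary diagonalizing $(d_{ij})$, and the identification of the two families of integrals with matrix elements of $\gamma_\vartheta$ and $\mathbbm{1}-\gamma_\vartheta$ provided by the two preceding lemmas.
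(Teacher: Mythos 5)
Your direction b) $\Rightarrow$ a) is correct: the test elements $\phi^*\star\phi$ and $\phi\star\phi^*$ are positive, lie in $\cG_M^{(1)}$, and by the computations of the two preceding lemmas their integrals against $\vartheta^*\star\vartheta$ are exactly $\left<f,\gamma_\vartheta f\right>$ and $\left<g,\left(\mathbbm{1}-\gamma_\vartheta\right)g\right>$. Your reading of b) as ranging over \emph{positive} $\mu\in\cG_M^{(1)}$ is also the only tenable one (taken literally, b) fails for $\mu=-1$), and it matches the operator statement in Theorem 3.1 of \cite{BKM}. Note, for comparison, that the paper itself proves the theorem by a different route: it does not argue inside $\cG_M$ at all, but transfers Theorem 3.1 of \cite{BKM} through the bijection $\Theta$.

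Your direction a) $\Rightarrow$ b), however, has a genuine gap at its very first step: the claim that $\vartheta^*\star\vartheta$ is the image of a parity-invariant density matrix, hence even, so that odd monomials integrate to zero against it. Nothing in the definition of a Grassmann density gives this. $\Theta^{-1}\left(\vartheta^*\star\vartheta\right)$ is an \emph{arbitrary} positive trace-one operator on $\wedge\cH$, and such operators need not commute with $(-1)^{\widehat{\mathbbm{N}}}$. Concretely, let $\rho:=\left|\chi\right>\left<\chi\right|$ with $\chi:=\frac{1}{\sqrt{2}}\left(\Omega-\e_1\Omega\right)$; since $\rho^2=\rho$, the element $\Theta\left(\rho\right)=\Theta\left(\rho\right)^*\star\Theta\left(\rho\right)$ is a Grassmann density, and by Lemma~\ref{prodrule} and the trace formula (\ref{TRACE}) one finds $\int\cD(\overline{\Psi},\Psi)\,\Theta\left(\rho\right)\star\left(\psi_1+\overline{\psi}_1\right)=\mathrm{tr}_{\wedge\cH}\left\{\rho\left(\v_1+\e_1\right)\right\}=-1\neq 0$. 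Moreover, positive elements of $\cG_M^{(1)}$ genuinely possess odd parts: $\mu:=\left(1+\overline{\psi}_1\right)\star\left(1+\psi_1\right)=1+\psi_1+\overline{\psi}_1+\overline{\psi}_1\psi_1$ is of the form $\eta^*\star\eta$ and lies in $\cG_M^{(1)}$. For this pair, the full integral equals $\frac{1}{2}$, the even part of $\mu$ contributes $\frac{3}{2}$, and the odd part contributes $-1$; so the odd part can be strictly negative, and positivity of the even-part integral (which is all your decomposition argument establishes --- the diagonalization of $(d_{ij})$, the CAR flip $\overline{\chi}_k\star\chi_k=1-\chi_k\star\overline{\chi}_k$, and the ground-state bound $c\geq\sum_{k:\lambda_k<0}\left|\lambda_k\right|$ are all fine) does not imply positivity of the integral itself. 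The repair is short and bypasses the issue entirely: for $\mu=\eta^*\star\eta$, cyclicity (\ref{cyclic2}) gives $\int\cD(\overline{\Psi},\Psi)\,\vartheta^*\star\vartheta\star\eta^*\star\eta=\int\cD(\overline{\Psi},\Psi)\,\left(\vartheta\star\eta^*\right)^*\star\left(\vartheta\star\eta^*\right)\geq 0$ by Theorem~\ref{positivity}. Observe that this uses no input from a) whatsoever --- for an actual Grassmann density both a) and b) hold unconditionally, which is precisely why the substance of the equivalence lies in the dictionary between polynomials of low degree and $\cG_M^{(1)}$ provided by $\Theta$, i.e., in the transfer of Theorem 3.1 of \cite{BKM} that constitutes the paper's own proof.
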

\begin{proof}
 In Theorem 3.1 of \cite{BKM}, the analogue of this theorem has been shown for polynomials in creation and annihilation operators of degree lower than 
 or equal to two. Thanks to the bijection $\Theta$, we have a one-to-one mapping between the space of polynomials of degree lower than
 or equal to two and $\cG_M^{(1)}$. 
\end{proof}


\subsection{G-, P-, and Q-Condition}

We proceed with representability conditions of second order by considering $\cG_M^{(2)}$ and a star-product of $\overline{\psi}$ and $\psi$, in this 
case, for example $\phi:=\sum\limits_{k,l\in M}\alpha_{kl}\psi_k\star\psi_l\in\cG_M$ with $\alpha_{kl}\in\mathbbm{C}\ \forall\,k,l\in M$. This time, we are interested in conditions on $\Gamma_\vartheta$ and use the Grassmann integration to rewrite the matrix elements of the 2-pdm as in (\ref{2RDMGM}).
The first condition is the P-Condition.
\begin{lem}\label{Pcondi} Theorem~\ref{positivity} implies the P-Condition
\begin{align*}
 \Gamma_\vartheta \geq 0.
\end{align*}
\end{lem}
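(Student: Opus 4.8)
The plan is to reproduce the argument used for the lower bound $\gamma_\vartheta\geq 0$, but with a \emph{quadratic} test element in place of the linear one, thereby probing the second-order subspace $\cG_M^{(2)}$. Fix arbitrary coefficients $\alpha_{kl}\in\CC$, set $\phi:=\sum_{k,l\in M}\alpha_{kl}\,\psi_k\star\psi_l\in\cG_M$, and put $\eta:=\phi\star\vartheta^*$. Using Definition~\ref{invol} together with the compatibility identity $(\mu\star\eta)^*=\eta^*\star\mu^*$, one gets $\eta^*=\vartheta\star\phi^*$ with $\phi^*=\sum_{k,l}\overline{\alpha}_{kl}\,\overline{\psi}_l\star\overline{\psi}_k$. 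Applying Theorem~\ref{positivity} to $\eta^*\star\eta\geq 0$ then yields $\int\cD(\overline{\Psi},\Psi)\,\eta^*\star\eta\geq 0$ (dropping the factor $(-1)^{|M|}$, as agreed after the trace formula).

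Next I would bring $\vartheta$ and $\vartheta^*$ together using associativity (Lemma~\ref{assoz}) and the cyclicity~(\ref{cyclic2}), and then expand:
\begin{align*}
\int\cD(\overline{\Psi},\Psi)\,\eta^*\star\eta
  =\int\cD(\overline{\Psi},\Psi)\,\vartheta^*\star\vartheta\star\phi^*\star\phi
  =\sum_{k,l,m,n}\overline{\alpha}_{kl}\,\alpha_{mn}\int\cD(\overline{\Psi},\Psi)\,\vartheta^*\star\vartheta\star\overline{\psi}_l\star\overline{\psi}_k\star\psi_m\star\psi_n .
\end{align*}
Since each monomial $\overline{\psi}_l\star\overline{\psi}_k\star\psi_m\star\psi_n$ is normal-ordered, Corollary~\ref{sterneigen} lets me identify these integrals with the matrix-element integrands of~(\ref{2RDMGM}), giving
\begin{align*}
\int\cD(\overline{\Psi},\Psi)\,\vartheta^*\star\vartheta\star\phi^*\star\phi
  =\sum_{k,l,m,n}\overline{\alpha}_{kl}\,\alpha_{mn}\,\la\psi_m\otimes\psi_n,\Gamma_\vartheta(\psi_k\otimes\psi_l)\ra .
\end{align*}

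Finally I would read off that this double sum is exactly $\la\xi,\Gamma_\vartheta\xi\ra$ for the vector $\xi:=\sum_{k,l}\overline{\alpha}_{kl}\,\psi_k\otimes\psi_l\in\cH\otimes\cH$; as the $\alpha_{kl}$ are arbitrary, $\xi$ ranges over all of $\cH\otimes\cH$, so nonnegativity of the left-hand side delivers $\Gamma_\vartheta\geq 0$. The one step demanding care is precisely this last identification: the involution reverses the order of the $\overline{\psi}$'s, and the bra/ket convention in~(\ref{2RDMGM}) is itself ``crossed'' (the ket $\psi_l\otimes\psi_k$ corresponds to $\overline{\psi}_k\star\overline{\psi}_l$), so I would track the indices explicitly to confirm that the coefficient pattern $\overline{\alpha}_{kl}\alpha_{mn}$ assembles into a genuine sesquilinear form $\la\xi,\Gamma_\vartheta\xi\ra$ with $\xi_{kl}=\overline{\alpha}_{kl}$, rather than into a transposed or otherwise mismatched expression. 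Everything else is the routine bookkeeping already carried out in the first-order case.
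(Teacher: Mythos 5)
Your proof is correct and takes essentially the same route as the paper's: the same quadratic test element $\phi=\sum_{k,l\in M}\alpha_{kl}\,\psi_k\star\psi_l$ with $\eta=\phi\star\vartheta^*$, the same appeal to Theorem~\ref{positivity} together with cyclicity and associativity, and the same identification of the resulting quadratic form with $\left<F,\Gamma_\vartheta F\right>$. Your careful index tracking in the final step is sound and in fact fixes a typo in the paper's own proof, which writes $F:=\sum_{k,l\in M}\overline{\alpha}_{kl}\left(\psi_m\otimes\psi_n\right)$ where $F:=\sum_{k,l\in M}\overline{\alpha}_{kl}\left(\psi_k\otimes\psi_l\right)$ is meant.
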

\begin{proof}
The proof is similar to the one in the last subsection. Setting $\phi:=\sum\limits_{k,l\in M}\alpha_{kl}\psi_k\star\psi_l\in\cG_M$ with
$\alpha_{kl}\in\mathbbm{C}\ \forall\,k,l\in M$, 
$\eta:=\phi\star\vartheta^*$, and  
$\eta^*=\left(\phi\star\vartheta^*\right)^*=\vartheta\star\phi^*$, we arrive at
\begin{align}
\nonumber 0&\leq \int\cD(\overline{\Psi},\Psi)\,\eta^*\star\eta\,  \\
\nonumber &=\sum\limits_{k,l,m,n\in M}\overline{\alpha}_{kl}\alpha_{mn}
	    \int\cD(\overline{\Psi},\Psi)\,\vartheta^*\star\vartheta\star\overline{\psi}_l\star\overline{\psi}_k\star\psi_m\star\psi_n
	    \,  \\
	  &=\left<F,\Gamma_\vartheta F\right>, \label{posi}
\end{align}
where $F:=\sum\limits_{k,l\in M}\overline{\alpha}_{kl}\left(\psi_m\otimes\psi_n\right)\in\cH\otimes\cH$ is arbitrary.
\end{proof}

The Q-Condition is the next representability condition. In order to obtain a convenient formulation of this condition, we use an exchange operator
on $\cB\left(\cH\otimes\cH\right)$ which is defined by $\mathrm{Ex}\left(f\otimes g\right):=g\otimes f$ for $f,g\in\cH$.
\begin{lem} 
 Theorem~\ref{positivity} implies the Q-Condition
\begin{align*}
 \Gamma_\vartheta+\left(\mathbbm{1}-\mathrm{Ex}\right)\left(\mathbbm{1}\otimes\mathbbm{1}-\gamma_\vartheta\otimes
      \mathbbm{1}-\mathbbm{1}\otimes\gamma_\vartheta\right) \geq 0.
\end{align*}
\end{lem}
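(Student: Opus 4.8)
The plan is to mirror the proof of Lemma~\ref{Pcondi}, but to interchange the roles of the creation and annihilation generators so as to produce the ``hole'' picture, exactly as the upper bound $\gamma_\vartheta\leq\mathbbm{1}$ was obtained from the lower bound $\gamma_\vartheta\geq 0$. Concretely, I would fix coefficients $\alpha_{kl}\in\mathbbm{C}$, set $\phi:=\sum_{k,l\in M}\alpha_{kl}\,\psi_k\star\psi_l\in\cG_M$, and this time choose $\eta:=\phi^*\star\vartheta$, so that $\eta^*=\vartheta^*\star\phi$ and hence $\eta^*\star\eta=\vartheta^*\star\phi\star\phi^*\star\vartheta$. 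Applying Theorem~\ref{positivity} gives $0\leq\int\cD(\overline{\Psi},\Psi)\,\vartheta^*\star(\phi\star\phi^*)\star\vartheta$, and the whole content of the lemma is encoded in the single normal-ordering computation of $\phi\star\phi^*$.

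The key step is therefore to evaluate $\phi\star\phi^*=\sum_{k,l,m,n}\alpha_{kl}\overline{\alpha}_{mn}\,\psi_k\star\psi_l\star\overline{\psi}_n\star\overline{\psi}_m$, which is \emph{anti}-normal-ordered. Using the CAR on $\cG_M$ from Lemma~\ref{CARonG} repeatedly to commute the $\psi$'s to the right of the $\overline{\psi}$'s, I would obtain three kinds of contributions: a scalar part $\delta_{km}\delta_{ln}-\delta_{kn}\delta_{lm}$, a one-body part consisting of four terms of the form $\pm\delta\cdot\overline{\psi}\star\psi$, and the fully normal-ordered two-body part $\overline{\psi}_n\star\overline{\psi}_m\star\psi_k\star\psi_l$. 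This is precisely the place where the exchange operator $\mathrm{Ex}$ enters: the antisymmetry $\delta_{km}\delta_{ln}-\delta_{kn}\delta_{lm}$ of the scalar part is the kernel of $\mathbbm{1}\otimes\mathbbm{1}-\mathrm{Ex}$, and the four one-body terms pair up into the kernel of $-(\mathbbm{1}-\mathrm{Ex})(\gamma_\vartheta\otimes\mathbbm{1}+\mathbbm{1}\otimes\gamma_\vartheta)$.

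It then remains to insert this expansion into the integral and to recognize each block. For the scalar part I would use the normalization $\int\cD(\overline{\Psi},\Psi)\,\vartheta^*\star\vartheta=1$; for each one-body term I would rotate $\vartheta$ past the two generators with the cyclicity property~(\ref{cyclic2}) and read off a matrix element of $\gamma_\vartheta$ via~(\ref{1RDMGM}), exactly as in the proofs of $\gamma_\vartheta\geq0$ and $\gamma_\vartheta\leq\mathbbm{1}$; and for the two-body part I would likewise use~(\ref{cyclic2}) together with~(\ref{2RDMGM}). Introducing the test vector $F:=\sum_{k,l\in M}\overline{\alpha}_{kl}\,\psi_k\otimes\psi_l\in\cH\otimes\cH$, a direct bookkeeping of the index contractions identifies the scalar part with $\langle F,(\mathbbm{1}-\mathrm{Ex})(\mathbbm{1}\otimes\mathbbm{1})F\rangle$, the one-body part with $\langle F,(\mathbbm{1}-\mathrm{Ex})(-\gamma_\vartheta\otimes\mathbbm{1}-\mathbbm{1}\otimes\gamma_\vartheta)F\rangle$, and the two-body part with $\langle F,\Gamma_\vartheta F\rangle$. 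Summing, the positivity from Theorem~\ref{positivity} reads $0\leq\langle F,[\Gamma_\vartheta+(\mathbbm{1}-\mathrm{Ex})(\mathbbm{1}\otimes\mathbbm{1}-\gamma_\vartheta\otimes\mathbbm{1}-\mathbbm{1}\otimes\gamma_\vartheta)]F\rangle$, and since $\alpha_{kl}$ --- hence $F$ --- is arbitrary, the claimed operator inequality follows.

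The main obstacle I anticipate is purely the sign-and-index bookkeeping in the anti-normal-ordering of $\phi\star\phi^*$ and the subsequent matching to the manifestly antisymmetrized form $(\mathbbm{1}-\mathrm{Ex})(\cdots)$ of the Q-Condition: one must track the anticommutation signs through Lemma~\ref{CARonG}, confirm that the four one-body terms assemble correctly into the two summands $\gamma_\vartheta\otimes\mathbbm{1}$ and $\mathbbm{1}\otimes\gamma_\vartheta$ (each antisymmetrized by $\mathbbm{1}-\mathrm{Ex}$), and check that the cyclicity step~(\ref{cyclic2}) indeed turns $\vartheta^*\star(\cdots)\star\vartheta$ into the defining integrals~(\ref{1RDMGM}) and~(\ref{2RDMGM}). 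The positivity itself is immediate once this identification is made.
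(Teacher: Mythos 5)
Your proposal is correct and is essentially the paper's own proof: the paper likewise applies Theorem~\ref{positivity} to a quadratic test element (there $\phi=\sum_{k,l}\overline{\alpha}_{kl}\,\overline{\psi}_k\star\overline{\psi}_l$ and $\eta=\phi\star\vartheta^*$), reduces via cyclicity to the anti-normal-ordered quartic $\psi\star\psi\star\overline{\psi}\star\overline{\psi}$, normal-orders it with the CAR (its Equation~(\ref{aftercar})), and identifies the scalar, one-body, and two-body blocks with $\langle G,(\mathbbm{1}-\mathrm{Ex})G\rangle$, $-\langle G,(\mathbbm{1}-\mathrm{Ex})(\gamma_\vartheta\otimes\mathbbm{1}+\mathbbm{1}\otimes\gamma_\vartheta)G\rangle$, and $\langle G,\Gamma_\vartheta G\rangle$, exactly as you outline. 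The only cosmetic difference is whether $\vartheta$ or $\vartheta^*$ sits to the right of the test element before cyclicity is invoked (your variant, like the paper's proof of $\gamma_\vartheta\leq\mathbbm{1}$, implicitly uses that $\vartheta$ can be taken self-adjoint so that $\vartheta\star\vartheta^*=\vartheta^*\star\vartheta$), which changes nothing of substance.
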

\begin{proof}
With 
$\phi:=\sum\limits_{k,l\in M}\overline{\alpha}_{kl}\overline{\psi}_k\star\overline{\psi}_l\in\cG_M$, $\alpha_{kl}\in\mathbbm{C}\ \forall\,k,l\in M$, and 
$\eta=\phi\star\vartheta^*$, we have
\begin{align*}
\nonumber 0&\leq \int\cD(\overline{\Psi},\Psi)\,\eta^*\star\eta\,  \\
 &= \sum\limits_{k,l,m,n\in M}\overline{\alpha}_{kl}\alpha_{mn}\int\cD(\overline{\Psi},\Psi)\,\vartheta^*\star\vartheta
		\star\psi_n\star\psi_m\star\overline{\psi}_k\star\overline{\psi}_l.  
\end{align*}
Aiming for an expression in terms of $\Gamma$ and $\gamma$, we establish normal ordering using the CAR:
\begin{align}
\nonumber \psi_n\star\psi_m\star\overline{\psi}_k\star\overline{\psi}_l&=\delta_{mk}\delta_{nl}
      -\delta_{nk}\delta_{ml}+\delta_{nk}\overline{\psi}_{l}\star\psi_m-\delta_{mk}\overline{\psi}_{l}\star\psi_n
      +\delta_{nl}\overline{\psi}_{k}\star\psi_l\\
    &\phantom{=}-\delta_{ml}\overline{\psi}_{k}\star\psi_n-\overline{\psi}_{k}
			  \star\overline{\psi}_{l}\star\psi_n\star\psi_m. \label{aftercar}
\end{align}
As in the proof of Lemma~\ref{Pcondi}, we write an arbitrary  $G\in\cH\otimes\cH$ as
$G:=\sum\limits_{k,l\in M}\alpha_{kl}\left(\psi_k\otimes\psi_l\right)$ for some $\alpha_{kl}\in\mathbbm{C}$. Hence, 
$\sum\limits_{k,l,m,n\in M}\overline{\alpha}_{kl}\alpha_{mn}\delta_{km}\delta_{ln}=\left<G,\mathbbm{1}\,G\right>$ and
$\sum\limits_{k,l,m,n\in M}\overline{\alpha}_{kl}\alpha_{mn}\delta_{kn}\delta_{lm}=\left<G,\mathrm{Ex}\,G\right>$.
With (\ref{1RDMGM}) and (\ref{2RDMGM}), we find
\begin{align*}
\nonumber 0\leq\left<G,\left(\Gamma_\vartheta+\left(\mathbbm{1}-\mathrm{Ex}\right)\left(\mathbbm{1}\otimes\mathbbm{1}-\gamma_\vartheta\otimes
      \mathbbm{1}-\mathbbm{1}\otimes\gamma_\vartheta\right)\right)G\right>
\end{align*}
by evaluating the Grassmann integral $\int\cD(\overline{\Psi},\Psi)\left(\,\cdot\,\right)  $
on the r.h.s.\ of (\ref{aftercar}).
\end{proof}
The last second order representability condition which can be derived by the described method is the (optimal) G-Condition. Deriving this condition by Grassmann integration
requires a choice of $\eta$, that is not as obvious as before. In the following, $\TRh{\,\cdot\,}$ denotes the trace on $\cH$ and
$\TRhh{\,\cdot\,}$ the trace on $\cH\otimes\cH$.
\begin{lem} 
 Theorem~\ref{positivity} implies the G-Condition: 
\begin{align}
   \forall\, A\in\cB\left(\cH\right):\ \TRhh{\left(A^*\otimes A\right)\left(\Gamma_\vartheta+\mathrm{Ex}
    \left(\gamma_\vartheta\otimes \mathbbm{1}\right)\right)}\geq\left|\TRh{A\gamma_\vartheta}\right|^2. \label{GBed}
\end{align}
\end{lem}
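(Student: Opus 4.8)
The plan is to apply Theorem~\ref{positivity} to an $\eta$ built from the second quantization of $A$, but with its expectation value subtracted; this subtraction is exactly what produces the sharp term $\left|\TRh{A\gamma_\vartheta}\right|^2$ on the right-hand side of (\ref{GBed}) and is the reason the choice of $\eta$ is less obvious than in the P- and Q-cases. Concretely, writing $A_{kl}:=\langle\psi_k,A\psi_l\rangle$, I would set
\[
 \phi:=\sum_{k,l\in M}A_{kl}\,\overline{\psi}_k\star\psi_l\in\cG_M,
\]
which is $\Theta$ applied to the one-particle operator $\sum_{k,l}A_{kl}\e_k\v_l$. Let $c:=\int\cD(\overline{\Psi},\Psi)\,\vartheta^*\star\vartheta\star\phi\in\mathbb{C}$; using (\ref{1RDMGM}) one checks directly that $c=\TRh{A\gamma_\vartheta}$. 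I would then take
\[
 \eta:=\left(\phi-c\right)\star\vartheta^*,\qquad \eta^*=\vartheta\star\left(\phi^*-\overline{c}\right),
\]
so that $\eta^*\star\eta\geq 0$ in the sense of Theorem~\ref{positivity}.

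Next I would invoke Theorem~\ref{positivity} together with the cyclicity (\ref{cyclic2}) to move $\vartheta$ past the remaining factors, giving
\[
 0\leq\int\cD(\overline{\Psi},\Psi)\,\eta^*\star\eta
   =\int\cD(\overline{\Psi},\Psi)\,\vartheta^*\star\vartheta\star\left(\phi^*-\overline{c}\right)\star\left(\phi-c\right).
\]
Expanding the last factor as $\phi^*\star\phi-c\,\phi^*-\overline{c}\,\phi+|c|^2$ and integrating term by term, the normalization $\int\cD(\overline{\Psi},\Psi)\,\vartheta^*\star\vartheta=1$ together with $\int\cD(\overline{\Psi},\Psi)\,\vartheta^*\star\vartheta\star\phi=c$ and its adjoint $\int\cD(\overline{\Psi},\Psi)\,\vartheta^*\star\vartheta\star\phi^*=\overline{c}$ collapses the three correction terms to $-|c|^2=-\left|\TRh{A\gamma_\vartheta}\right|^2$. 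Everything then reduces to evaluating $\int\cD(\overline{\Psi},\Psi)\,\vartheta^*\star\vartheta\star\phi^*\star\phi$ and identifying it with $\TRhh{(A^*\otimes A)(\Gamma_\vartheta+\mathrm{Ex}(\gamma_\vartheta\otimes\mathbbm{1}))}$.

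For that final identity I would expand $\phi^*\star\phi=\sum_{k,l,m,n}\overline{A}_{lk}A_{mn}\,\overline{\psi}_k\star\psi_l\star\overline{\psi}_m\star\psi_n$ and normal-order the inner pair via the CAR of Lemma~\ref{CARonG}, $\psi_l\star\overline{\psi}_m=\delta_{lm}-\overline{\psi}_m\star\psi_l$. The contraction term produces $\sum_{k,n}(A^*A)_{kn}\,\overline{\psi}_k\star\psi_n$, which after integration against $\vartheta^*\star\vartheta$ and use of (\ref{1RDMGM}) equals $\TRh{A^*A\gamma_\vartheta}$, and a short index computation shows this is exactly $\TRhh{(A^*\otimes A)\,\mathrm{Ex}(\gamma_\vartheta\otimes\mathbbm{1})}$. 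The remaining normal-ordered quartic term, evaluated with (\ref{2RDMGM}), yields $-\sum_{k,l,m,n}\overline{A}_{lk}A_{mn}\langle\psi_l\otimes\psi_n,\Gamma_\vartheta(\psi_m\otimes\psi_k)\rangle$. I expect the main obstacle to be matching this to $\TRhh{(A^*\otimes A)\Gamma_\vartheta}$: the indices of $\overline{A}_{lk}A_{mn}$ do not line up with those of $(A^*\otimes A)$, and one must invoke the antisymmetry of the fermionic $2$-pdm, $\Gamma_\vartheta\,\mathrm{Ex}=-\Gamma_\vartheta$ (immediate from (\ref{2RDMGM}) since $\overline{\psi}_k\star\overline{\psi}_l=-\overline{\psi}_l\star\overline{\psi}_k$), to flip the ket indices; the accompanying sign cancels the leading minus and leaves precisely $\TRhh{(A^*\otimes A)\Gamma_\vartheta}$. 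Adding the two pieces completes the identity and, combined with the first two paragraphs, yields $0\leq\TRhh{(A^*\otimes A)(\Gamma_\vartheta+\mathrm{Ex}(\gamma_\vartheta\otimes\mathbbm{1}))}-\left|\TRh{A\gamma_\vartheta}\right|^2$, which is the G-Condition.
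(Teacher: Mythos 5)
Your proof is correct and takes essentially the same route as the paper: the same choice of $\eta=\left(\phi-c\right)\star\vartheta^{*}$ with $\phi=\Theta\left(\sum_{k,l}A_{kl}\e_k\v_l\right)$ and $c=\TRh{A\gamma_\vartheta}$ (the paper multiplies by $\vartheta$ instead of $\vartheta^{*}$, immaterial given cyclicity), the same cancellation of the cross terms down to $-\left|\TRh{A\gamma_\vartheta}\right|^{2}$, and the same CAR normal-ordering of $\phi^{*}\star\phi$ splitting the remainder into the $\Gamma_\vartheta$ and $\mathrm{Ex}\left(\gamma_\vartheta\otimes\mathbbm{1}\right)$ contributions. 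The index-matching via the antisymmetry $\Gamma_\vartheta\,\mathrm{Ex}=-\Gamma_\vartheta$, which you spell out, is exactly what is implicit in the paper's step (\ref{P2}).
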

\begin{proof}
 This time, we choose $\eta:=\bigg(\sum\limits_{k,l\in M}\alpha_{kl}\overline{\psi}_k\star\psi_l-c\bigg)\star\vartheta$
with $c:=\sum\limits_{k,l\in M}\alpha_{kl}\int\cD(\overline{\Psi},\Psi)\,\vartheta^*\star\vartheta \star
	\overline{\psi}_k\star\psi_l$ and $\alpha_{kl}\in\mathbbm{C}\ \forall\,k,l\in M$. Before
we apply Theorem~\ref{positivity}, we emphasize that by the CAR
\begin{align}
\nonumber&\bigg(\sum\limits_{k,l\in M}\alpha_{kl}\overline{\psi}_k\psi_l-c\bigg)^*
	\star\bigg(\sum\limits_{k,l\in M}\alpha_{kl}\overline{\psi}_k\psi_l-c\bigg)\\
\nonumber &\quad\qquad\qquad =\overline{c}c-c\sum\limits_{k,l\in M}\overline{\alpha}_{kl}\overline{\psi}_l\star\psi_k-\overline{c}
	\sum\limits_{m,n\in M}\alpha_{mn}\overline{\psi}_m\star\psi_n\\
  &\qquad\qquad\qquad\quad-\sum\limits_{k,l\in M}\overline{\alpha}_{kl}\alpha_{mn}\overline{\psi}_l\star\overline{\psi}_m
	\star\psi_k\star\psi_n+\sum\limits_{k,l,n\in M}\overline{\alpha}_{kl}\alpha_{kn}\overline{\psi}_l\star\psi_n. \label{ASLG}
\end{align}
We consider the last two lines separately and integrate. The integration of the line before the last line in (\ref{ASLG}) yields
\begin{align}
\nonumber &\int\cD(\overline{\Psi},\Psi)\,\vartheta^*\star\vartheta\star\bigg(\overline{c}c-c\sum\limits_{k,l\in M}
	\overline{\alpha}_{kl}\overline{\psi}_l\star\psi_k-\overline{c}\sum\limits_{m,n\in M}\alpha_{mn}\overline{\psi}_m\star\psi_n\bigg)
	  \\
&\qquad\qquad\quad =\overline{c}c-c\overline{c}-\overline{c}c=-\overline{c}c, \label{P1a}
\end{align}
which follows from the definition of $c$. It is important to notice that $c$ does not depend on $\psi$ or $\overline{\psi}$ and, therefore, is
a constant with respect to the Grassmann integration. In detail, we have for $c$:
\begin{align}
c=\sum\limits_{k,l\in M}\alpha_{kl}\int\cD(\overline{\Psi},\Psi)\,
	  \vartheta^*\star\vartheta\star\overline{\psi}_k\star\psi_l=\TRh{A\gamma_\vartheta}, \label{P1b}
\end{align}
if we set $\left<\psi_k,A\psi_l\right>:=\alpha_{kl}$ for every $k,l\in M$ and $A\in\cB\left(\cH\right)$. The evaluation of the Grassmann integral of the last line in (\ref{ASLG})
provides
\begin{align}
\nonumber& -\sum\limits_{k,l\in M}\overline{\alpha}_{kl}\alpha_{mn}\int\cD(\overline{\Psi},\Psi)\,
      \vartheta^*\star\vartheta\star\overline{\psi}_l\star\overline{\psi}_m\star\psi_k\star\psi_n\,
	  \\
\nonumber &\qquad +\sum\limits_{k,l,n\in M}\overline{\alpha}_{kl}\alpha_{kn}\int\cD(\overline{\Psi},\Psi)\,
      \vartheta^*\star\vartheta\star\overline{\psi}_l\star\psi_n\,  \\
&\qquad\qquad\qquad=\TRhh{\left(A^*\otimes A\right)\left(\Gamma_\vartheta+\mathrm{Ex}
    \left(\gamma_\vartheta\otimes \mathbbm{1}\right)\right)}. \label{P2}
\end{align}
Summing up, calculation (\ref{P1a}) together with (\ref{P1b}) and (\ref{P2}) gives
\begin{align*}
 \TRhh{\left(A^*\otimes A\right)\left(\Gamma_\vartheta+\mathrm{Ex}
    \left(\gamma_\vartheta\otimes \mathbbm{1}\right)\right)}-\left|\TRh{A\gamma_\vartheta}\right|^2\geq 0,
\end{align*}
according to Theorem~\ref{positivity}.
\end{proof}
We summarize our results using $\cG_M^{(2)}$:
\begin{thm}
  Let $\vartheta\star\vartheta^*$ be a Grassmann density, $\gamma_\vartheta$ its 1-pdm, and $\Gamma_\vartheta$ its 2-pdm. Then the following statements are equivalent:
\begin{itemize}
 \item[a)] $\left(\gamma_\vartheta,\Gamma_\vartheta\right)$ fulfills $0\leq \gamma_\vartheta\leq \mathbbm{1}$ and the G-, P-, and Q-Conditions.
 \item[b)] $\forall \mu\in\cG_M^{(2)}$: $\int\cD(\overline{\Psi},\Psi)\,\vartheta^*\star\vartheta \star\mu\geq 0$.
\end{itemize}
\end{thm}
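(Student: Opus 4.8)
The plan is to prove the two implications separately, exactly paralleling the first-order equivalence on $\cG_M^{(1)}$ proved above.

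For (b)$\Rightarrow$(a) I would only specialize the test element. Lemma~\ref{Pcondi}, together with the analogous lemmas for the Q- and G-Conditions and the two lemmas bounding $0\le\gamma_\vartheta\le\mathbbm{1}$, each rewrite the manifestly non-negative integral $\int\cD(\overline{\Psi},\Psi)\,\eta^{*}\star\eta$ --- via cyclicity (\ref{cyclic2}) and associativity (Lemma~\ref{assoz}) --- in the form $\int\cD(\overline{\Psi},\Psi)\,\vartheta^{*}\star\vartheta\star\mu$ for a specific positive semi-definite $\mu\in\cG_M^{(2)}$. Here $\mu=\phi^{*}\star\phi$ with $\phi$ running through the degree-one monomials $\psi_k,\overline{\psi}_k$ and the degree-two monomials $\psi_k\star\psi_l$, $\overline{\psi}_k\star\overline{\psi}_l$, $\overline{\psi}_k\star\psi_l$ (the last shifted by the constant $c=\TRh{A\gamma_\vartheta}$ in the G-case). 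Feeding each such $\mu$ into hypothesis (b) returns the corresponding condition verbatim.

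For (a)$\Rightarrow$(b) I would argue structurally. By (\ref{1RDMGM})--(\ref{2RDMGM}) and the normalization $\int\cD(\overline{\Psi},\Psi)\,\vartheta^{*}\star\vartheta=1$, the linear functional $L(\mu):=\int\cD(\overline{\Psi},\Psi)\,\vartheta^{*}\star\vartheta\star\mu$ is fixed on the spanning monomials $\overline{\Psi}_I\Psi_J$, $|I|,|J|\le 2$, of $\cG_M^{(2)}$ by the entries of $\gamma_\vartheta$, $\Gamma_\vartheta$ and the unit. Every positive semi-definite $\mu\in\cG_M^{(2)}$ is $\mu=\phi^{*}\star\phi$ with $\phi$ a combination of $1$, the $\psi_k$, the $\overline{\psi}_k$ and the three types of degree-two star-monomials, so that requiring $L(\mu)\ge 0$ for all such $\mu$ is exactly positive semi-definiteness of the Gram matrix $G_{\phi,\phi'}:=L(\phi^{*}\star\phi')$ over this generating set. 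The heart of the matter is to deduce $G\ge 0$ from (a): grading the generators by their $\widehat{\mathbbm{N}}$-charge (barred minus unbarred) and using the vanishing mechanism $I\cup K\ne J\cup L$ seen around (\ref{laseq}), the matrix $G$ decouples into charge blocks, and these match the conditions one-to-one --- charge $\pm1$ gives $0\le\gamma_\vartheta\le\mathbbm{1}$, charge $-2$ gives $\Gamma_\vartheta\ge 0$ (P), charge $+2$ gives the Q-Condition, and the charge-$0$ block spanned by $1$ and the $\overline{\psi}_k\star\psi_l$ gives, after forming the Schur complement in the corner $L(1)=1$, the G-Condition.

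I expect the charge-$0$ sector to be the main obstacle. There the degree-zero monomial $1$ couples to the number-conserving degree-two monomials, and extracting the stated G-Condition --- in particular the inhomogeneous term $|\TRh{A\gamma_\vartheta}|^{2}$ --- is precisely the passage to a Schur complement; the attendant sign and contraction bookkeeping from Lemma~\ref{basicstar} (needed to confirm that the cross-sector blocks really drop out) is the delicate part. A clean way to avoid this bookkeeping is to transport everything through the bijection $\Theta$ and the trace formula (\ref{TRACE}): then $L(\Theta(A))=\mathrm{tr}_{\wedge\cH}\{\rho A\}$ with $\rho=\Theta^{-1}(\vartheta^{*}\star\vartheta)$, hypothesis (b) becomes $\langle P^{*}P\rangle_\rho\ge 0$ for all polynomials $P$ of degree $\le 2$ in the creation and annihilation operators, and the equivalence of this with (a) is the classical representability result \cite{BKM, AJC, RME, CJP}, to which I would ultimately appeal.
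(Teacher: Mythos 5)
Your proposal is correct and, in the form you ultimately commit to, coincides with the paper's own proof: the paper likewise settles the theorem in two lines by invoking the bijectivity of $\Theta$ (under which positive semi-definite elements of $\cG_M^{(2)}$ correspond exactly to operators $P^*P$ with $P$ a polynomial of degree at most two in the creation and annihilation operators) together with Theorem 3.1 of \cite{BKM}. The hands-on Gram-matrix/charge-decomposition route you sketch first, including the Schur-complement subtlety in the charge-zero sector, is precisely the bookkeeping the paper avoids by this reduction, so it can be dropped.
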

\begin{proof}
 Again, we use Theorem 3.1 of \cite{BKM} and the bijection property of $\Theta$, which ensures that the space of 
 polynomials of degree lower or equal than four in creation and annihilation operators is mapped one-to-on to $\cG_M^{(2)}$.
\end{proof}


\subsection{$\mathrm{T}_1$- and Generalized $\mathrm{T}_2$-Condition}

The last sections imply that further conditions on $\gamma_\vartheta$ and $\Gamma_\vartheta$ can be found by taking into account
monomials of higher order of the form $\overline{\psi}_{i_1}\star\cdots\star\overline{\psi}_{i_n}\star\psi_{j_1}\star\cdots\star\psi_{j_n}$ for $n>2$.
Here we face the problem that monomials with $n>2$ have to ``decompose'' into monomials with $n\leq 2$. Due to this, only
some choices of higher order monomials are suitable to derive further representability conditions. One of such monomials is 
given by 
\begin{align*}
 \tau_1:=\sum\limits_{i,j,k\in M}T_{ijk}\psi_i\star\psi_j\star\psi_k\in\cG_M,
\end{align*}
Where $T_{ijk}\in\mathbbm{C}$ is, due to $\left\{\psi_i,\psi_j\right\}_{\star}=0$, totally antisymmetric, i.e., 
$T_{ijk}=-T_{jik}=T_{jki}$. The $\mathrm{T}_1$-Condition is the following.
\begin{thm}
Let $T_q\in\mathcal{B}\left(\cH\right)$ be trace class, and set $F_{T_{q}}:=\sum\limits_{k,n\in M} 
  \overline{T}_{kq n}\left(\varphi_k\otimes\varphi_n\right)\in\cH\otimes\cH$, $T_{kq n}:=\left[T_q\right]_{kn}$.
 Then Theorem~\ref{positivity} implies the $\mathrm{T}_1$-Condition:
\begin{align*}
\sum\limits_{q\in M}\left(2\TRh{\left|T_q\right|^2}-6\TRh{\left|T_q\right|^2\gamma_\vartheta}
   +3\left<F_{T_q},\Gamma_\vartheta F_{T_q}\right>\right)\geq 0.
\end{align*}
\end{thm}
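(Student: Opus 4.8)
The plan is to realize the elementary positivity $\langle P_3^*P_3+P_3P_3^*\rangle_\rho\geq 0$ in its Grassmann form. Writing $P_3:=\sum_{i,j,k}T_{ijk}\v_i\v_j\v_k\in\cB(\wedge\cH)$, Lemma~\ref{prodrule} together with Corollary~\ref{sterneigen} gives $\Theta(P_3)=\tau_1$ and $\Theta(P_3^*)=\tau_1^*$, so that $\tau_1$ is exactly the Grassmann image of a cubic annihilation polynomial. I would apply Theorem~\ref{positivity} twice: once with $\eta:=\tau_1\star\vartheta^*$ and once with $\eta:=\tau_1^*\star\vartheta^*$. Using $(\mu\star\nu)^*=\nu^*\star\mu^*$ and the cyclicity of the Grassmann integral (\ref{cyclic2}), the two applications yield, after skipping the factor $(-1)^{|M|}$,
\[
0\le\int\cD(\overline{\Psi},\Psi)\,\vartheta^*\star\vartheta\star\tau_1^*\star\tau_1\quad\text{and}\quad 0\le\int\cD(\overline{\Psi},\Psi)\,\vartheta^*\star\vartheta\star\tau_1\star\tau_1^*,
\]
whose sum is $\int\cD(\overline{\Psi},\Psi)\,\vartheta^*\star\vartheta\star(\tau_1^*\star\tau_1+\tau_1\star\tau_1^*)\geq 0$. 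This is precisely the Grassmann transcription of $\langle P_3^*P_3+P_3P_3^*\rangle_\rho\geq 0$.

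Next I would reduce the bracket $\tau_1^*\star\tau_1+\tau_1\star\tau_1^*$ to normal order using only the star-CAR of Lemma~\ref{CARonG} and associativity (Lemma~\ref{assoz}). The product $\tau_1^*\star\tau_1=\sum\overline{T}_{ijk}T_{lmn}\,\overline{\psi}_k\star\overline{\psi}_j\star\overline{\psi}_i\star\psi_l\star\psi_m\star\psi_n$ is already normal-ordered and of star-degree six. In $\tau_1\star\tau_1^*$ I would commute the three $\overline{\psi}$'s to the left across the three $\psi$'s; the term carrying no contraction reproduces, up to the sign $(-1)^9$ from nine transpositions and after relabeling via the antisymmetry of $T$, exactly $-\tau_1^*\star\tau_1$. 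Hence the two degree-six contributions cancel. This cancellation is the entire reason for adding $P_3^*P_3$ and $P_3P_3^*$: the irreducible three-body part drops out, and what survives are the contraction terms of star-degree four, two and zero, produced by one, two and three applications of $\{\overline{\psi}_a,\psi_b\}_\star=\delta_{ab}$.

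Finally, I would integrate the surviving terms against $\vartheta^*\star\vartheta$. The degree-four normal-ordered monomials $\overline{\psi}\star\overline{\psi}\star\psi\star\psi$ are converted into matrix elements of $\Gamma_\vartheta$ by (\ref{2RDMGM}), the degree-two monomials $\overline{\psi}\star\psi$ into matrix elements of $\gamma_\vartheta$ by (\ref{1RDMGM}), and the constant term is simply multiplied by $\int\cD(\overline{\Psi},\Psi)\,\vartheta^*\star\vartheta=1$. It then remains to collect the index sums, and here the total antisymmetry $T_{ijk}=-T_{jik}=T_{jki}$ is indispensable: it forces the three single contractions, the double contractions and the triple contractions each to collapse into a single symmetric structure in which the \emph{middle} index survives as the matrix/vector label $q$. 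This produces the trace class operators $T_q$ with $[T_q]_{kn}=T_{kqn}$, the vectors $F_{T_q}$, and the numerical coefficients $2$, $-6$, $3$, giving exactly $\sum_q\big(2\,\TRh{|T_q|^2}-6\,\TRh{|T_q|^2\gamma_\vartheta}+3\langle F_{T_q},\Gamma_\vartheta F_{T_q}\rangle\big)\geq 0$.

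The main obstacle is precisely this last bookkeeping: tracking the signs generated by anticommuting across six generators and verifying that, after invoking the antisymmetry of $T$, all single/double/triple contraction terms recombine with the correct multiplicities into the coefficients $2$, $-6$, $3$. A cleaner but equivalent route, which I would use as a cross-check, is to normal-order $P_3^*P_3+P_3P_3^*$ directly in the CAR algebra $\cB(\wedge\cH)$, express the outcome through $\gamma_\rho$ and $\Gamma_\rho$ (the three-body terms canceling as above), and then transport the identity to $\cG_M$ via the $*$-isomorphism properties of $\Theta$ (Lemma~\ref{prodrule}) and the trace formula (\ref{TRACE}); the combinatorics is identical but is carried out with honest operators.
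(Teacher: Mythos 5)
Your proposal is correct and takes essentially the same route as the paper: the paper likewise forms the anticommutator $\left\{\tau_1^*,\tau_1\right\}_\star$, invokes its positivity together with Theorem~\ref{positivity}, normal-orders it via the star-CAR so that the degree-six (three-body) parts cancel, and then uses the antisymmetry of $T_{ijk}$ to convert the surviving degree-four, degree-two and constant terms into $\left<F_{T_q},\Gamma_\vartheta F_{T_q}\right>$, $\TRh{\left|T_q\right|^2\gamma_\vartheta}$ and $\TRh{\left|T_q\right|^2}$ with the coefficients $3$, $-6$, $2$. Your two separate applications of the positivity theorem (with $\eta=\tau_1\star\vartheta^*$ and $\eta=\tau_1^*\star\vartheta^*$) combined by cyclicity are just a more explicit rendering of the paper's single application to $\vartheta\star\left\{\tau_1^*,\tau_1\right\}_\star\star\vartheta^*$, and your observation that the uncontracted term cancels by the sign $(-1)^9$ is exactly the mechanism at work there.
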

\begin{proof}
 We begin by considering the anticommutator 
$\left\{\tau_1^*,\tau_1\right\}_{\star}\in\cG_M$ and observe that, by construction,
$\left\{\tau_1^*,\tau_1\right\}_{\star}\geq 0$. Furthermore, we can use the CAR to establish normal order in 
$\left\{\tau_1^*,\tau_1\right\}_{\star}$. The ${i,j}-$th matrix element of $A\in\cB\left(\cH\right)$ is denoted by $\left[A\right]_{ij}:=\left<\psi_i , A\psi_j\right>$. Using the antisymmetry of $T_{ijk}$ we arrive at
\begin{align*}
\nonumber \left\{\tau_1^*,\tau_1\right\}_{\star}&=9\sum\limits_{l\in M}\sum\limits_{i,j,m,n\in M} \overline{T}_{ljm}T_{lin}
      \overline{\psi}_m\star\overline{\psi}_j\star\psi_i\star\psi_n\\
\nonumber   &\ \ \ +18\sum\limits_{m,l\in M}\sum\limits_{k,n\in M} 
	    \overline{T}_{kml}T_{lmn}\overline{\psi}_k\star\psi_n+6\sum\limits_{l,m,n\in M}\overline{T}_{lmn}M_{lmn}\\
\nonumber  &=9\sum\limits_{q\in M}\sum\limits_{i,j,m,n\in M} \left[T_q^*\right]_{mj}\left[T_q\right]_{in}
      \overline{\psi}_m\star\overline{\psi}_j\star\psi_i\star\psi_n\\
    &\ \ \ -18\sum\limits_{q\in M}\sum\limits_{k,n\in M} 
	   \left[T_q^*T_q\right]_{kn}\overline{\psi}_k\star\psi_n+6\sum\limits_{q\in M}\TRh{\left|T_q\right|^2}.
\end{align*}
Since $\left\{\tau_1^*,\tau_1\right\}_{\star}\geq 0$, we have by Theorem~\ref{positivity}
\begin{align*}
\int\cD(\overline{\Psi},\Psi)\,\vartheta\star\left\{\tau_1^*,\tau_1\right\}_{\star}\star\vartheta^*\,
        \geq 0.
\end{align*}
Together with (\ref{2RDMGM}), the latter calculations and this positivity of the integral bring us to
\begin{align*}
\nonumber 0 \leq & \,3\sum\limits_{q\in M}\sum\limits_{i,j,m,n\in M} \left[T_q^*\right]_{mj}\left[T_q\right]_{in}
      \left<\psi_i\otimes\psi_n,\Gamma_\vartheta\left(\psi_j\otimes\psi_m\right)\right>\\
    &-6\sum\limits_{q\in M}\sum\limits_{k,n\in M} 
	   \left[\left|T_q\right|^2\right]_{kn}\left<\psi_n,\gamma_\vartheta\psi_k\right>+2\sum\limits_{q\in M}
	  \TRh{\left|T_q\right|^2}.
\end{align*}
Together with $\left<\psi_i,T_q\psi_j\right>=:\left[T_q\right]_{ij}$ and $F_{T_q}:=\sum\limits_{k,n\in M} 
  \overline{T}_{kq n}\left(\varphi_k\otimes\varphi_n\right)$, this yields the assertion.
\end{proof}
The generalized $\mathrm{T}_2$-Condition can be derived equivalently by another choice of $\tau$. Using the anticommutator with a 
combination of two $\overline{\psi}$'s and one $\psi$ (or vice versa), we have three different possibilities: $\tau_{2a}:=
\sum\limits_{i,j,k\in M}T_{ijk}^{(a)}\overline{\psi}_i\star\overline{\psi}_j\star\psi_k$, $\tau_{2b}:=
\sum\limits_{i,j,k\in M}T_{ijk}^{(b)}\overline{\psi}_i\star\psi_j\star\overline{\psi}_k$, and 
$\tau_{2c}:=\sum\limits_{i,j,k\in M}T_{ijk}^{(c)}\psi_i\star\overline{\psi}_j\star\overline{\psi}_k$. A generalization
of these possibilities is given by 
\begin{align*}
  \tau_2:=\sum\limits_{i,j,k\in M}T_{ijk}\overline{\psi}_i\star\overline{\psi}_j
      \star\psi_k+\sum\limits_{i\in M}a_i\overline{\psi}_i,
\end{align*}
where $T_{ijk},\ a_i\in\mathbbm{C}\ \forall\ i,j,k\in M$. This is a generalization, since we obtain $\tau_2=\tau_{2a}$ for 
$\alpha_i\equiv 0$ and $T_{ijk}\equiv T_{ijk}^{(a)}$, $\tau_2=\tau_{2b}$ for $a_i=\sum\limits_{j\in M}T_{ijj}^{(b)}$ and 
$T_{ijk}=-T_{ikj}^{(b)}$, and, finally,
$\tau_2=\tau_{2c}$ for $a_{i}=\sum\limits_{j\in M}\left(T_{jji}^{(c)}-T_{jij}^{(c)}\right)$ and
$T_{ijk}=T_{kij}^{(c)}$. The identities can be seen by using the CAR. Unfortunately, if one uses the generalization $\tau_2$, 
symmetry properties on $T_{ijk}$ like, for example, $T_{ijk}^{(a)}=-T_{jik}^{(a)}$ in $\tau_{2a}$ or $T_{ijk}^{(c)}=-T_{ikj}^{(c)}$ in
$\tau_{2c}$ vanish.
The generalized $\mathrm{T}_2$-Condition rises from $\left\{\tau_2^*,\tau_2\right\}_\star\geq 0$. In order to state the condition in a compact form,
we need some new notation.
\begin{defn}\label{DefT2}
 For $T_k\in\cB\left(\cH\right)$, $\left[T_k\right]_{ij}:=T_{ijk}\ \forall i,j,k\in M$, and $\underline{a}\in\mathbbm{C}^{|M|}$, 
  we define $G_{M_k}\in\cH\otimes\cH$ and
 the matrices $Q_1\in\cB\left(\cH\otimes\cH\right)$ and $Q_2,\ Q_3\in\cB\left(\cH\right)$ by
\begin{align*}
&G_{M_k}:=\sum\limits_{i,j\in M}\left[T_k\right]_{ij}\left(\psi_i\otimes\psi_j\right),\\
& \left<\psi_k\otimes\psi_m,Q_1\left(\psi_n\otimes\psi_j\right)\right>:=\left[\overline{T}_k^{(A)}T_n^{(A)}\right]_{jm},\\
&\left<\psi_i,Q_2\psi_j\right>:=\TRh{\left(T_i^{(A)}\right)^*T_j},\\
&\left<\psi_i,Q_3\psi_j\right>:=\sum\limits_{q\in M}\left(\left[\left(T_i^{(A)}\right)^*\right]_{jq}a_q+\left[T_j^{(A)}\right]_{iq}\overline{a}_q\right),
\end{align*}
where $\left[T_k^{(A)}\right]_{ij}:=\frac{1}{2}\left(\left[T_k\right]_{ij}-\left[T_k\right]_{ji}\right)=-\left[T_k^{(A)}\right]_{ji}$ is
the antisymmetric part of $T_k$.
\end{defn}
\begin{thm}
 Let $T_k$, $\underline{a}$, $G_{T_q}$ and $Q_1,\ Q_2,\ Q_3$ be as in Definition \ref{DefT2}.
 Then Theorem~\ref{positivity} implies the generalized $\mathrm{T}_2$-Condition:
\begin{align*}
 \sum\limits_{q\in M}\left<G_{T_q},\Gamma_\vartheta G_{T_q}\right>+4\TRhh{Q_1\Gamma_\vartheta}+2\TRh{\left(Q_2+Q_3\right)\gamma_\vartheta}+\left|\underline{a}\right|^2\geq 0.
\end{align*}
\end{thm}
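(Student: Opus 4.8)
The plan is to mirror the proof of the $\mathrm{T}_1$-Condition with $\tau_1$ replaced by $\tau_2$. First I would note that the anticommutator $\left\{\tau_2^*,\tau_2\right\}_\star=\tau_2^*\star\tau_2+\tau_2\star\tau_2^*$, after being sandwiched between $\vartheta$ and $\vartheta^*$ and integrated, produces a non-negative number. Using the involution rule $(\mu\star\eta)^*=\eta^*\star\mu^*$ with $\eta:=\tau_2\star\vartheta^*$ one has $\vartheta\star\tau_2^*\star\tau_2\star\vartheta^*=\eta^*\star\eta$, and with $\zeta:=\tau_2^*\star\vartheta^*$ one has $\vartheta\star\tau_2\star\tau_2^*\star\vartheta^*=\zeta^*\star\zeta$; both are positive semi-definite in the sense of the paper. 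Theorem~\ref{positivity} then gives $\int\cD(\overline{\Psi},\Psi)\,\vartheta\star\left\{\tau_2^*,\tau_2\right\}_\star\star\vartheta^*\geq 0$.

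Next I would compute $\tau_2^*$ from Definition~\ref{invol}, obtaining $\tau_2^*=\sum_{i,j,k\in M}\overline{T}_{ijk}\,\overline{\psi}_k\star\psi_j\star\psi_i+\sum_{i\in M}\overline{a}_i\,\psi_i$, and then reduce $\left\{\tau_2^*,\tau_2\right\}_\star$ to normal order using the CAR on $\cG_M$ from Lemma~\ref{CARonG}. This is the heart of the computation. As in the $\mathrm{T}_1$ case, the two sextic contributions arising from the cubic parts of $\tau_2^*\star\tau_2$ and $\tau_2\star\tau_2^*$ cancel in the anticommutator, leaving quartic normal-ordered terms of type $\overline{\psi}\star\overline{\psi}\star\psi\star\psi$, quadratic terms of type $\overline{\psi}\star\psi$, and a constant. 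The antisymmetric part $T^{(A)}_k$ enters because $\overline{\psi}_i\star\overline{\psi}_j$ and $\psi_i\star\psi_j$ are antisymmetric under $i\leftrightarrow j$, so only the antisymmetric part of $T_{ijk}$ in its first two indices survives in those terms; the full $T_k$ is retained in the uncontracted quartic term that produces $G_{T_q}$. The linear piece $\sum_i a_i\overline{\psi}_i$ of $\tau_2$ contributes, through its cross products with the cubic part, precisely the quadratic combination encoded by $Q_3$, and through $\sum_i|a_i|^2$ the scalar $|\underline{a}|^2$.

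Then I would integrate the normal-ordered expression term by term, using cyclicity~(\ref{cyclic2}) to bring each monomial into the form $\vartheta^*\star\vartheta\star(\,\cdot\,)$ and translating via (\ref{1RDMGM}), (\ref{2RDMGM}) and the normalization $\int\cD(\overline{\Psi},\Psi)\,\vartheta^*\star\vartheta=1$. The two inequivalent quartic monomials $\overline{\psi}\star\overline{\psi}\star\psi\star\psi$, distinguished by how their indices are contracted against $T$ and $\overline{T}$, become the two $\Gamma_\vartheta$-contributions $\sum_{q\in M}\left<G_{T_q},\Gamma_\vartheta G_{T_q}\right>$ and $4\,\TRhh{Q_1\Gamma_\vartheta}$; the surviving quadratic monomials become $2\,\TRh{(Q_2+Q_3)\gamma_\vartheta}$; and the scalar part becomes $|\underline{a}|^2$. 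Combining these with the positivity established in the first step gives exactly the stated inequality.

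The main obstacle will be the sign- and combinatorial bookkeeping in the normal ordering of $\left\{\tau_2^*,\tau_2\right\}_\star$: tracking the anticommutator contractions and their signs, verifying the cancellation of the sextic terms, and pinning down the precise numerical prefactors (the analogue of the coefficients $9,18,6$ occurring for $\mathrm{T}_1$, here the factors $1,4,2$ multiplying the $G_{T_q}$-, $Q_1$-, and $(Q_2+Q_3)$-terms respectively). A secondary subtlety is the correct placement of the full $T_k$ versus its antisymmetric part $T^{(A)}_k$, since the generalization $\tau_2$ deliberately breaks the antisymmetry that was automatic in $\tau_1$; one must check that each of $Q_1,Q_2,Q_3$ is reproduced with $T^{(A)}_k$ exactly as in Definition~\ref{DefT2}.
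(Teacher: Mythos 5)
Your proposal is correct and follows essentially the same route as the paper: positivity of $\int\cD(\overline{\Psi},\Psi)\,\vartheta\star\{\tau_2^*,\tau_2\}_\star\star\vartheta^*$ via Theorem~\ref{positivity} (the paper splits the anticommutator as $\{\mu^*,\mu\}_\star+2\mathfrak{Re}\{\mu^*,\eta\}_\star+\{\eta^*,\eta\}_\star$ with $\mu$ the cubic part and $\eta$ the linear part, exactly your cross-term bookkeeping), followed by CAR normal ordering with cancellation of the third-order terms, emergence of $T_k^{(A)}$ from the antisymmetry of $\overline{\psi}_i\star\overline{\psi}_j$, and term-by-term translation into $\Gamma_\vartheta$- and $\gamma_\vartheta$-expressions via (\ref{1RDMGM}) and (\ref{2RDMGM}). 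Your positivity step is even slightly more explicit than the paper's, since you exhibit both summands of the anticommutator as elements of the form $\eta^*\star\eta$ before invoking the theorem.
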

\begin{proof}
The first task is to bring $\left\{\tau_2^*,\tau_2\right\}$ into normal order. Afterwards, the two terms of third order cancel. Only
terms of order less than or equal to two remain. To calculate the anticommutator we use 
$\left\{\left(\mu+\eta\right)^*,\mu+\eta\right\}_\star=\left\{\mu^*,\mu\right\}_\star+2\mathfrak{Re}\left\{\mu^*,\eta\right\}_\star
+\left\{\eta^*,\eta\right\}_\star$ for $\mu:=\sum\limits_{i,j,k\in M}T_{ijk}\overline{\psi}_i\star\overline{\psi}_j
      \star\psi_k$ and $\eta:=\sum\limits_{i\in M}a_i\overline{\psi}_i$. By the CAR, we have
\begin{align*}
 \left\{\eta^*,\eta\right\}_\star=\sum\limits_{i\in M}\left|a_i\right|^2, \qquad 
    \left\{\mu^*,\eta\right\}_\star=\sum\limits_{k,n\in M}\sum\limits_{q\in M}\left(\overline{T}_{qnk}
      -\overline{T}_{nqk}\right)a_q\overline{\psi}_k\star\psi_n,
\end{align*}
and
\begin{align*}
\nonumber \left\{\mu^*,\mu\right\}_\star=&\sum\limits_{j,k,m,n\in M}\sum\limits_{q\in M}\Big(\left(\overline{T}_{jqk}-\overline{T}_{qjk}\right)
		    \left(T_{qmn}-T_{mqn}\right)+\overline{T}_{njq}T_{kmq}\Big)\overline{\psi}_k\star\overline{\psi}_m\star\psi_j\star\psi_n\\
  &+\sum\limits_{k,n\in M}\sum\limits_{p,q\in M}\left(\overline{T}_{pqk}-\overline{T}_{qpk}\right)T_{pqn}\overline{\psi}_k\star\psi_n.
\end{align*}
We set $T_{ijq}=:\left[T_q\right]_{ij}$ where $T_q\in\cB\left(\cH\right)\ \forall q\in M$ and observe that $\left[\overline{T}_q\right]_{ij}=\left[T_q^*\right]_{ji}$,
$\overline{T}_{qnk}-\overline{T}_{nqk}=2\left[\left(T_k^{(A)}\right)^*\right]_{nq}$, and $T_{qmn}-T_{mqn}=2\left[T_n^{(A)}\right]_{qm}$, 
where $T^{(A)}$ is the antisymmetric part of $T$ (see Definition \ref{DefT2}). This allows us to rewrite the anticommutators:
\begin{align*}
 2\mathfrak{Re}\left\{\mu^*,\eta\right\}_\star=2\sum\limits_{k,n\in M}\sum\limits_{q\in M}\left(\left[\left(T_k^{(A)}\right)^*\right]_{nq}a_q
	+\left[T_n^{(A)}\right]_{qk}\overline{a}_q\right)\overline{\psi}_k\star\psi_n
\end{align*}
and
\begin{align}
 \nonumber \left\{\mu^*,\mu\right\}_\star=&\sum\limits_{j,k,m,n\in M}\sum\limits_{q\in M}\bigg(4\left[\left(T_k^{(A)}\right)^*\right]_{qj}
		    \left[T_n^{(A)}\right]_{qm}+\left[T^*_q\right]_{jn}\left[T_q\right]_{km}\bigg)\overline{\psi}_k\star\overline{\psi}_m\star\psi_j\star\psi_n\\
  &+2\sum\limits_{k,n\in M}\sum\limits_{p,q\in M}\left[\left(T_k^{(A)}\right)^*\right]_{qp}\left[T_n\right]_{pq}\overline{\psi}_k\star\psi_n.\label{Muemue}
\end{align}
In the next step we use $\left<\psi_i, A \psi_j\right>=\left[A\right]_{ij}$ for $A\in\cB\left(\cH\right)$ and the Grassmann representation of $\gamma$ and $\Gamma$ from
(\ref{1RDMGM}) and (\ref{2RDMGM}). Definition \ref{DefT2} then leads to
\begin{align*}
\nonumber &\sum\limits_{j,k,m,n\in M}\sum\limits_{q\in M}\left[T^*_q\right]_{jn}\left[T_q\right]_{km}\int\cD(\overline{\Psi},\Psi)\,
      \vartheta^*\star\vartheta\star\overline{\psi}_k\star\overline{\psi}_m\star\psi_j\star\psi_n = \sum\limits_{q\in M}\left<G_{T_q},\Gamma_\vartheta G_{T_q}\right>
\end{align*}
for $G_{T_q}:=\sum\limits_{i,j\in M}\left[T_q\right]_{ij}\left(\psi_i\otimes\psi_j\right)\in\cH\otimes\cH$. Moreover, we have with
$\left<\psi_m\otimes\psi_k,Q_1\left(\psi_j\otimes\psi_n\right)\right>:=\left[\overline{T}_k^{(A)}T_n^{(A)}\right]_{jm}$
\begin{align*}
\nonumber &4\sum\limits_{j,k,m,n,q\in M}\left[\overline{T}_k^{(A)}\right]_{jq}
		    \left[T_n^{(A)}\right]_{qm}\int\cD(\overline{\Psi},\Psi)\,
      \vartheta^*\star\vartheta\star\overline{\psi}_k\star\overline{\psi}_m\star\psi_j\star\psi_n = 4\TRhh{Q_1\Gamma_\vartheta}.
\end{align*}
Furthermore,
\begin{align*}
2\sum\limits_{k,n\in M}\sum\limits_{p,q\in M}\left[\left(T_k^{(A)}\right)^*\right]_{qp}\left[T_n\right]_{pq}\int\cD(\overline{\Psi},\Psi)\,
      \vartheta^*\star\vartheta\star\overline{\psi}_k\star\psi_n=2\TRh{Q_2\gamma_\vartheta}
\end{align*}
for $\left[Q_2\right]_{kn}:=\TRh{\left(T_k^{(A)}\right)^*T_n}$. Finally, with 
$\left[Q_3\right]_{ij}:=\sum\limits_{q\in M}\left(\left[\left(T_i^{(A)}\right)^*\right]_{jq}a_q+\left[T_j^{(A)}\right]_{qi}\overline{a}_q\right)$
we have 
\begin{align*}
 2\mathfrak{Re}\int\cD(\overline{\Psi},\Psi)\,
      \vartheta^*\star\vartheta\star\left\{\mu^*,\eta\right\}_\star  =2\TRh{Q_3\gamma_\vartheta}.
\end{align*}
$\sum\limits_{i}\left|a_i\right|^2=:\left|\underline{a}\right|^2$ is the squared unitary norm
of $\underline{a} $. The proof is complete by inserting the latter calculations into the inequality of Theorem~\ref{positivity}.
\end{proof}
As already mentioned, we have antisymmetry properties for certain choices of $\underline{a}$ and $T_{ijk}$. In $\tau_{2a}$, which we
gain by setting $\underline{a}\equiv 0$ and $T_{ijk}=T_{ijk}^{(a)}=\left[T_k^{(a)}\right]_{ij}$, we have $\left[T_k\right]_{ij}
=-\left[T_k\right]_{ji}$ or $T_k\equiv T_k^{(A)}$. In this case, we have a simplification of the generalized $\mathrm{T}_2$-Condition:
\begin{cor}
 For $\underline{a}\equiv 0$, $T_k\equiv T_k^{(A)}$, $\big[\tilde{T}_k\big]_{ij}:=\left[T_j\right]_{ik}$, we have the $\mathrm{T}_{2a}$-Condition given by
\begin{align*}
 \sum\limits_{q\in M} \left(\big< G_{\tilde{T}_{q}},\Gamma_\vartheta G_{\tilde{T}_{q}}\big>
  +4\TRhh{\left(\tilde{T}_q^*\otimes \tilde{T}_q\right)\Gamma_\vartheta}+2\TRh{\big|\tilde{T}_q\big|^2\gamma_\vartheta}\right)\geq 0.
\end{align*}
\end{cor}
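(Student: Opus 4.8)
The plan is to obtain this Corollary as a specialization of the generalized $\mathrm{T}_2$-Condition proved in the preceding Theorem, followed by a relabeling of the surviving terms in terms of $\tilde{T}_q$. First I would substitute $\underline{a}\equiv 0$. By Definition~\ref{DefT2} every matrix element of $Q_3$ is a linear combination of the $a_q$ and $\overline{a}_q$, so $Q_3=0$, and likewise $\left|\underline{a}\right|^2=0$. Setting $T_k\equiv T_k^{(A)}$ makes the antisymmetrization $T_k\mapsto T_k^{(A)}$ act as the identity, so that $Q_1$ and $Q_2$ reduce to $\left<\psi_k\otimes\psi_m,Q_1(\psi_n\otimes\psi_j)\right>=\left[\overline{T}_k T_n\right]_{jm}$ and $\left[Q_2\right]_{kn}=\TRh{T_k^* T_n}$. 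Hence only the three terms $\sum_q\left<G_{T_q},\Gamma_\vartheta G_{T_q}\right>$, $4\TRhh{Q_1\Gamma_\vartheta}$ and $2\TRh{Q_2\gamma_\vartheta}$ remain, and it suffices to show that their sum equals the right-hand side of the Corollary.

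Next I would set up the dictionary between $T$ and $\tilde{T}$. Writing $\left[T_k\right]_{ij}=T_{ijk}$ and $\big[\tilde{T}_k\big]_{ij}:=\left[T_j\right]_{ik}=T_{ikj}$, one records the elementary consequences $\big[\tilde{T}_q\big]_{ij}=T_{iqj}$ and $\big[\tilde{T}_q^*\big]_{ij}=\overline{T_{jqi}}$, together with the antisymmetry $T_{ijk}=-T_{jik}$ of $T\equiv T^{(A)}$ in its first two slots. Using the Grassmann representations (\ref{1RDMGM}) and (\ref{2RDMGM}) I would then expand each of the three surviving terms into explicit index sums over the matrix elements $\Gamma_{cd,ab}:=\left<\psi_c\otimes\psi_d,\Gamma_\vartheta(\psi_a\otimes\psi_b)\right>$ and $\left[\gamma_\vartheta\right]_{ca}:=\left<\psi_c,\gamma_\vartheta\psi_a\right>$, and do the same for the three terms on the Corollary side. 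The term $2\TRh{Q_2\gamma_\vartheta}$ matches $2\sum_q\TRh{\big|\tilde{T}_q\big|^2\gamma_\vartheta}$ by a plain relabeling of the contracted indices, since both reduce to $2\sum_q\sum_{a,b,c}\overline{T_{bqa}}\,T_{bqc}\,\left[\gamma_\vartheta\right]_{ca}$.

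The crux is the remaining pair of terms. I expect that $\sum_q\left<G_{T_q},\Gamma_\vartheta G_{T_q}\right>$ and $4\TRhh{Q_1\Gamma_\vartheta}$ do \emph{not} individually equal $\sum_q\big<G_{\tilde{T}_q},\Gamma_\vartheta G_{\tilde{T}_q}\big>$ and $4\sum_q\TRhh{\big(\tilde{T}_q^*\otimes\tilde{T}_q\big)\Gamma_\vartheta}$; rather, the two sums agree only after contraction against $\Gamma_\vartheta$. Concretely, both sides reduce to $\sum_q\sum_{a,b,c,d}C^{(q)}_{cd,ab}\,\Gamma_{cd,ab}$ for coefficient tensors $C^{(q)}$ that differ only by where the summation label $q$ sits, namely the third slot of $T_q$ versus the middle slot of $\tilde{T}_q$. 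The key step is therefore to show that these two coefficient tensors have the same projection onto the subspace antisymmetric in the bra pair $(c,d)$ and in the ket pair $(a,b)$, which is all that $\Gamma_\vartheta$ detects, since the CAR on $\cG_M$ (Lemma~\ref{CARonG}) force $\Gamma_{cd,ab}=-\Gamma_{dc,ab}=-\Gamma_{cd,ba}$. This reorganization uses \emph{only} the antisymmetry of $T$ in its first two indices together with this fermionic antisymmetry of $\Gamma_\vartheta$, which let one relabel the summation indices so as to migrate $q$ from the third to the middle slot.

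I expect the sign- and index-bookkeeping in this antisymmetrization to be the main obstacle. A conceptual shortcut that guarantees the identity must hold, and that I would use to organize the bookkeeping, is the observation that both the (specialized) Theorem expression and the Corollary expression are merely two ways of writing the single number $\int\cD(\overline{\Psi},\Psi)\,\vartheta^*\star\vartheta\star\left\{\tau_{2a}^*,\tau_{2a}\right\}_\star$ in terms of the matrix elements of $(\gamma_\vartheta,\Gamma_\vartheta)$; since this is one fixed linear functional of $(\gamma_\vartheta,\Gamma_\vartheta)$, the two algebraic forms coincide. Concretely I would re-normal-order $\left\{\tau_{2a}^*,\tau_{2a}\right\}_\star$ with the coefficients grouped from the start according to the $\tilde{T}_q$ bookkeeping, read off the $G_{\tilde{T}_q}$-type and $\big(\tilde{T}_q^*\otimes\tilde{T}_q\big)$-type pieces, and thereby exhibit the Corollary's right-hand side directly. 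Non-negativity is then inherited from Theorem~\ref{positivity}, exactly as in the proof of the Theorem.
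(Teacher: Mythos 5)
Your reduction steps match the paper's proof, which consists precisely of this specialization: with $\underline{a}\equiv 0$ only $\left\{\mu^*,\mu\right\}_\star$ survives, and one inserts $T_k\equiv T_k^{(A)}$ into (\ref{Muemue}) and integrates against $\vartheta^*\star\vartheta$. Your matching of the $Q_2$-term with $2\sum_q\TRh{\big|\tilde{T}_q\big|^2\gamma_\vartheta}$ is correct. The gap is in your declared crux, and it is not merely an unfinished computation: the identity you defer to ``sign- and index-bookkeeping'' is false, and your expectation about which pieces match is inverted. In fact $4\TRhh{Q_1\Gamma_\vartheta}=4\sum_q\TRhh{\big(\tilde{T}_q^*\otimes\tilde{T}_q\big)\Gamma_\vartheta}$ holds \emph{term by term} (a short computation using $T_{ijk}=-T_{jik}$ and the antisymmetry of $\Gamma_\vartheta$ within each pair), whereas the two G-terms do not agree at all — neither individually nor after adding the $Q_1$-pieces, so the sums differ as well. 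Writing $\Gamma_{ij,kl}:=\left<\psi_i\otimes\psi_j,\Gamma_\vartheta\left(\psi_k\otimes\psi_l\right)\right>$, one has $\sum_q\left<G_{T_q},\Gamma_\vartheta G_{T_q}\right>=\sum_{a,b,c,d,q}\overline{T}_{abq}T_{cdq}\,\Gamma_{ab,cd}$, with $q$ contracted in the \emph{third} slot (this is what (\ref{Muemue}) produces), while $\sum_q\big<G_{\tilde{T}_q},\Gamma_\vartheta G_{\tilde{T}_q}\big>=\sum_{a,b,c,d,q}\overline{T}_{aqb}T_{cqd}\,\Gamma_{ab,cd}$, with $q$ contracted in the \emph{middle} slot. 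The only symmetries available — antisymmetry of $T$ in its first two slots and fermionic antisymmetry of $\Gamma_\vartheta$ — let you move $q$ between the first and second slots (up to sign), never into the third, so no relabeling identifies the two contractions. Concretely, take $|M|\geq 3$, $T_{ijk}:=A_{ij}\delta_{k3}$ with $A_{12}=-A_{21}=1$ (all other entries zero), and $\vartheta^*\star\vartheta=\Theta\left(\rho\right)$ for the Slater state $\rho=\left|\e_1\e_2\e_3\Omega\right>\left<\e_1\e_2\e_3\Omega\right|$, for which $\Gamma_{ij,ij}=1$ for $i\neq j\in\left\{1,2,3\right\}$ and $\left<\psi_3,\gamma_\vartheta\psi_3\right>=1$. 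Then $\sum_q\left<G_{T_q},\Gamma_\vartheta G_{T_q}\right>=4\Gamma_{12,12}=4$, but $\sum_q\big<G_{\tilde{T}_q},\Gamma_\vartheta G_{\tilde{T}_q}\big>=\Gamma_{13,13}+\Gamma_{23,23}=2$, while $4\TRhh{Q_1\Gamma_\vartheta}=-4\left(\Gamma_{13,13}+\Gamma_{23,23}\right)=-8$ and $2\TRh{Q_2\gamma_\vartheta}=4$. The specialized Theorem expression is $4-8+4=0$, consistent with Theorem~\ref{positivity}; the expression you are trying to reach evaluates to $2-8+4=-2$.

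This also shows that your fallback ``conceptual shortcut'' is circular and cannot be repaired: the assertion that the Corollary's display is ``merely another way of writing'' $\int\cD(\overline{\Psi},\Psi)\,\vartheta^*\star\vartheta\star\left\{\tau_{2a}^*,\tau_{2a}\right\}_\star$ is exactly what needs proof, and with $G_{\tilde{T}_q}$ in the first term it is false — the value $-2$ above is negative, so it cannot equal the expectation of an anticommutator $\left\{\tau^*,\tau\right\}_\star$, which is nonnegative by Theorem~\ref{positivity}. If you carry out the re-normal-ordering honestly (i.e., exactly the paper's one-line proof), the quartic contribution coming from $\tau_{2a}\star\tau_{2a}^*$ is the third-slot contraction, so the inequality you obtain has first term $\sum_q\left<G_{T_q},\Gamma_\vartheta G_{T_q}\right>$, with only the $Q_1$- and $Q_2$-pieces re-expressed through $\tilde{T}_q$; there is no $G_{\tilde{T}_q}$-type piece to ``read off''. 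In other words, the tilde on the $G$-term cannot be produced by your antisymmetrization argument, and a correct write-up along your (and the paper's) lines must keep $G_{T_q}$ in that term.
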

 \begin{proof}
 With $\underline{a}\equiv 0$ we only have to consider $\left\{\mu^*,\mu\right\}_\star$ and can use (\ref{Muemue}) with 
 $T_k\equiv T_k^{(A)}$.
\end{proof}
 We can also use an antisymmetry property in  $\tau_{2c}$ which leads to a condition $\mathrm{T}_{2c}$. Unfortunately, there is no
simplification compared to the generalized $\mathrm{T}_2$-Condition. There is, however, no antisymmetry property
in $\tau_{2b}$.

Since $\left\{\tau^*_{1},\tau_{1}\right\}_\star,\left\{\tau^*_{2},\tau_{2}\right\}_\star\in\cG_M^{(3)}$, the $\mathrm{T}_1$- and $\mathrm{T}_2$-Conditions are conditions
of third order.


\section{Quasifree Grassmann States}

The notion of Grassmann integration allows for a calculation of traces on the fermion Fock space by Grassmann integrals 
and, in turn, to reformulate representability condition in terms of Grassmann integrals. At last, we consider
quasifree states, their one-particle density matrices, and the expression of their relation in terms of Grassmann integrals.

In the following, we will abbreviate the expectation value of a Grassmann variable $\mu\in\cG_M$ with respect to a Grassmann density $\varkappa\in\cG_M$
by 
\begin{align*}
 \int\cD(\overline{\Psi},\Psi)\,\varkappa\star\mu =:\left<\,\mu\,\right>_\varkappa.
\end{align*}

\begin{defn}\label{defquasta}
 Let $N\in\mathbbm{N}$ and $\widetilde{\psi}_i$ denote either $\psi_i\in\cG_M$ or $\overline{\psi}_i\in\cG_M$, where $\left\{\overline{\psi}_i,\psi_i\right\}_{i\in M}$
is a set of generators of $\cG_M$. We call a Grassmann density $\varkappa$ quasifree if
\begin{itemize}
 \item[1)]  $\left<\widetilde{\psi}_1\star\widetilde{\psi}_2\star\dots\star\widetilde{\psi}_{2N-1}\right>_\varkappa=0$ and
 \item[2)] $\left<\widetilde{\psi}_1\star\widetilde{\psi}_2\star\dots\star\widetilde{\psi}_{2N}\right>_\varkappa
	     ={\sum\limits_{\pi}}'\left(-1\right)^{\pi}\left< \widetilde{\psi}_{\pi\left(1\right)}\star\widetilde{\psi}_{\pi\left(2\right)}\right>_\varkappa\times\cdots\times\left<\widetilde{\psi}_{\pi\left(2N-1\right)}\star\widetilde{\psi}_{\pi\left(2N\right)}\right>_\varkappa$,
\end{itemize}
where ${\sum\limits_{\pi}}'$ denotes the sum over all permutations $\pi$ obeying $\pi(1)<\pi(3)<\dots<\pi(2N-1)$
and $\pi(2j-1)<\pi(2j)$ for all $1\leq j\leq N$.
The maximal number of (distinct) $\psi_i$ or $\overline{\psi}_i$ in 1) and 2) is less or equal $|M|$. 
\end{defn}
\begin{rem}
 We have to restrict $N$ in the latter definition or extend $M$ sufficiently, since the expression on the l.h.s.\ 
 of condition 1) and 2), respectively, vanishes, if the number of $\psi_i$ or $\overline{\psi}_i$ is larger 
than $|M|$.
\end{rem}

As it is already known from \cite{BLS}, there is a unique characterization of quasifree states by the 1-pdm. In detail, 
assuming particle number-conservation and defining
\begin{align*}
 \widetilde{\gamma}:=\begin{pmatrix} \gamma & 0 \\ 0 & \mathbbm{1}-\overline{\gamma} \end{pmatrix}\in\cB\left(\cH\oplus\cH\right),
\end{align*}
which is the generalized 1-pdm corresponding to $\gamma$, one has the following theo\-rem. 
\begin{thm}\label{sedra}
 Let $\widetilde{\gamma}=\begin{pmatrix} \gamma & 0 \\ 0 & \mathbbm{1}-\overline{\gamma} \end{pmatrix}$ be an operator on $\cH\oplus\cH$ with $\TRh{\gamma}<\infty$ and $0\leq \widetilde{\gamma}\leq \mathbbm{1}$. Then there
exists a unique quasifree state $\rho$ with $\mathrm{tr}_{\wedge\cH}\left\{\rho,\widehat{\mathbbm{N}}\right\}<\infty$ such that $\widetilde{\gamma}=\widetilde{\gamma}_\rho$.
\end{thm}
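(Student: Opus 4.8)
The plan is to reduce to a diagonal one-particle density matrix, build the quasifree state mode by mode on $\wedge\cH$, and transport it to a Grassmann density via $\Theta$. Particle-number conservation makes $\widetilde{\gamma}$ block-diagonal, so the constraint $0\leq\widetilde{\gamma}\leq\mathbbm{1}$ is equivalent to $0\leq\gamma\leq\mathbbm{1}$. First I would diagonalise $\gamma=\sum_{k\in M}\gamma_k\,|\varphi_k\rangle\langle\varphi_k|$ with eigenvalues $\gamma_k\in[0,1]$ and, invoking the invariance of the Grassmann integral under a unitary change of generators (the change-of-generators lemma of Section~4), pass to generators $\{\overline{\psi}_k,\psi_k\}_{k\in M}$ adapted to this eigenbasis. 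On $\wedge\cH$ I then propose the mode-factorised candidate $\rho:=\prod_{k\in M}\rho_k$ with $\rho_k:=\gamma_k\,\e_k\v_k+(1-\gamma_k)\,\v_k\e_k$. This single formula covers all regimes uniformly: for $0<\gamma_k<1$ it is the thermal state of the mode, while for $\gamma_k\in\{0,1\}$ it degenerates to the empty- respectively filled-mode projector, so that no case distinction enters the definition.

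Next I would check that $\varkappa:=\Theta(\rho)$ is a Grassmann density. Positivity and self-adjointness follow structurally: each $\rho_k\geq0$ (a convex combination of two orthogonal rank-one projectors), hence $\rho\geq0$, and writing $\rho=\rho^{1/2}\rho^{1/2}$ and applying Lemma~\ref{prodrule} together with the compatibility of $\Theta$ with the involution gives $\varkappa=\vartheta^*\star\vartheta$ with $\vartheta:=\Theta(\rho^{1/2})$. Normalisation is immediate from $\mathrm{tr}_{\wedge\cH}\rho=\prod_k\big(\gamma_k+(1-\gamma_k)\big)=1$ and the trace formula (\ref{TRACE}). Applying $\Theta$ and Lemma~\ref{prodrule} to $\rho=\prod_k\rho_k$ yields the explicit form $\varkappa=\prod_{k\in M}\big[(1-\gamma_k)+(2\gamma_k-1)\overline{\psi}_k\psi_k\big]$, the star products reducing to ordinary products for these mode-local even factors as one verifies from Lemma~\ref{basicstar}. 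Computing the matrix elements from (\ref{1RDMGM}): since $\rho$ is diagonal in the occupation-number basis it commutes with every number operator $\e_k\v_k$, so the off-diagonal elements vanish and the diagonal ones give $\gamma_k$; hence $\gamma_\vartheta=\gamma$, the lower-right block $\mathbbm{1}-\overline{\gamma}$ is reproduced automatically, i.e. $\widetilde{\gamma}_\rho=\widetilde{\gamma}$, and $\mathrm{tr}_{\wedge\cH}\{\rho,\widehat{\mathbbm{N}}\}=2\,\TRh{\gamma}<\infty$ as required.

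The crux is verifying that $\rho$ is quasifree in the sense of Definition~\ref{defquasta}. I would prove the Wick identity for the correlation functions $\langle\widetilde{\psi}_{i_1}\star\cdots\star\widetilde{\psi}_{i_n}\rangle_\varkappa=\mathrm{tr}_{\wedge\cH}\{\rho\,A_{i_1}\cdots A_{i_n}\}$, where each $A_{i_j}$ is a creation or annihilation operator, directly from the mode factorisation. Because $\rho$ is number-diagonal, such a trace is nonzero only if, in every mode, creators and annihilators occur equally often; this forces $n$ even (condition~1) and, pushing annihilators to the right with the CAR and using $\v_k^2=\e_k^2=0$ within each mode, reduces the expectation to the signed sum over pairings of the two-point functions (condition~2). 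Here I expect the main obstacle: the usual shortcut of writing $\rho\propto\mathrm{e}^{-\sum_k\epsilon_k\e_k\v_k}$ and quoting Wick's theorem for Gibbs states fails precisely at the boundary eigenvalues $\gamma_k\in\{0,1\}$, where $\epsilon_k=\pm\infty$; the mode-by-mode induction must therefore be carried out without the exponential representation, checking that the deterministic contractions $\langle\e_k\v_k\rangle_\varkappa=\gamma_k$ and $\langle\v_k\e_k\rangle_\varkappa=1-\gamma_k$ still enter the pairing sum correctly and that the anomalous two-point functions vanish by number conservation.

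Finally, uniqueness. Suppose $\varkappa'$ is any quasifree Grassmann density with $\gamma_{\vartheta'}=\gamma$. Conditions~1) and 2) of Definition~\ref{defquasta} express every correlation function of $\varkappa'$ as a fixed polynomial in its two-point functions, and number conservation forces the only surviving two-point functions to be $\langle\overline{\psi}_i\star\psi_j\rangle_{\varkappa'}$ and $\langle\psi_j\star\overline{\psi}_i\rangle_{\varkappa'}$, both determined by $\gamma$. Hence $\varkappa'$ and $\varkappa$ share all correlation functions, i.e. $\int\cD(\overline{\Psi},\Psi)\,\varkappa'\star\mu=\int\cD(\overline{\Psi},\Psi)\,\varkappa\star\mu$ for every $\mu\in\cG_M$. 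By the trace formula (\ref{TRACE}) and the bijectivity of $\Theta$ onto $\cB(\wedge\cH)$ this says $\mathrm{tr}_{\wedge\cH}\{\rho'B\}=\mathrm{tr}_{\wedge\cH}\{\rho\,B\}$ for all $B\in\cB(\wedge\cH)$; nondegeneracy of the trace pairing then gives $\rho'=\rho$, equivalently $\varkappa'=\varkappa$.
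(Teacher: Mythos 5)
Your candidate state is, up to notation, exactly the one the paper imports from \cite{BLS}: with $\lambda_k=(1+\mathrm{e}^{q_k})^{-1}$ your mode factor $\gamma_k\,\e_k\v_k+(1-\gamma_k)\,\v_k\e_k$ is precisely the normalized $k$-th factor of $G=P_0\mathrm{e}^{-H}$ in (\ref{qfs}), your uniform formula merely absorbing the projection $P_0$ that the paper introduces for the modes with $\lambda_k=0$. Note, however, that the paper does not prove Theorem~\ref{sedra} at all — it cites \cite{BLS} — and its own work (Lemma~\ref{defstaqf} and the theorem following it) is the Grassmann transcription: quasifreeness of $\varkappa$ is obtained there by Gaudin's pull-through induction applied to the regularized density $\varkappa_\mu$ (all $q_i$ finite) followed by the limit $\mu\to\infty$ to recover the degenerate modes, and uniqueness is inherited from \cite{BLS} via bijectivity of $\Theta$. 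So your proposal is a self-contained, finite-dimensional re-derivation; what it buys is the avoidance of both the citation and the limiting argument, and a uniqueness proof that does not lean on \cite{BLS}; what the paper's route buys is brevity and a treatment that transfers verbatim to the Grassmann side.

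The one genuine soft spot is your treatment of quasifreeness. The sentence ``pushing annihilators to the right with the CAR \dots reduces the expectation to the signed sum over pairings'' \emph{is} Wick's theorem, i.e.\ the assertion to be proven, and you leave the promised induction unexecuted. Moreover, your stated reason for abandoning the pull-through method — that it degenerates at $\gamma_k\in\{0,1\}$ — is not actually an obstruction: written in the multiplicative-symmetric form $(1-\gamma_k)\,\v_k\,\rho=\gamma_k\,\rho\,\v_k$ and $\gamma_k\,\e_k\,\rho=(1-\gamma_k)\,\rho\,\e_k$, the pull-through relations hold for \emph{every} $\gamma_k\in[0,1]$ with no infinite factors, and Gaudin's induction (the same one the paper runs for $\varkappa_\mu$) then goes through uniformly, producing the contractions $\langle\e_k\v_l\rangle_\rho=\delta_{kl}\gamma_k$ and $\langle\v_k\e_l\rangle_\rho=\delta_{kl}(1-\gamma_k)$ without any limit. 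Alternatively, both sides of the Wick identity are polynomials in $(\gamma_k)_{k\in M}$, so validity on the open cube $(0,1)^{|M|}$ (where the Gibbs representation is legitimate) extends to the closed cube. Either remark closes your gap in two lines. Finally, a precision on uniqueness: it is not ``number conservation'' of the competitor $\varkappa'$ that kills its anomalous two-point functions $\langle\psi_i\star\psi_j\rangle_{\varkappa'}$ — a general quasifree state need not conserve particle number — but the hypothesis $\widetilde{\gamma}_{\varkappa'}=\widetilde{\gamma}$ itself, since these expectations are entries of the generalized 1-pdm and the prescribed $\widetilde{\gamma}$ is block-diagonal. With that reading, your chain (Wick determines all correlation functions from the two-point functions, normal-ordered monomials span $\cB(\wedge\cH)$, nondegeneracy of the trace pairing) is correct in this finite-dimensional setting.
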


 For a proof see \cite{BLS}.\ \\

In the language of Grassmann integration, the reverse direction, namely that $\widetilde{\gamma}_\varkappa$, i.e., the generalized 1-pdm of a quasifree Grassmann density $\varkappa$, has to fulfill $0\leq \widetilde{\gamma}_\varkappa\leq \mathbbm{1}$, can 
be deduced by appropriate choices of $\phi\in\cG_M$ in the positivity condition 
\begin{align*}
 \left<\phi^*\star\phi\right>_\varkappa\geq 0.
\end{align*}
The aim of this section is to determine the unique quasifree Grassmann density subject to Theorem~\ref{sedra}, i.e., the element of a Grassmann algebra corresponding the state given in \cite{BLS}. To this end, we 
consider an operator $\widetilde{\gamma}\in\cB\left(\cH\oplus\cH\right)$ with $0\leq\widetilde{\gamma}\leq \mathbbm{1}$ and its eigenvalues
$\lambda_i$ and $(1-\lambda_i)$, where $0\leq \lambda_i\leq\frac{1}{2}$, $i\in M$. Furthermore, we define $P_0$ to be the projection onto the subspace of $\wedge \cH$ on which
$\sum\limits_{i:\lambda_i=0}\e_i\v_i=0$ for $i\in M$. Moreover, for any $i\in M$ the quantity $q_i$ is given by the relation $\left(1+\mathrm{e}^{q_i}\right)^{-1}=\lambda_i$.
Then, according to \cite{BLS}, any operator $\widetilde{\gamma}$ with $0\leq\widetilde{\gamma}\leq\mathbbm{1}$ is the generalized 1-pdm
of a unique quasifree state $\rho\in\cB\left(\wedge\cH\right)$ given by
\begin{align}
 \rho:=\frac{G}{\mathrm{tr}_{\wedge\cH}\left\{G\right\}}, \label{qfs}
\end{align}
where 
\begin{align*}
 G:=P_0\mathrm{e}^{-H}\quad\text{and}\quad
 H:=\sum\limits_{i:\lambda_i\neq 0}q_i \e_i\v_i.
\end{align*}
Before we turn to the definition of the Grassmann density corresponding to (\ref{qfs}), we introduce the abbreviations
$\Theta_0:=\Theta\left(P_0\right)\in\cG_M$ and ${\prod\limits_{i=1}^n}^\star \mu_i:=\mu_1\star\mu_2\star\cdots\star\mu_n$ for $\mu_i\in\cG_M$.
Furthermore, we associate the generators $\left\{\overline{\psi}_i,\psi_i\right\}_{i\in M}$ of $\cG_M$ with the ONB $\left\{\psi_i\right\}_{i\in M}$ of $\cH$,
where the $\psi_i$ are the eigenvectors of $\gamma$ corresponding to the eigenvalues $\lambda_i$ and $\left(1-\lambda_i\right)$.
\begin{lem}\label{defstaqf}
 Let $\left\{\psi_i\right\}_{i\in M}$ be an ONB of $\cH$ such that $\gamma\psi_i=\lambda_i\psi_i$ and let $\cG_M$ be generated by $\left\{\overline{\psi}_i,\psi_i\right\}_{i\in M}$.  The Grassmann density $\varkappa\in\cG_M$ corresponding to $\rho=\frac{G}{\mathrm{tr}_{\wedge\cH}\left\{G\right\}}$ is given by
 \begin{align}
  \varkappa=\frac{1}{Z}
	      \left(\Theta_0\star{\prod\limits_{i:\lambda_i\neq0}}^\star\left(\left(\mathrm{e}^{-q_i}-1\right)\overline{\psi}_i\psi_i+1\right)\right),\label{gmqfstate}
 \end{align}
where 
\begin{align*}
 Z:=\int\cD(\overline{\Psi},\Psi)\,\Theta_0\star{\prod\limits_{i:\lambda_i\neq0}}^\star\left(\left(\mathrm{e}^{-q_i}-1\right)\overline{\psi}_i\psi_i+1\right). 
\end{align*}
\end{lem}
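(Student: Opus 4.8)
The plan is to invoke the identification, established right after the definition of a Grassmann density, that the Grassmann density attached to a state $\rho$ is nothing but its image under $\Theta$, i.e.\ $\varkappa=\Theta(\rho)$. Since $\rho=G/\mathrm{tr}_{\wedge\cH}\{G\}$ and $\Theta$ is linear, proving the lemma amounts to computing $\Theta(G)$ and checking that the scalar $\mathrm{tr}_{\wedge\cH}\{G\}$ coincides with the normalization $Z$.

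First I would factor $G$. The number operators $\e_i\v_i$ for distinct modes commute, so the operator exponential splits as $\mathrm{e}^{-H}={\prod\limits_{i:\lambda_i\neq0}}\mathrm{e}^{-q_i\e_i\v_i}$ and $G=P_0\,{\prod\limits_{i:\lambda_i\neq0}}\mathrm{e}^{-q_i\e_i\v_i}$. Applying the product rule for $\Theta$ (Lemma~\ref{prodrule}) converts this ordinary operator product into a $\star$-product:
\[
 \Theta(G)=\Theta(P_0)\star{\prod\limits_{i:\lambda_i\neq0}}^\star\Theta\!\left(\mathrm{e}^{-q_i\e_i\v_i}\right)=\Theta_0\star{\prod\limits_{i:\lambda_i\neq0}}^\star\Theta\!\left(\mathrm{e}^{-q_i\e_i\v_i}\right).
\]

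Next I would evaluate each factor. As $\e_i\v_i$ is a fermionic number operator it is a projection ($(\e_i\v_i)^2=\e_i\v_i$) with spectrum $\{0,1\}$, so functional calculus gives the normal-ordered expression $\mathrm{e}^{-q_i\e_i\v_i}=\mathbbm{1}+(\mathrm{e}^{-q_i}-1)\e_i\v_i$. The defining rule (\ref{deda}) of $\Theta$ on normal-ordered monomials then yields $\Theta(\mathrm{e}^{-q_i\e_i\v_i})=1+(\mathrm{e}^{-q_i}-1)\overline{\psi}_i\psi_i$, which is precisely the $i$-th factor in (\ref{gmqfstate}). Substituting back reproduces the bracketed element of $\cG_M$ in (\ref{gmqfstate}). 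Finally, applying the trace formula (\ref{TRACE}) to $A=G$ (dropping the sign $(-1)^{|M|}$, as agreed, since $\cH$ has even dimension) gives $\mathrm{tr}_{\wedge\cH}\{G\}=\int\cD(\overline{\Psi},\Psi)\,\Theta(G)=Z$, so that $\varkappa=\Theta(\rho)=\Theta(G)/Z$ is exactly the claimed formula.

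The lemma is thus largely a direct computation; the only point demanding care — the prospective main obstacle — is the bookkeeping around $P_0$ when combining it via the $\star$-product with the mode-wise exponential factors. One must check that no spurious normal-ordering corrections appear, which holds because $H$ involves only the modes with $\lambda_i\neq0$ whereas $P_0$ constrains only the modes with $\lambda_i=0$; these index sets are disjoint, so the operator factorization of $G$ and the hypotheses of Lemma~\ref{prodrule} are met cleanly, and $\Theta_0=\Theta(P_0)$ needs no explicit evaluation. Everything beyond this is routine.
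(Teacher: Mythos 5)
Your proposal is correct and follows essentially the same route as the paper: identify $\varkappa=\Theta(\rho)$, use the commutativity of the number operators $\e_i\v_i$ and the projection property $(\e_i\v_i)^n=\e_i\v_i$ to write $\mathrm{e}^{-H}$ as the product of the factors $(\mathrm{e}^{-q_i}-1)\e_i\v_i+\mathbbm{1}$, and then convert the operator product into a $\star$-product via Lemma~\ref{prodrule}. Your explicit verification that $\mathrm{tr}_{\wedge\cH}\{G\}=Z$ via the trace formula (\ref{TRACE}) is a detail the paper leaves implicit, and your worry about $P_0$ is unnecessary since Lemma~\ref{prodrule} requires no disjointness of modes, but neither point changes the substance.
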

\begin{proof}
 We consider $\Theta\left(\rho\right)$ subject to (\ref{qfs}). First, we observe that 
$\e_i\v_i$ commutes with $\e_k\v_k$ for every $i,k$. Therefore, we have
\begin{align*}
 \mathrm{e}^{-H}=\prod\limits_{i:\lambda_i\neq 0}\left(\sum\limits_{n=1}^\infty\frac{\left(-q_i\right)^n}{n!}\e_i\v_i+1\right)
	    =\prod\limits_{i:\lambda_i\neq 0}\left(\left(\mathrm{e}^{-q_i}-1\right)\e_i\v_i+1\right),
\end{align*}
since $\left(\e_i\v_i\right)^n=\e_i\v_i$. Thus,
\begin{align*}
 \Theta\left(P_0\mathrm{e}^{-H}\right)&=\Theta_0\star\Theta\left(\prod\limits_{i:\lambda_i\neq 0}\left(\left(\mathrm{e}^{-q_i}-1\right)\e_i\v_i+1\right)\right)
	=\Theta_0\star{\prod\limits_{i:\lambda_i\neq0}}^\star\left(\left(\mathrm{e}^{-q_i}-1\right)\overline{\psi}_i\psi_i+1\right),
\end{align*}
where we have used that $\Theta\left(AB\right)=\Theta\left(A\right)\star\Theta\left(B\right)$.
\end{proof}
The Grassmann state corresponding to the Grassmann density (\ref{gmqfstate}) is given by the map 
\begin{align*}
 \cG_M\to\mathbbm{C},\quad\mu\mapsto \left<\,\mu\,\right>_\varkappa.
\end{align*}

We want to check that the Grassmann density from Lemma~\ref{defstaqf} is quasifree, i.e., fulfills conditions 1) and 2) 
from Definition~\ref{defquasta}. The uniqueness of $\varkappa$ follows from the bijection property of the map $\Theta$.
\begin{thm}
 The Grassmann state $\varkappa$ in Lemma~\ref{defstaqf} is quasifree. 
\end{thm}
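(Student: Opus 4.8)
The plan is to deduce the quasifree property of $\varkappa$ from that of the Fock-space state $\rho$ in (\ref{qfs}), transporting everything through the linear bijection $\Theta$. First I would record that $\varkappa=\Theta(\rho)$: the computation already carried out in the proof of Lemma~\ref{defstaqf} shows that the bracketed expression in (\ref{gmqfstate}) equals $\Theta(G)$ with $G=P_0\mathrm{e}^{-H}$, while applying the trace formula (\ref{TRACE}) (the factor $(-1)^{|M|}$ being dropped for even-dimensional $\cH$) to the normalization gives $Z=\int\cD(\overline{\Psi},\Psi)\,\Theta(G)=\TRfh{G}$. Hence $\varkappa=Z^{-1}\Theta(G)=\Theta\!\left(G/\TRfh{G}\right)=\Theta(\rho)$.

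Next I would match the Grassmann moments with the Fock-space moments. For each position $i$ let $b_i:=\Theta^{-1}(\widetilde{\psi}_i)\in\{\e_i,\v_i\}$ be the creation or annihilation operator that $\Theta$ sends to $\widetilde{\psi}_i$. By the product rule of Lemma~\ref{prodrule} one has $\widetilde{\psi}_1\star\cdots\star\widetilde{\psi}_k=\Theta(b_1\cdots b_k)$ and $\varkappa\star(\widetilde{\psi}_1\star\cdots\star\widetilde{\psi}_k)=\Theta(\rho\,b_1\cdots b_k)$; integrating and invoking (\ref{TRACE}) once more would then give, for every $k$ and every choice of the $\widetilde{\psi}_i$,
\[
\left< \widetilde{\psi}_1\star\cdots\star\widetilde{\psi}_k \right>_\varkappa
  = \int\cD(\overline{\Psi},\Psi)\,\Theta\!\left(\rho\,b_1\cdots b_k\right)
  = \TRfh{\rho\,b_1\cdots b_k}.
\]
In particular each Grassmann two-point function $\left<\widetilde{\psi}_i\star\widetilde{\psi}_j\right>_\varkappa$ would coincide with $\TRfh{\rho\,b_ib_j}$.

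I would then transfer Wick's theorem. The state $\rho$ is quasifree on $\wedge\cH$ -- this is exactly the property underlying the characterization of \cite{BLS} invoked in Theorem~\ref{sedra} -- and therefore $\TRfh{\rho\,b_1\cdots b_{2N-1}}=0$ together with
\[
\TRfh{\rho\,b_1\cdots b_{2N}}
  = {\sum_\pi}'(-1)^\pi\prod_{j=1}^{N}\TRfh{\rho\,b_{\pi(2j-1)}b_{\pi(2j)}},
\]
where the restricted pairing sum and the signature sign $(-1)^\pi$ are precisely those of Definition~\ref{defquasta}. Substituting the moment identity of the previous paragraph into both sides would turn these two statements into conditions 1) and 2) for $\varkappa$, which is the assertion; uniqueness of $\varkappa$ is already granted by the bijectivity of $\Theta$.

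The hard part will be making sure the combinatorial data -- the restricted sum ${\sum_\pi}'$ and the sign $(-1)^\pi$ of Definition~\ref{defquasta} -- agree term-by-term with the conventions of the Fock-space Wick theorem; since both are the standard fermionic pairing expansion with signature signs and the two-point functions have already been matched, the substitution should introduce no extra signs, so this is bookkeeping rather than a genuine difficulty. A fully self-contained alternative, avoiding the import from \cite{BLS}, would verify 1)--2) directly: apart from the factor $\Theta_0$, the density $\varkappa$ is a $\star$-product of mutually commuting single-mode factors, so it factorizes over the modes, and the factorization property would follow from the elementary single-mode moments together with the sign bookkeeping of Lemma~\ref{basicstar}; this route is more laborious but needs nothing external.
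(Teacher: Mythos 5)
Your proposal is correct, but it takes a genuinely different route from the paper. You pull everything back to Fock space: after identifying $\varkappa=\Theta(\rho)$ from Lemma~\ref{defstaqf}, the linearity of $\Theta$, and the trace formula (\ref{TRACE}), you use the product rule of Lemma~\ref{prodrule} to equate every Grassmann moment $\left<\widetilde{\psi}_1\star\cdots\star\widetilde{\psi}_k\right>_\varkappa$ with the Fock-space moment $\TRfh{\rho\,b_1\cdots b_k}$, $b_j=\Theta^{-1}\big(\widetilde{\psi}_j\big)$, and then import Wick's theorem for the explicit state (\ref{qfs}) from \cite{BLS}; since both Definition~\ref{defquasta} and the Fock-space notion use the same restricted pairing sum with signature signs, the substitution is indeed only bookkeeping, and no step fails. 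The paper instead argues entirely inside $\cG_M$: it regularizes $\varkappa$ by a density $\varkappa_\mu$ (replacing the value $q_i=\infty$ on the degenerate modes $\lambda_i=0$ by a finite parameter $\mu$), obtains condition 1) of Definition~\ref{defquasta} from the vanishing of Grassmann integrals over an odd number of generators (Remark~\ref{intlsg} and the chequerboard), and proves condition 2) by a Gaudin-style pull-through recursion, combining the CAR for the star product, the cyclicity of Theorem~\ref{cyclic1}, and the commutation identities $\overline{\psi}_i\star\varkappa_\mu=\mathrm{e}^{q_i}\,\varkappa_\mu\star\overline{\psi}_i$ and $\psi_i\star\varkappa_\mu=\mathrm{e}^{-q_i}\,\varkappa_\mu\star\psi_i$, before passing to the limit $\mu\to\infty$. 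Your argument is shorter and exploits the dictionary $\Theta$ (product rule plus trace formula) more systematically, but its combinatorial core is borrowed wholesale from \cite{BLS}; the paper's proof instead shows that the Grassmann-integral formalism can carry the induction of \cite{GAU} on its own, which fits the paper's aim of making that formalism self-supporting, and it handles the zero modes $\lambda_i=0$ explicitly rather than leaving them implicit in the Fock-space reference.
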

\begin{proof}
 We consider the state 
 \begin{align*}
 \varkappa_\mu:={\prod\limits_{i\in M}}^\star\left(r_i\overline{\psi}_i\psi_i+1\right),
\end{align*}
where $r_i:= \mathrm{e}^{-q_i\left(\mu\right)}-1$ and $q_i\left(\mu\right)\equiv \mu\in\mathbbm{R}$ for all $i$ with $\lambda_i=0$ and
$q_i\left(\mu\right)\equiv q_i$ for all $i$ with $\lambda_i\neq 0$. The quasifreeness of $\varkappa$ follows by the quasifreeness of $\varkappa_\mu$
and a limiting argument. The first claim of Definition~\ref{defquasta} is immediate for $\varkappa_\mu$, since the Grassmann integral
vanishes for an odd number of $\widetilde{\psi}$'s. This can be seen by Remark~\ref{intlsg} and the chequerboad. The validity of Equation 2) of Definition~\ref{defquasta} has already been proven in \cite{GAU}. Here we emphasize the main steps and transfer the notation of \cite{GAU}
to Grassmann Integrals. We consider the l.h.s.\
of claim 2) of Definition~\ref{defquasta},
\begin{align*}
 \left<\widetilde{\psi}_a\star\widetilde{\psi}_b\star\widetilde{\psi}_c\star\cdots\star\widetilde{\psi}_f\right>_{\varkappa_\mu}
=\int\cD(\overline{\Psi},\Psi)\,\varkappa_\mu\star\widetilde{\psi}_a\star\widetilde{\psi}_b\star\widetilde{\psi}_c\star\cdots\star\widetilde{\psi}_f,
\end{align*}
with $2N$ generators $\widetilde{\psi}_a,\cdots,\widetilde{\psi}_f$. In the first step we eliminate $\widetilde{\psi}_a$ from the
expectation value by a pull through formula. To this end we use $\left\{\widetilde{\psi}_a,\widetilde{\psi}_b\right\}_\star:=
\widetilde{\psi}_a\star\widetilde{\psi}_b+\widetilde{\psi}_b\star\widetilde{\psi}_a$, which is either $1$, $-1$ or $0$. This yields
\begin{align*}
 &\left<\widetilde{\psi}_a\star\widetilde{\psi}_b\star\widetilde{\psi}_c\star\cdots\star\widetilde{\psi}_f\right>_{\varkappa_\mu}\\
    &\qquad=\left\{\widetilde{\psi}_a,\widetilde{\psi}_b\right\}_\star\left<\widetilde{\psi}_c\star\widetilde{\psi}_d\star\cdots\star\widetilde{\psi}_f\right>_{\varkappa_\mu}-\left\{\widetilde{\psi}_a,\widetilde{\psi}_c\right\}_\star\left<\widetilde{\psi}_b\star\widetilde{\psi}_d\star\cdots\star\widetilde{\psi}_f\right>_{\varkappa_\mu}\\
    &\qquad\quad\ +\left\{\widetilde{\psi}_a,\widetilde{\psi}_d\right\}_\star\left<\widetilde{\psi}_b\star\widetilde{\psi}_c\star\cdots\star\widetilde{\psi}_f\right>_{\varkappa_\mu}+\dots\\
    &\qquad\quad\ +\left\{\widetilde{\psi}_a,\widetilde{\psi}_f\right\}_\star\left<\widetilde{\psi}_b\star\widetilde{\psi}_c\star\cdots\star\widetilde{\psi}_e\right>_{\varkappa_\mu}-\left<\widetilde{\psi}_b\star\widetilde{\psi}_c\star\cdots\star\widetilde{\psi}_f\star\widetilde{\psi}_a\right>_{\varkappa_\mu}.
\end{align*}
Afterwards, we use the cyclicity of the Grassmann integral in the last expectation value on the r.h.s.\ of the latter expression and the identities
\begin{align*}
 \overline{\psi}_i\star\varkappa_\mu=\mathrm{e}^{q_i}\,\varkappa_\mu\star\overline{\psi}_i\quad\text{and}\quad		
      \psi_i\star\varkappa_\mu=\mathrm{e}^{-q_i}\,\varkappa_\mu\star\psi_i,
\end{align*}
which follow from the fact that $\varkappa_\mu$ is a star product of single states of the form $r_i\overline{\psi}_i\psi_i+1$ and the CAR 
for the star product. Thus, the last expectation value can be written as
\begin{align*}
 \left<\widetilde{\psi}_b\star\widetilde{\psi}_c\star\cdots\star\widetilde{\psi}_f\star\widetilde{\psi}_a\right>_{\varkappa_\mu}
  =\mathrm{e}^{\pm q_a}\left<\widetilde{\psi}_a\star\widetilde{\psi}_b\star\widetilde{\psi}_c\star\cdots\star\widetilde{\psi}_f\right>_{\varkappa_\mu},
\end{align*}
and we conclude with 
\begin{align*}
 &\left<\widetilde{\psi}_a\star\widetilde{\psi}_b\star\widetilde{\psi}_c\star\cdots\star\widetilde{\psi}_f\right>_{\varkappa_\mu}\\
    &\qquad=\frac{\left\{\widetilde{\psi}_a,\widetilde{\psi}_b\right\}_\star}{1+\mathrm{e}^{\pm q_a}}\left<\widetilde{\psi}_c\star\widetilde{\psi}_d\star\cdots\star\widetilde{\psi}_f\right>_{\varkappa_\mu}
      -\frac{\left\{\widetilde{\psi}_a,\widetilde{\psi}_c\right\}_\star}{1+\mathrm{e}^{\pm q_a}}\left<\widetilde{\psi}_b\star\widetilde{\psi}_d\star\cdots\star\widetilde{\psi}_f\right>_{\varkappa_\mu}\\
	&\qquad\quad\ +\frac{\left\{\widetilde{\psi}_a,\widetilde{\psi}_d\right\}_\star}{1+\mathrm{e}^{\pm q_a}}\left<\widetilde{\psi}_b\star\widetilde{\psi}_c\star\cdots\star\widetilde{\psi}_f\right>_{\varkappa_\mu}+\dots\\
	&\qquad\quad\ +\frac{\left\{\widetilde{\psi}_a,\widetilde{\psi}_f\right\}_\star}{1+\mathrm{e}^{\pm q_a}}\left<\widetilde{\psi}_b\star\widetilde{\psi}_c\star\cdots\star\widetilde{\psi}_e\right>_{\varkappa_\mu}.
\end{align*}
We have reduced the expectation value of $2N$ generators to a sum of expectation values of $2(N-1)$ generators. As in \cite{GAU},
the assertion follows by an induction in the number of generators. Finally, the quasifreeness of $\varkappa$
follows from 
\begin{align*}
 \varkappa=\lim\limits_{\mu\to \infty}\frac{\varkappa_\mu}{\int\cD(\overline{\Psi},\Psi)\varkappa_\mu},
\end{align*}
which completes the proof.
\end{proof}

\begin{rem}
Carrying out
the $|M|$-fold star product in $\varkappa_\mu$, we find a more convenient form of $\varkappa_\mu$:
\begin{align*}
\varkappa_\mu=\sum\limits_{Q\subseteq M}\left(-1\right)^{s_Q} \prod\limits_{i\in Q}r_i\prod\limits_{i\in Q}\overline{\psi}_i\prod\limits_{i\in Q}\psi_i
      =\sum\limits_{Q\subseteq M}\left(-1\right)^{s_Q} r_Q\overline{\Psi}_Q\Psi_Q, 
\end{align*}
where $s_Q:=\frac{1}{2}|Q|(|Q|-1)$, $r_Q:=\prod\limits_{i\in Q}r_i$. The sum runs over all ordered subsets $Q\subseteq M$.
\end{rem}


\bibliography{GM_Refs}

\begin{thebibliography}{10}

\bibitem{VBA}
V.~Bach.
\newblock {{E}rror {B}ound for the {H}artree--{F}ock {E}nergy of {A}toms and
  {M}olecules}.
\newblock {\em Communications in Mathematical Physics}, 147(3):527--548, 1992.

\bibitem{BKM}
V.~Bach, H.~K. Kn{\"o}rr, and E.~Menge.
\newblock {Fermion {C}orrelation {I}nequalities {D}erived from {G}- and
  {P}-{C}onditions}.
\newblock {\em Documenta Mathematica}, 17(14):451--481, 2012.

\bibitem{BLS}
V.~Bach, E.~H. Lieb, and J.~P. Solovej.
\newblock {Generalized Hartree--Fock Theory and the Hubbard Model}.
\newblock {\em Journal of Statistical Physics}, 76(1-2):3--89, 1994.

\bibitem{CLS}
E.~Canc{\`e}s, G.~Stoltz, and M.~Lewin.
\newblock {The electronic ground state energy problem: A new reduced density
  matrix approach}.
\newblock {\em The Journal of Chemical Physics}, 125(064101), 2006.

\bibitem{AJC}
A.~J. Coleman.
\newblock {Structure of {F}ermion {D}ensity {M}atrices}.
\newblock {\em Reviews of modern Physics}, 35(3):668--687, 1963.

\bibitem{FRZ}
M.~Combescure and D.~Robert.
\newblock {\em {Coherent States and Applications in Ma\-the\-ma\-ti\-cal
  Physics}}.
\newblock {Theoretical and Mathematical Physics}. Springer-Verlag, 2012.

\bibitem{RME}
R.~M. Erdahl.
\newblock {Representability}.
\newblock {\em International Journal of Quantum Chemistry}, 13(6):697--718,
  1978.

\bibitem{FKT}
J.~Feldman, H.~Kn{\"o}rrer, and E.~Trubowitz.
\newblock {\em {Fermionic Functional Integrals and the Renormalization Group}},
  volume~16 of {\em {CRM Monograph Series}}.
\newblock American Mathematical Society, 2002.

\bibitem{CJP}
C.~Garrod and J.~K. Percus.
\newblock {Reduction of the {$N$}-{P}article {V}ariational {P}roblem}.
\newblock {\em Journal of Mathematical Physics}, 5(12):1756--1776, 1964.

\bibitem{GAU}
M.~Gaudin.
\newblock {Une d{\'e}monstration simplifi{\'e}e du th{\'e}or{\`e}me de Wick en
  m{\'e}chanique statistique}.
\newblock {\em Nuclear Physics}, 15:89--91, 1960.

\bibitem{ELT}
E.~H. Lieb and W.~Thirring.
\newblock {Bound for the {K}inetic {E}nergy of {F}ermions which {P}roves the
  {S}tability of {M}atter}.
\newblock {\em Physical Review Letters}, 35(11):687--689, 1975.
\newblock Errata 35, 1116 (1975).

\bibitem{POL}
P.-O. L{\"o}wdin.
\newblock {{Q}uantum {T}heory of {M}any-{P}article {S}ystems. {I}. {P}hysical
  {I}nterpretations by {M}eans of {D}ensity {M}atrices, {N}atural
  {S}pin-{O}rbitals, and {C}onvergence {P}roblems in the {M}ethod of
  {C}onfigurational {I}nteraction}.
\newblock {\em Physical Review}, 97(6):1474--1489, 1955.

\bibitem{DAM}
D.~A. Mazziotti.
\newblock {Variational minimization of atomic and molecular ground-state
  energies via the two-particle reduced density matrix}.
\newblock {\em Physical Review A}, 65(062511), 2002.

\bibitem{MAZ}
D.~A. Mazziotti.
\newblock {Structure of {F}ermionic {D}ensity {M}atrices: {C}omplete
  {$N$}-{R}epresentability {C}onditions}.
\newblock {\em Physical Review Letters}, 108(263002), 2012.

\bibitem{WSP}
{W. d. S.} Pedra.
\newblock {\em {Zur mathematischen Theorie der Fermifl{\"u}ssigkeiten bei
  positiven Temperaturen}}.
\newblock PhD thesis, Universit{\"a}t Leipzig, 2005.

\bibitem{MSH}
M.~Salmhofer.
\newblock {\em {Renormalization --- An Introduction}}.
\newblock Springer-Verlag, 1998.

\bibitem{LAT}
L.~A. Takhtajan.
\newblock {\em {Quantum Mechanics for Mathematicians}}, volume~95 of {\em
  {Graduate Studies in Mathematics}}.
\newblock American Mathematical Society, 2008.

\bibitem{WTH}
W.~Thirring.
\newblock {\em {Quantenmechanik gro{\ss}er Systeme}}, volume~4 of {\em
  {Lehrbuch der Mathematischen Physik}}.
\newblock Springer-Verlag, 2008.

\bibitem{ZBF}
Z.~Zhao, B.~J. Braams, M.~Fukuda, M.~L. Overton, and J.~K. Percus.
\newblock {The reduced density matrix method for electronic structure
  calculations and the role of three-index representability conditions}.
\newblock {\em Journal of Chemical Physics}, 120(2095), 2004.

\end{thebibliography}

\end{document}